\documentclass[a4paper,UKenglish,cleveref, autoref, thm-restate,authorcolumns]{article}
\usepackage[a4paper,margin=1in]{geometry}
\usepackage{amsmath,amsthm,amssymb}
\usepackage{graphicx}
\usepackage{tikz}
\usepackage{hyperref}

\newtheorem{theorem}{Theorem}
\newtheorem{corollary}[theorem]{Corollary}
\newtheorem{lemma}[theorem]{Lemma}
\newtheorem{proposition}[theorem]{Proposition}

\newtheorem{example}{Example}

\theoremstyle{definition}
\newtheorem{definition}[theorem]{Definition}

\theoremstyle{remark}

\newcommand*\circled[1]{\tikz[baseline=(char.base)]{
            \node[shape=circle,draw,inner sep=2pt] (char) {#1};}}

\newcounter{note}[section]

\title{Truthful allocation in graphs and hypergraphs}

\author{George Christodoulou\thanks{Department of Computer Science,
    University of Liverpool, UK.  Email:
    \texttt{gchristo@liverpool.ac.uk} } \and Elias Koutsoupias \thanks{Department of Computer Science, University
    of Oxford, UK. Email: \texttt{elias@cs.ox.ac.uk}} \and Annam{\'a}ria
  Kov{\'a}cs\thanks{Department of Informatics, Goethe University,
    Frankfurt M., Germany. Email: \texttt{panni@cs.uni-frankfurt.de}}} 

\date{}

\DeclareMathOperator*{\argmin}{arg\,min}

\begin{document}

\maketitle

\begin{abstract}
  We study truthful mechanisms for allocation problems in graphs, both for the
  minimization (i.e., scheduling) and maximization (i.e., auctions) setting. The
  minimization problem is a special case of the well-studied unrelated machines
  scheduling problem, in which every given task can be executed only by two
  pre-specified machines in the case of graphs or a given subset of machines in the
  case of hypergraphs. This corresponds to a multigraph whose nodes are the machines
  and its hyperedges are the tasks. This class of problems belongs to
  multidimensional mechanism design, for which there are no known general mechanisms
  other than the VCG and its generalization to affine minimizers.  We propose a new
  class of mechanisms that are truthful and have significantly better performance
  than affine minimizers in many settings. Specifically, we provide upper and lower
  bounds for truthful mechanisms for general multigraphs, as well as special classes
  of graphs such as stars, trees, planar graphs, $k$-degenerate graphs, and graphs of
  a given treewidth. We also consider the objective of minimizing or maximizing the
  $L^p$-norm of the values of the players, a generalization of the makespan
  minimization that corresponds to $p=\infty$, and extend the results to any $p>0$.
\end{abstract}

\section{Introduction}

This work belongs to the area of mechanism design, one of the most
researched branches of Game Theory and Microeconomics with numerous
applications in environments where a protocol of conduct of selfish
participants is required. The goal is to design an algorithm, called
mechanism, which is robust under selfish behavior and that produces a
social outcome with a certain guaranteed quality. The mechanism
solicits the preferences of the participants over the outcomes, in
forms of bids, and then selects one of the outcomes. The challenge
stems from the fact that the real preferences of the participants are
private, and the participants care only about maximizing their private
utilities and hence they will lie if a false report is profitable. A
{\em truthful} mechanism provides incentives such that a truthful bid
is the best action for each participant.

Despite the importance of the problem the only general positive result
for multi-dimensional domains is the celebrated Vickrey-Clarke-Groves (VCG)
mechanism~\cite{Vic61,Cla71,Gro73} and its affine
extensions, known as affine maximizers. %

In their seminal paper on algorithmic mechanism design, Nisan and Ronen~\cite{NR01}
proposed the scheduling problem on unrelated machines as a central problem to
understand the algorithmic aspects of mechanism design. The objective is to
incentivize $n$ machines to execute $m$ tasks, so that the maximum completion time of
the machines, i.e. the makespan, is minimized. Scheduling, a problem that has been
extensively studied from the classical algorithmic perspective, proved to be the
perfect ground to study the limitations that truthfulness imposes on algorithm
design.

Nisan and Ronen applied the VCG mechanism, the most successful generic machinery in
mechanism design, which truthfully implements the outcome that maximizes the social
welfare.  In the case of scheduling, the allocation of the VCG is the greedy
allocation in which each task is assigned to the machine with minimum processing
time. This mechanism is truthful, but has a poor approximation ratio of $n$ for the makespan. They
conjectured that this is the best guarantee that can be achieved by any deterministic
(polynomial-time or not) truthful mechanism and this conjecture, known as the
Nisan-Ronen conjecture, is widely perceived as the holy grail in algorithmic mechanism
design.

An interesting special case of the scheduling problem, which is well-understood, is
the single-dimensional mechanism design in which the values of each player are linear
expressions of a single parameter. The principal representative is the problem of
scheduling {\em related} machines, where the cost of each machine can be expressed
via a single parameter, its {\em speed}. This was first studied by Archer and
Tardos~\cite{AT01}, who showed that in contrast to the unrelated machines version, an
algorithm that minimizes the makespan can be truthfully implemented --- albeit in
exponential time. It was subsequently shown that truthfulness has essentially no
impact on the computational complexity of the problem. Specifically, a randomized
truthful-in-expectation\footnote{This is one of the two main definitions of
  truthfulness for randomized mechanisms, where truth-telling maximizes the expected
  utility of each player.} PTAS was given in~\cite{DDDR11} and a deterministic PTAS
was given in~\cite{CK13}; a PTAS is the best possible algorithm even for the pure
algorithmic problem (unless $P=NP$).

\subsection{Summary of Results}

In this work, we show how to combine these two main positive results
of VCG and single-dimensional mechanisms into a single mechanism,
which we call the {\em Hybrid Mechanism}. This new mechanism applies
to domains in which some players are multidimensional and some players
are single-dimensional.  A typical example is to schedule $m$ tasks,
such that task $i$ can only be executed by player 0 and player $i$. In
this case, player 0 is multidimensional and the other $m$ players are
single-dimensional. We call this the {\em star balancing}
problem. This is a multidimensional mechanism design problem for which
the VCG mechanism, as well as every other known mechanism, performs
very poorly. However, as we show in Section~\ref{sec:stars}, the Hybrid
Mechanism has approximation ratio 2, optimal among all truthful
mechanisms. We generalize the star balancing problem in three
directions: \emph{hyperstars}, \emph{graphs/multigraphs} and also to
objectives other than makespan minimization. 

\subparagraph{Hyperstars.}
In the hyperstar version, there are $k$ multidimensional
players/machines and every task can be executed by any one of these
$k$ players or by a task-specific single-dimensional
player. Specifically, there are $k$ different \emph{root players}
(players $1,2,\ldots ,k$ with bids $(r_{ij})_{k\times m}$) and each of
them are allowed to process all tasks. In addition, for each task
there is one \emph{leaf player}, which can process only this single
task (players $k+1,k+2,\ldots,k+m$ with bids $(\ell_1,\ell_2,\ldots
\ell_m)$). Note that the root players without the leaves form a
classic input for unrelated scheduling mechanisms with $k$ players and
$m$-tasks. We can now state the Hybrid Mechanism for this
case.

\begin{definition}[Hybrid Mechanism]
\label{def:hybrid}
  The Hybrid Mechanism minimizes 
$$\min_T\left\{\left(\min_{x^T}\sum_{i=1}^k\lambda_i r_i\cdot x^T_i\right)
  + g_{T}(\ell)\right\},$$ where the $\lambda_i$ can be arbitrary non-negative real
numbers and $(g_T)_{T\subseteq M}$ can be any functions that guarantee that the leaf
players are truthful\footnote{In  Section~\ref{sec:hybrid} we provide
  general definitions as well as necessary and sufficient conditions for truthfulness of the Hybrid Mechanism. As a prominent example think of  $g_T(\ell)=\max_{j\notin T} \ell_i.$ }.  The output of the mechanism is the subset of tasks $T$ that
are allocated to the multidimensional root players together with their allocation
$x^T$ (i.e. its characteristic matrix $(x^T_{ij})_{k\times m},$ with exactly one $1$ in each column $j\in T$). The remaining tasks, $M\setminus T$, are allocated to the leaf players.
\end{definition}

VCG fairs poorly, yielding %
approximation ratio $m$ in this domain, but the Hybrid Mechanism has
approximation ratio $k+1$, as we show in the next theorem.
\begin{theorem} \label{thm:hyperstars}
  For the hyperstar scheduling problem, the Hybrid Mechanism with
  $g_T(\ell)=\max_{j\notin T} \ell_i,$ and with $\lambda_i=1$, for every $i$, is $(k+1)$-approximate.
\end{theorem}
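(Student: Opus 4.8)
The plan is to bound the makespan of the Hybrid Mechanism's output by the value of the objective the mechanism minimizes, and then to bound that objective value from above by evaluating it at the task set used by an optimal (makespan-minimizing) solution. Write $\Phi(T):=\bigl(\sum_{j\in T}\min_i r_{ij}\bigr)+\max_{j\notin T}\ell_j$ for this objective under $\lambda_i=1$ and $g_T(\ell)=\max_{j\notin T}\ell_j$; here $\min_{x^T}\sum_{i}r_i\cdot x_i^T=\sum_{j\in T}\min_i r_{ij}$, since each task of $T$ must be assigned to exactly one root. Let $\mathrm{OPT}$ denote the optimal makespan.

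First I would bound $\Phi$ at the optimal task set. Fix an optimal solution and let $T^\star\subseteq M$ be the set of tasks it assigns to root players, with $i^\star(j)$ the root receiving task $j\in T^\star$. Every root's load in the optimal solution is at most $\mathrm{OPT}$, so summing over the $k$ roots gives $\sum_{j\in T^\star}r_{i^\star(j),j}\le k\cdot\mathrm{OPT}$, hence $\sum_{j\in T^\star}\min_i r_{ij}\le k\cdot\mathrm{OPT}$. Also every task $j\notin T^\star$ runs alone on its leaf player in the optimal solution, so $\ell_j\le\mathrm{OPT}$ and therefore $\max_{j\notin T^\star}\ell_j\le\mathrm{OPT}$ (the maximum over the empty set being $0$). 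Adding the two estimates yields $\Phi(T^\star)\le(k+1)\,\mathrm{OPT}$.

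Next I would bound the mechanism's makespan by $\Phi$ at the task set it actually chooses. Let $T$ be that set, so $\Phi(T)=\min_{T'}\Phi(T')\le\Phi(T^\star)$. In the mechanism's allocation each $j\in T$ goes to $\argmin_i r_{ij}$, so the total load placed on the roots equals $\sum_{j\in T}\min_i r_{ij}$, and in particular the maximum root load is at most this sum. Since the makespan of the mechanism's output is the larger of the maximum root load and $\max_{j\notin T}\ell_j$, it is at most $\bigl(\sum_{j\in T}\min_i r_{ij}\bigr)+\max_{j\notin T}\ell_j=\Phi(T)$. Chaining the inequalities gives $\text{makespan}\le\Phi(T)\le\Phi(T^\star)\le(k+1)\,\mathrm{OPT}$, which is the claim.

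I do not expect a serious obstacle beyond orienting the two inequalities correctly: the factor $k$ enters on the optimum side because we sum $k$ root loads, each at most $\mathrm{OPT}$, while on the mechanism side we deliberately lose by bounding the attained maximum root load crudely by the sum of root loads — precisely the quantity that $\Phi$, and hence the mechanism, controls. It is worth checking the degenerate cases $T^\star=M$ or $T=M$ (empty leaf term) and $T=\emptyset$, but these are immediate. Truthfulness plays no role in this particular bound; it follows from the general analysis of the Hybrid Mechanism, for which $g_T(\ell)=\max_{j\notin T}\ell_j$ is the highlighted instance.
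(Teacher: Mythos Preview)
Your proof is correct and follows essentially the same route as the paper: bound the mechanism's makespan by the objective $\Phi(T)$ at the chosen set, then bound $\Phi(T)\le\Phi(T^\star)$ at the optimal root set, and finally split $\Phi(T^\star)$ into a root part $\le k\cdot\mathrm{OPT}$ and a leaf part $\le\mathrm{OPT}$. The paper phrases the root-part bound as ``VCG is $k$-approximate on the roots'' rather than your explicit sum-of-$k$-loads argument, but the content is identical.
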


\subparagraph*{(Multi)Graphs.}
The other generalization of the star balancing problem to graphs and
multigraphs is the Unrelated Graph Balancing problem
(Section~\ref{sec:scheduling}).  This is a special case of
unrelated machines scheduling in which there is a (multi)graph whose
nodes represent the machines and whose edges represent tasks that can be
executed only by the incident nodes. For general graphs, all machines
are multiparameter, but we can still apply the Hybrid Mechanism, if we
first decompose the graph into stars and then apply the Hybrid
Mechanism to each one of them. The combined mechanism, which we call
the {\em Star-Cover Mechanism}, has surprisingly good approximation ratio
for certain classes of graphs --- ratio 4 for trees, 8 for planar
graphs, and $2k+2$ for $k$-degenerate graphs
(Corollary~\ref{cor:planar}). These results use as ingredient the
analysis of star graphs, in which the Hybrid Mechanism has
approximation ratio 2 %
(Section~\ref{sec:scheduling}).

\subparagraph*{Mechanisms for ${L^p}$-norm optimization.}
In Section~\ref{sec:lp-norm}, %
we consider the much more general objective of
minimizing or maximizing the $L^p$-norm of the values of the players, for $p>0$. The
scheduling problem is the special case of minimizing the $L^{\infty}$-norm. We show
that the Hybrid Mechanism performs very well for this much more general problem, and
in some cases it has the optimal approximation ratio among all truthful mechanisms.
This illustrates the applicability and usefulness of the Hybrid Mechanism in
applications with various domains and objectives. We emphasize that for all these
cases, even for stars, all known mechanisms such as the VCG and affine maximizers
have very poor performance.

\subparagraph{Relation to the Nisan-Ronen conjecture.}  Our results on
(multi)graphs show that this domain may provide an easier way to
attack the Nisan-Ronen conjecture. In a recent work~\cite{CKK20}, we
showed a $\Omega(\sqrt{n})$ lower bound for multistars with edge
multiplicity only 2, when the root player has submodular or supermodular
valuations. In contrast, our results in this work show that for
additive valuations, the Star-Cover Mechanism has approximation ratio
4 on the very same multigraphs. However, the Hybrid and the Star-Cover Mechanisms have high approximation for
multistars with high edge-multiplicity or for simple clique graphs. It
is natural to ask whether there are other, better mechanisms for these
cases. Recently we have proved a $\Omega(\sqrt{n})$ lower bound for
the former case, which is the first super-constant lower bound for the
Nisan-Ronen problem~\cite{CKK20b}, and we conjecture that the latter case admits
similarly a high, perhaps even linear, lower bound.

We remark that all previous lower bound proofs use inherently either
(multi)graphs \cite{ChrKouVid09,KV07,CKK20b} or, recently,
hypergraphs with hyperedges of small size~\cite{GiannakopoulosH20,
  DS20}. Our work provides new methodological tools to study these
objects, that can help to identify certain (hyper)graph structures as
good candidates for high lower bounds and to avoid those where low
upper bounds exist.  For example, the 2.755 lower bound construction
of \cite{GiannakopoulosH20} uses a hyperstar with $k=2$, for which the Hybrid
Mechanism achieves an upper bound of 3 (Thm~\ref{thm:hyperstars}).

All our lower bounds are information theoretic and hold independently of the
computational time of the mechanisms. Conversely, all upper bounds are polynomial
time algorithms when the star decomposition is given. We leave it open whether
computing an optimal star decomposition of a graph is in $P$, although it follows
from our results that it can be approximated with an additive term of $1$ in
polynomial time (actually in linear time).

\subsection{Related Work}
The Nisan-Ronen conjecture~\cite{NR01} has become one of the central
problems in Algorithmic Game Theory, and despite intensive efforts it
remains open. The original paper showed that no truthful deterministic
mechanism can achieve an approximation ratio better than $2$ for two
machines, which was later improved to $2.41$~\cite{ChrKouVid09} for
three machines, and finally to $2.618$ \cite{KV07} which was the best
known bound for over a decade.  Recent progress improved this bound to
$2.755$~\cite{GiannakopoulosH20}, to $3$ \cite{DS20} and finally to
the first non-constant lower bound of $1+\sqrt{n-1}$~\cite{CKK20b}. The
best known upper bound is $n$~\cite{NR01}.

The purely algorithmic problem of makespan minimization on unrelated machines is one
of the most important scheduling problems. The seminal paper of Lenstra, Shmoys and
Tardos~\cite{lenstra1990approximation}, gave a $2$-approximation algorithm, and also
showed that it is NP-hard to approximate within a factor of $3/2$. Closing
this gap has remained open for 30 years, and is considered one of the most important
open questions in scheduling.

In this work we consider the design of truthful mechanisms for the {\em Unrelated
  Graph Balancing} problem, a special but quite rich case of the unrelated machines
problem, which was previously studied by Verschae and Wiese~\cite{VerschaeW14}, for
which each task can only be assigned to two machines. This can be formulated as a
graph problem, where given an undirected (multi)-graph $G=(V,E)$, each vertex
corresponds to a machine, and each edge corresponds to a task. %
The goal is to allocate (direct) each edge to
one of its nodes, in a way that minimizes the maximum (weighted) in-degree.

The special case of this problem where each direction of an edge corresponds to
the same processing time $t(e)$ is known as Graph Balancing, and was
introduced by Ebenlendr, Krc\'al, and Sgall~\cite{EKS14} who showed an
$1.75$-approximate algorithm and also demonstrated that the problem
retains the hardness of the unrelated machines problem, by showing
that it is NP-hard to approximate within a factor better than $3/2$.

\subsubsection{Further Related Work}
\label{sec:further}

\noindent\textbf{Graph Balancing.}
As was already mentioned, for the pure graph balancing problem, the best
approximation ratio for classical polynomial time algorithms is $1.75$ by
\cite{EKS14}. Wang and Sitters~\cite{WangS16} showed a different LP-based algorithm
with a higher ratio of $11/6\approx 1.83$, while Huang and Ott~\cite{HuangO16}
designed a purely combinatorial approximation algorithm but with also a higher
guarantee of $1.857$.

Jansen and Rohwedder~\cite{JansenR19} studied the so-called {\em configuration LP}
which was introduced by Bansal and Sviridenko~\cite{BansalS06}. They showed that it
has an integrality gap of at most 1.749 breaking the 1.75 barrier of the integrality
gaps of the previous LP formulations. This leaves open the possibility of using this
LP to produce an approximation algorithm with a ratio better than 1.75.

Verschae and Wiese~\cite{VerschaeW14} studied the {\em unrelated} version of graph
balancing (whose strategic variant we consider in this paper) and showed that the
integrality gap of the configuration LP is equal to $2$, which is much higher
comparing to graph balancing. They also showed a 2-approximation algorithm for the
problem of maximizing the minimum load, which is the best possible unless P=NP.

The problem has been studied for various special graph classes. For the case of
simple graphs (also known as Graph Orientation), Asahiro et
al~\cite{AsahiroJMO11-trees} showed that the problem is in P for the case of trees,
while Asahiro, Miyano and Ono~\cite{AsahiroMO11-planar} showed that it becomes
strongly NP-hard for planar and bipartite graphs. Finally, Lee, Leung and
Pinedo~\cite{LeeLP09a} concluded the case of trees in the case of multiple edges,
showing an FPTAS which is the best possible, given that the problem in multi-graphs
is immediately NP-hard even for the simple case of two vertices (due to reduction from Subset Sum).

\noindent\textbf{Truthful Scheduling.} The lack of progress in the original
unrelated machine problem led to the study of special cases where progress has been made. Ashlagi et
al.\cite{ADL09}, resolved a restricted version of the Nisan-Ronen conjecture, for the
special but natural class of {\em anonymous} mechanisms. Lavi and
Swamy~\cite{LaviS09} studied a restricted input domain which however retains the
multi-dimensional flavour of the setting. They considered inputs with only two
possible values ``low'' and ``high'', that are publicly known to the designer. For
this case they showed an elegant deterministic mechanism with an approximation factor
of 2. They also showed that even for this setting achieving the optimal makespan is
not possible under truthfulness, and provided a lower bound of
$11/10$. Yu~\cite{Yu09} extended the results for a range of values, and Auletta et
al.~\cite{Auletta0P15} studied multi-dimensional domains where the private
information of the machines is a single bit.

Randomization has led to mildly improved guarantees. There are two
extensions of truthfulness for randomized mechanisms;{\em universal
  truthfulness} if the mechanism is described as a probability
distribution over deterministic truthful mechanisms, and {\em
  truthfulness-in-expectation}, if in expectation no player can
benefit by lying. The former notion was first considered in
\cite{NR01} for two machines, it was later extended to $n$ machines by
Mu'alem and Schapira~\cite{MualemS18} and finally Lu and
Yu~\cite{LuYu08} showed a $0.837n$-approximate mechanism, which is
currently the best known. Lu and Yu~\cite{LuY08a} showed a
truthful-in-expectation mechanism with an approximation guarantee of
$(m+5)/2$.
Mu'alem and Schapira~\cite{MualemS18}, showed a lower bound of
$2-1/m$, for both notions of randomization. Christodoulou,
Koutsoupias and Kov{\'a}cs~\cite{CKK10} extended this lower bound for
fractional mechanisms, where each task can be split to multiple
machines, and they also showed a fractional mechanism with a guarantee
of $(m+1)/2$.
The special case of two machines~\cite{Lu09, LuY08a} is still
unresolved; currently, the best upper bound is $1.587$ due to
Chen, Du, and Zuluaga~\cite{ChenDZ15}.

The case of {\em related} machines is well understood. It falls into
the so-called {\em single-dimensional} mechanism design in which the
valuations of a player are linear expressions of a single parameter.
In this case, the cost of each machine is expressed via a single
parameter, its {\em (inverse) speed} multiplied by the workload
allocated to the machine, instead of an $m$-valued vector, as it is
the case for the unrelated machines and the Graph Balancing setting.
Archer and Tardos~\cite{AT01} showed that, in contrast to the
unrelated machines version, the optimal makespan can be achieved by an
(exponential-time) truthful algorithm, while \cite{CK13} gave a
deterministic truthful PTAS which is the best possible even for the
pure algorithmic problem (unless P=NP).

Truthful implementation of other objectives was considered by Mu'alem
and Schapi\-ra~\cite{MualemS18} for multi-dimensional problems and by
Epstein, Levin and van Stee~\cite{EpsteinLS13} for single-dimensional
ones. Leucci, Mamageishvili and Penna~\cite{LeucciMP18} demonstrated
high lower bounds for other min-max objectives on some combinatorial
optimization problems on graphs, showing essentially that VCG is the
best mechanism for these problems. Minooei and Swamy~\cite{MS12}
considered a multi-dimensional vertex cover problem, and approached it
by decomposition into single parameter problems.

The Bayesian setting, where the players costs are drawn from a
probability distribution has also been studied. Daskalakis and
Weinberg~\cite{DaskalakisW15} showed a mechanism that is at most a
factor of 2 from the {\em optimal truthful mechanism}, but not with
respect to the optimal makespan. 
Chawla et al.~\cite{ChawlaHMS13} provided bounds of prior-independent
mechanisms (where the input distribution is unknown to the mechanism),
while Giannakopoulos and Kyropoulou~\cite{GiannakopoulosK17} showed
that the VCG mechanism achieves a factor of $O( \log n/\log \log n )$
under some distributional and symmetry assumptions.

Recently Christodoulou, Koutsoupias, and Kov{\'a}cs~\cite{CKK20} showed
a lower bound of $\sqrt{n-1}$ for all deterministic truthful
mechanisms, when the cost of processing a subset of tasks is given by a submodular
(or supermodular) set function, instead of an additive function which
is assumed in the standard scheduling setting.

\section{Preliminaries}
\label{sec:preliminaries}

\noindent\textbf{Scheduling.} In the classical \emph{unrelated machines scheduling} there is a set $N$ of $n$
machines and a set $M$ of $m$ tasks that need to be scheduled on the machines. The
input is given by a nonnegative matrix $t=(t_{ij})_{n\times m}:$ machine
$i$ needs time $t_{ij}\in \mathbb R_{\geq 0}$ to process task $j,$ and her
\emph{costs} are additive, i.e., the processing time for machine $i$ for a set of
tasks $X_i\subset M$ is $t_i(X_i):=\sum_{j\in X_i} t_{ij}.$ The objective is to
minimize the makespan (min-max objective).  An allocation to all machines
$X=(X_1,X_2,\ldots ,X_n),$ (which is a partition of $M$) can also be denoted by the
characteristic matrix $x=(x_{ij})$ where $x_{ij}=1$ if $j\in X_i,$ and $x_{ij}=0$
otherwise. %

The current work essentially considers a special case of unrelated scheduling, in which
every task can be processed by two designated machines.  The tasks can thus be
modelled by the edges of a graph, and the associated problem is also known as
\emph{Unrelated Graph Balancing}.  More formally, in the Unrelated Graph Balancing
problem, there is a given undirected graph $G=(V,E);$ the vertices correspond to a
set of machines $N=V$ and the edges to a set of tasks $M=E.$ For each edge $e\in E$
only its two incident vertices can process the job $e,$ and they have in general
different processing times $t_i(e),$ and $t_{i'}(e).$ The goal is to assign a direction
to each edge $e=(i,i')$ (allocate the corresponding task) of the graph, to one of the
incident vertices (machines). The \emph{completion time} of each vertex $i$ is then
the total processing time of the jobs $X_i$ assigned to it
$t_i(X_i)=\sum_{e\in X_i}t_{i}(e)$. The objective is to find an allocation that
minimizes the {\em makespan}, i.e. the maximum completion time over all vertices.

\noindent\textbf{Mechanism design setting.} We assume that each machine $i\in N$ is controlled by a selfish agent
that is reluctant to process the tasks and the cost function $t_i$ is
private information (also called the {\em type} of
agent $i$). A \emph{mechanism} asks the agents to report \emph{(bid)}
their types $t_i,$ and based on the collected bids it allocates the
jobs, and gives payments to the agents. A player may report a false
cost function $b_i\neq t_i$, if this serves her interests.

Formally, a mechanism $(X,P)$ consists of two parts:
\begin{description}
\item[An allocation algorithm:] The allocation algorithm $X$ allocates the tasks to the machines depending
  on the players' bids $b=(b_1,\ldots ,b_n)$. We denote by $X_i(b)$ the subset of tasks assigned to machine $i$ in the bid profile $b.$

\item[A payment scheme:] The payment scheme $P=(P_1,\ldots,P_n)$ determines the payments also
  depending on the bid values $b.$ The functions $P_1,\ldots,P_n$ stand
  for the payments that the mechanism hands to each agent.
\end{description}

The {\em utility} $u_i$ of a player $i$ is the payment that she gets
minus the {\em actual} time that she needs to process the set of tasks
assigned to her, $u_i(b)=P_i(b)-t_i(X_i(b))$. We are interested in
\emph{truthful} mechanisms. A mechanism is truthful, if for every
player, reporting his true type is a \emph{dominant
  strategy}. Formally,

$$u_i(t_i,b_{-i})\geq u_i(t'_i,b_{-i}),\qquad \forall i\in N,\;\; t_i,t'_i\in \mathbb R_{\geq 0}^m, \;\; b_{-i}\in \mathbb R_{\geq 0}^{(n-1)\times m},$$
where $b_{-i}$ denotes the reported bidvectors of all players disregarding
$i.$

We are looking for \emph{truthful} mechanisms with \emph{low approximation ratio} of
the allocation algorithm for the makespan  irrespective of the running time to
compute $X$ and $P.$ In other words, our lower bounds are information-theoretic and
do not take into account computational issues.

A useful characterization of truthful mechanisms in terms of the
following monotonicity condition, helps us to get rid of the payments
and focus on the properties of the allocation algorithm.

\begin{definition}[Weak Monotonicity] \label{def:wmon}
An allocation algorithm $X$ is called {\em
  weakly monotone (WMON)} if it satisfies the following property: for
every two inputs $t=(t_i,t_{-i})$ and $t'=(t'_i,t_{-i})$, the
associated allocations $X$ and $X'$ satisfy $t_i(X_i)-t_i(X'_i)\leq
t'_i(X_i)-t'_i(X'_i).$
\end{definition}

It is well known that the allocation function of every truthful
mechanism is WMON~\cite{BCR+06}, and also that this is a sufficient
condition for truthfulness in convex domains~\cite{SY05}.

The following lemma was essentially shown in \cite{NR01} and has been
a useful tool to show lower bounds for truthful mechanisms for several
variants (see for example \cite{ChrKouVid09,MS07}). %

\begin{lemma}\label{lemma:tool}
  Let $t$ be a bid vector, and let $S=X_i(t)$ be the subset assigned to
  player $i$ by a weakly monotone allocation $X.$ Let $t'=(t'_i,t_{-i})$ be a bid vector
  such that only the bid of machine $i$ has changed and in such a way
  that for every task in $S$ it has decreased (i.e., $t'_{ij}<
  t_{ij}, j\in S$) and for every other task it has increased
  (i.e., $t'_{ij}> t_{ij}, j\in M\setminus S$).  Then the mechanism does not
  change the allocation to machine $i$, i.e., $X_i(t')=X_i(t)=S$.
\end{lemma}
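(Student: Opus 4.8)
The plan is to apply weak monotonicity just once, to the pair of profiles $t$ and $t'$ (which differ only in player $i$'s bid), and then let the strict sign pattern of the bid change pin down the assigned set exactly. Write $S':=X_i(t')$; the goal is to show $S'=S$. For $Y\subseteq M$ set $\delta(Y):=\sum_{j\in Y}(t_{ij}-t'_{ij})$, so that by hypothesis $\delta(\{j\})>0$ for every $j\in S$ and $\delta(\{j\})<0$ for every $j\in M\setminus S$; the strictness here is exactly what will make the argument go through.

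First I would invoke WMON (Definition~\ref{def:wmon}) on the inputs $t=(t_i,t_{-i})$ and $t'=(t'_i,t_{-i})$, whose associated allocations assign $S$ and $S'$ to machine $i$. This yields $t_i(S)-t_i(S')\le t'_i(S)-t'_i(S')$, which, moving the $t_i$-terms to one side and the $t'_i$-terms to the other, is exactly
$$\delta(S)\le\delta(S').$$
(Applying WMON to the reversed pair $(t',t)$ gives the same inequality, so one application is all we get and all we need.)

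Next I would split both sides along the disjoint unions $S=(S\cap S')\cup(S\setminus S')$ and $S'=(S\cap S')\cup(S'\setminus S)$. Since $S\setminus S'\subseteq S$, all terms of $\delta(S\setminus S')$ are strictly positive, so $\delta(S)\ge\delta(S\cap S')$; since $S'\setminus S\subseteq M\setminus S$, all terms of $\delta(S'\setminus S)$ are strictly negative, so $\delta(S')\le\delta(S\cap S')$. Combining with the WMON inequality,
$$\delta(S\cap S')\le\delta(S)\le\delta(S')\le\delta(S\cap S'),$$
so every inequality is an equality; in particular $\delta(S\setminus S')=0$ and $\delta(S'\setminus S)=0$.

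Finally I would conclude: a sum of strictly positive numbers can equal $0$ only if it is an empty sum, so $S\setminus S'=\emptyset$; a sum of strictly negative numbers can equal $0$ only if it is empty, so $S'\setminus S=\emptyset$; hence $S'=S$, i.e.\ $X_i(t')=X_i(t)=S$. The only delicate point is this last step, and it is precisely the strict inequalities $t'_{ij}<t_{ij}$ on $S$ and $t'_{ij}>t_{ij}$ off $S$ that upgrade ``the symmetric-difference sums vanish'' to ``the symmetric difference is empty''; with merely non-strict changes the statement would fail.
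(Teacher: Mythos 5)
Your proof is correct. The paper does not reprove Lemma~\ref{lemma:tool}, stating only that it was essentially shown in~\cite{NR01}; your argument — invoke WMON once to obtain $\delta(S)\le\delta(S')$, split along the symmetric difference of $S$ and $S'=X_i(t')$, and use the strict sign pattern of the bid change to conclude that both $S\setminus S'$ and $S'\setminus S$ must be empty — is exactly the standard argument for this statement.
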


In general, when the values of a machine change, the allocation of the other machines
may change, this issue being the pivotal difficulty of truthful unrelated scheduling.
Allocation algorithms that ``promise'' not to change the allocation of other machines
as long as changing (only) $t_i$ does not affect the set $X_i,$ are less
problematic. These allocation rules are called \emph{local} in \cite{NR01}, where it
is shown that local truthful mechanisms cannot have a better than $n$ approximation.

\begin{definition}[Local mechanisms]\label{def:local} A mechanism is {\em local} if for every $i\in N$,
for every $t_{-i}$, and $t_i,t'_i$ for which
$X_i(t_i,t_{-i})=X_i(t'_i,t_{-i})$  also holds that $X_j(t_i,t_{-i})=X_j(t'_i,t_{-i})\quad (\forall j\in N).$  
\end{definition}

There are several special classes
of mechanisms that satisfy this property, perhaps the most prominent
one is the class of \emph{affine minimizers} (see, e.g., \cite{ChrKouVid09}).

\section{Graph Balancing}
\label{sec:scheduling}

In this section we focus on the (Unrelated) Graph Balancing problem, which is a
special case of makespan minimization of scheduling unrelated machines. The Graph
Balancing is a multi-parameter mechanism design problem that retains most of the
difficulty of the Nisan-Ronen conjecture, yet has certain features that make it more
amenable.

One of the difficulties in dealing with truthful mechanisms is that while truthfulness is a
local property (i.e., independent truthfulness conditions, one per player), the
allocation algorithm is a global function (that involves all players). Local algorithms attempt
to reconcile this tension by insisting that the allocation is also ``local'', but they take
this notion too far. The results of this work show that locality in mechanisms is
very restrictive in some domains, where the Hybrid Mechanism outperforms every local
mechanism.

The Graph Balancing problem is more amenable than the general scheduling problem
because it exhibits another kind of locality, \emph{domain locality}: when a machine
does not get a task, we know which machines gets it. Yet, this locality is not very
restrictive and the problem retains most of its original difficulty.

In this section, we take advantage of domain locality to obtain an optimal mechanism
for stars. It turns out that this mechanism, the Hybrid Mechanism, is a special case
of a more general mechanism (see Section~\ref{sec:hybrid}). But since the Hybrid Mechanism does not apply to general
graphs, here we also propose the Star-Cover mechanism for general graphs: decompose the
graph into stars and apply the Hybrid Mechanism independently to each star. In this
way, we obtain a 4-approximation algorithm for trees and similar positive results for
other types of graphs.

Makespan minimization is the special case, when $p=\infty$, of minimizing the
$L^p$-norm of the values of the players. Other special cases of the $L^p$-norm
optimization is the case $p=1$, which corresponds to welfare maximization, and the
case $p=0$, which is related to Nash Social Welfare~\cite{Cole_2018}. We deal with
this more general problem in another section (Section~\ref{sec:lp-norm}). Most of the
results and proofs of this section generalize to any $p\geq 1$. %

\subsection{Stars and the Hybrid Mechanism}
\label{sec:stars}

In this subsection, we focus on star graphs, where there are $n=m+1$ players and $m$
tasks. Player $0$ is the root of the star, and has processing times given by a vector
$r=(r_1,r_2,\ldots r_m).$ We also refer to this player as the \emph{root player} or
\emph{$r$-player}. For given bids $r$ of the root player, and task set $T\subseteq M$
we use the short notation $r(T)=\sum_{j\in T} r_j$.

There are also $m$ \emph{leaf-players}, 
one for each leaf of the star with processing times $\ell=(\ell_1,\ldots,
\ell_m)$ respectively.  Each task $j$ can only be assigned to two players;
either to the root, with processing time $r_j,$ or to the leaf with
processing time $\ell_j$.

As usual, we denote by $r_{-i}$ the vector of bids of the root player except for the
bid for task $i,$ and similarly $\ell_{-i}$ denotes the bids of all leaf-players,
except for player $i.$ The vector of all input bids is given by $t=(r,\ell).$

As we show later in the Lower Bound section (Section~\ref{sec:lower-bounds}), all
previously known mechanisms for the Unrelated Graph Balancing problem, e.g. affine
minimizers and task independent mechanisms, have approximation ratio at least
$\sqrt{n-1}$ for graphs, even for stars.

In contrast, we now show that the Hybrid Mechanism has constant approximation ratio
for stars.

\begin{definition}[Hybrid Mechanism for Graph Balancing]\label{def:maxmech}
  Consider an instance of the Unrelated Graph Balancing problem on a star of $n$
  nodes and set of tasks $M$. Let
  \begin{align} \label{eq:max1}
    S\in \argmin_{T\subseteq M}\{r(T)+\max_{i\not \in  T} \ell_i\}.
  \end{align}
  The mechanism assigns a set of tasks $S$ to the root and the remaining tasks to
  leaves. Ties are broken in a deterministic way (e.g., lexicographically).
\end{definition}

Figure~\ref{fig:max-mechanism} shows the partition of the space of the root player
induced by the Hybrid Mechanism for a star of two leaves. 

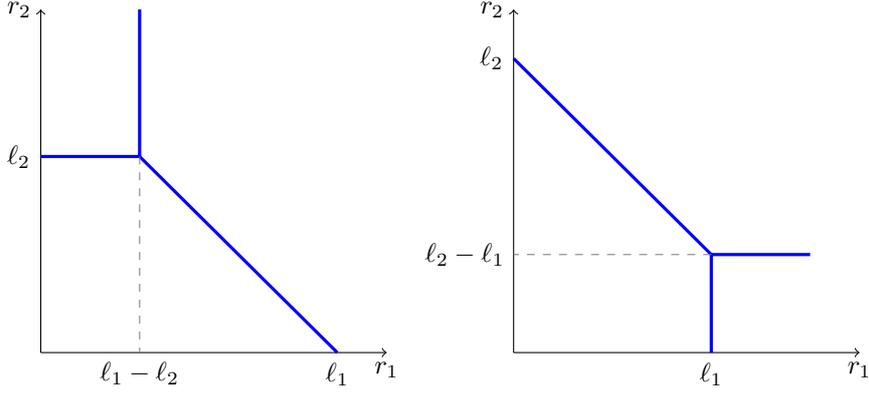
\begin{figure}
\centering
\begin{tikzpicture}[scale=0.65]

\draw[->] (0,0) -- (7,0) node[anchor=north] {$r_1$};
\draw[->] (0,0) -- (0,7) node[anchor=east] {$r_2$};

\draw[very thick, blue] (0, 4) node[anchor=east,black] {$\ell_2$} -- (2, 4) -- (2, 7);
\draw[very thick, blue ] (6, 0) node[anchor=north,black] {$\ell_1$}-- (2, 4);

\draw[dashed, gray] (2,4) -- (2,0) node[anchor=north,black] {$\ell_1-\ell_2$};

\end{tikzpicture}
\begin{tikzpicture}[scale=0.65]

\draw[->] (0,0) -- (7,0) node[anchor=north] {$r_1$};
\draw[->] (0,0) -- (0,7) node[anchor=east] {$r_2$};

\draw[very thick, blue] (4, 0) node[anchor=north,black] {$\ell_1$} -- (4, 2) -- (6, 2);
\draw[very thick, blue ] (0, 6) node[anchor=east,black] {$\ell_2$}-- (4, 2);

\draw[dashed, gray] (4,2) -- (0,2) node[anchor=east,black] {$\ell_2-\ell_1$};

\end{tikzpicture}

\caption{An instance of the Hybrid Mechanism, for the star of $m=2$ leaves. It shows
  the partition of bid-space of the root player induced by the allocation of the Hybrid
  Mechanism when $\ell_1 \geq \ell_2$ (left) and when $\ell_2 \geq \ell_1$ (right).
  In the left case, the root gets both tasks in the area near $(0,0)$, it gets only
  task $1$ when $r_1\leq \ell_1-\ell_2$ and $r_2\geq \ell_2$, and it gets neither
  task otherwise. Note that, in contrast to VCG, for every vector of fixed values for
  the leaves, only three allocations are possible.}
\label{fig:max-mechanism}
\end{figure}

The argmin expression that defines the Hybrid Mechanism and a corresponding expression
that defines the VCG mechanism are similar: in the definition of VCG, instead of
$\max_{i\not \in T} \ell_i$, we have $\sum_{i\not \in T} \ell_i$. It is a happy
coincidence that replacing the operator sum with max preserves the truthfulness of
the mechanism, a fact that rarely holds.

\begin{lemma}
  The Hybrid Mechanism for Graph Balancing on stars is truthful and has approximation
  ratio 2.
\end{lemma}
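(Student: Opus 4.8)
The plan is to prove the two claims separately: first truthfulness, then the approximation ratio.

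\textbf{Truthfulness.} Since the leaf players are single-dimensional and each controls only one task, it suffices to check that the allocation rule restricted to each leaf is monotone in the usual single-parameter sense (a leaf gets its task when its bid is small enough), and that the root player's allocation rule is weakly monotone (WMON); by the characterizations cited in the excerpt (WMON is necessary for truthfulness and sufficient in convex domains), this gives truthfulness. For a leaf $i$, fixing all other bids, the quantity $\max_{j\notin T}\ell_j$ is nondecreasing in $\ell_i$, so increasing $\ell_i$ can only make sets $T$ with $i\notin T$ less attractive; hence there is a threshold below which the root's optimal $S$ contains $i$ (leaf does not get the task) and above which it need not — wait, this is backwards: if $i\notin T$ then a large $\ell_i$ \emph{penalizes} that choice, so the root is \emph{more} likely to take task $i$ when $\ell_i$ is large. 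So the leaf keeps its task exactly when $\ell_i$ is below some threshold depending on $r$ and $\ell_{-i}$, which is the correct single-parameter monotonicity. For the root, I would verify WMON directly from the argmin definition: if $r,r'$ differ only in the root's coordinates and $S=X_0(r,\ell)$, $S'=X_0(r',\ell)$, then optimality of $S$ under $r$ and of $S'$ under $r'$ gives $r(S)+\max_{i\notin S}\ell_i \le r(S')+\max_{i\notin S'}\ell_i$ and the symmetric inequality for $r'$; adding these and cancelling the (identical) max-terms yields $r(S)+r'(S') \le r(S')+r'(S)$, i.e. $r(S)-r(S')\le r'(S)-r'(S')$, which is exactly WMON (using additivity $r(X)=\sum_{j\in X}r_j$). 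The key happy accident here is that the $\max_{i\notin T}\ell_i$ term does not depend on $r$ at all, so it cancels cleanly — this is the same reason the substitution of $\max$ for $\sum$ preserves truthfulness.

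\textbf{Approximation ratio.} Let $\mathrm{OPT}$ denote the optimal makespan, achieved by assigning some set $O\subseteq M$ to the root and $M\setminus O$ to the leaves. Then $r(O)\le \mathrm{OPT}$ (the root's load) and $\ell_j\le \mathrm{OPT}$ for every $j\notin O$ (each such leaf's load). Hence $\max_{j\notin O}\ell_j\le \mathrm{OPT}$, so the objective value of the candidate $O$ in the argmin is $r(O)+\max_{j\notin O}\ell_j\le 2\,\mathrm{OPT}$. Since $S$ is the minimizer, $r(S)+\max_{j\notin S}\ell_j\le r(O)+\max_{j\notin O}\ell_j\le 2\,\mathrm{OPT}$. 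Now the makespan of the mechanism's allocation is $\max\{r(S),\ \max_{j\notin S}\ell_j\}$, which is at most $r(S)+\max_{j\notin S}\ell_j\le 2\,\mathrm{OPT}$. This gives ratio $2$.

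\textbf{Where the work is.} The approximation bound is essentially immediate once the argmin objective is seen as an upper bound on the true makespan (its value is always between the makespan and twice the makespan). The only genuinely delicate point is truthfulness of the root player: one must make sure WMON is actually \emph{sufficient} here, which requires the root's domain $\mathbb{R}_{\ge 0}^m$ to be convex (it is) and possibly that the allocation rule together with a suitable payment gives a well-defined truthful mechanism; invoking \cite{SY05} as cited closes this. A secondary subtlety is tie-breaking: one should note that the deterministic (e.g. lexicographic) tie-breaking is consistent across bid profiles in a way that does not break WMON — the cleanest route is to observe that the inequalities above are all non-strict, so any fixed tie-breaking rule that is applied uniformly preserves them. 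I expect the write-up to spend most of its length on the root-player WMON computation and the leaf-player threshold argument, with the approximation ratio dispatched in a few lines.
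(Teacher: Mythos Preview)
Your proposal is correct and follows essentially the same route as the paper. The approximation-ratio argument is identical: bound the mechanism's makespan by its argmin objective, evaluate that objective at the optimal set, and use $a+b\le 2\max\{a,b\}$. For truthfulness, the paper phrases the root's incentive as ``$-\max_{i\notin T}\ell_i$ can be interpreted as payments'' (so the root is literally minimizing cost minus payment), whereas you verify WMON directly from the two optimality inequalities and then invoke sufficiency on the convex domain; these are two renderings of the same observation that the $\ell$-term is constant in $r$. Your leaf argument (single-parameter threshold monotonicity from the fact that $g_T(\ell)$ is nondecreasing in $\ell_i$ for $i\notin T$) is exactly what the paper defers to its Section on Hybrid Mechanisms.
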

\begin{proof}
  The root player has no incentive to lie since $-\max_{i\not \in T} \ell_i$ can be
  interpreted as its payments. The reason that leaf players have no incentive to lie
  comes essentially from the fact that the expression in \eqref{eq:max1} is monotone
  in $\ell_i$ (see Section~\ref{sec:hybrid}, for a more rigorous and extensive treatment of the
  truthfulness of the general Hybrid Mechanism).

  Let $S^*=\argmin_{T\subseteq M}\max\{r(T),\, \max_{i\not \in T} \ell_i\}$ be the subset
  assigned to the root in an optimal allocation,
  $OPT$ be the optimal
  makespan, and $ALG$ be the makespan achieved by the Hybrid Mechanism. Then we have
\begin{align*}
 ALG  \leq \min_{T\subseteq M}\{r(T)+\max_{i\not \in  T} \ell_i\} \leq r(S^*)+\max_{i\not \in  S^*} \ell_i\leq  2\max \{r(S^*),\,\max_{i\not \in  S^*} \ell_i \}= 2 OPT. 
\end{align*}

\end{proof}

\subsection{Upper bound for general graphs and multigraphs}
\label{sec:gener-graphs-mult}

We now turn our attention to positive (upper bound) results for general graphs and
multigraphs. We will need a few definitions first.

\begin{definition}[Star decomposition]
  A \emph{star decomposition} of a (multi)graph $G(V,E)$ is a partition
  $T=\{T_1,\ldots,T_k\}$ of its edges into stars (see
  Figure~\ref{fig:star-decomposition} for an example). Let $V(T_i)$ denote the vertex set of the star spanned by $T_i.$ The star contention number of
  a star decomposition is the maximum number of stars that include a node either as a
  root or as a leaf: $c(T)=\max_{v\in V}|\{i\,:\, v\in V(T_i), i=1,\ldots,k\}|$. The
  \emph{star contention number of a (multi)graph} is the minimum star contention number
  among all its star decompositions.
\end{definition}

In an optimal star decomposition of a graph (but not multigraph), we can assume that
every node is the root of at most one star, otherwise we can merge stars with common
root without changing the star contention number.

A related notion to star decomposition that has been studied
extensively is the notion of edge orientation of a
multigraph (or of load balancing when we consider multigraphs).

\begin{definition}[Edge orientation number]
  Define the orientation number of a given orientation of the edges of a multigraph
  $G$, as its maximum in-degree. The \emph{edge orientation number} $o(G)$ of a multigraph
  $G$ is the minimum orientation number among all its possible orientations.
\end{definition}

Indeed the two notions are closely related: every star decomposition corresponds to a
graph orientation by orienting the edges in all stars from roots to leaves, and vice
versa a graph orientation gives rise to a star decomposition in which every node with
its outgoing edges defines a star. Given that in an optimal star decomposition of a
graph, each node is the root of at most one star, we get that for \emph{every graph}
$G$:
\begin{align*}
  o(G)\leq c(G) \leq o(G)+1.
\end{align*}
This relation for \emph{multigraphs} is similar only that in the right hand side we add the
maximum edge multiplicity $w$ instead of $1$, i.e., $o(G)\leq c(G) \leq o(G)+w$.

The following definition utilizes the Hybrid Mechanism on stars to obtain a general
mechanism for arbitrary graphs (and multigraphs). 
 
\begin{definition}[Star-Cover Mechanism] \label{def:star-cover}
  Let $G=(V,E)$ be a multigraph and let $T=\{T_1, \ldots, T_k\}$ be a fixed star
  decomposition. The Star-Cover mechanism runs the Hybrid Mechanism on every star of
  $T$ independently. That is, if $S_{i,h}$ is the subset of tasks allocated to a
  player $i$ by the Hybrid Mechanism when applied to a star $T_h$, the set of tasks
  allocated to player $i$ is $S_i=\cup_{h = 1}^k S_{i,h}$.
\end{definition}

We can now state and prove the general positive theorem of this section.
\begin{theorem}
  The Star-Cover mechanism for a given multigraph $G$ that uses the Hybrid Mechanism on
  every star of a fixed star decomposition $T=\{T_1,\ldots,T_k\}$ is truthful and has
  an approximation ratio at most $2c(T)$.
\end{theorem}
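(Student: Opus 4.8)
The plan is to prove truthfulness and the approximation bound separately. For truthfulness, observe that the Star-Cover Mechanism is a parallel composition of independent mechanisms: each star $T_h$ is processed by its own copy of the Hybrid Mechanism, and these copies do not interact. Every player $i$ participates in some collection of stars (at most $c(T)$ of them), and in each such star $T_h$ the allocation $S_{i,h}$ depends only on the bids of players in $V(T_h)$. Crucially, a player $i$ who lies affects only the stars she belongs to, and her utility decomposes additively: her total cost is $t_i(S_i) = \sum_h t_i(S_{i,h})$ and her total payment is $\sum_h P_i^{(h)}$, where $P_i^{(h)}$ is the payment she receives in the mechanism for star $T_h$. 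Since the Hybrid Mechanism on each star is truthful (by the earlier lemma), player $i$ maximizes her utility contribution from each star $T_h$ by bidding truthfully there; since she bids the same vector $b_i$ in all stars, bidding $t_i$ simultaneously optimizes every summand, hence the sum. This composition argument is standard but I would state it carefully because the root/leaf roles differ across stars.

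For the approximation ratio, fix an optimal allocation $x^*$ of $G$ with makespan $OPT$. The key step is to show that each individual Hybrid Mechanism, when run on star $T_h$, produces a load on each node $v \in V(T_h)$ that is at most $2\,OPT$. To see this, consider the restriction of $x^*$ to the edges of $T_h$: this induces a feasible allocation of the star $T_h$ in which every node of $T_h$ has load at most $OPT$ (since removing edges only decreases loads). Hence the optimal makespan $OPT_h$ of the star instance $T_h$ satisfies $OPT_h \le OPT$. By the approximation guarantee for stars (the preceding lemma), the Hybrid Mechanism on $T_h$ achieves makespan at most $2\,OPT_h \le 2\,OPT$; in particular every node $v \in V(T_h)$ receives, from star $T_h$, a set of tasks $S_{v,h}$ with $t_v(S_{v,h}) \le 2\,OPT$.

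It remains to sum over stars. For a fixed player $v$, the total set assigned is $S_v = \bigcup_{h} S_{v,h}$, where the union is over the stars containing $v$, of which there are at most $c(T)$. Because the $T_h$ partition $E$, these sets $S_{v,h}$ are disjoint, so by additivity of costs $t_v(S_v) = \sum_{h : v \in V(T_h)} t_v(S_{v,h}) \le c(T) \cdot 2\,OPT = 2c(T)\cdot OPT$. Taking the maximum over $v$ gives makespan at most $2c(T)\cdot OPT$, which is the claimed bound. The only subtlety to spell out is that the Hybrid Mechanism as defined on stars requires a designated root, so one should note that a star in the decomposition comes with its center as root and its other endpoints as leaves, and the per-node guarantee above applies to the center as well as the leaves.

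The main obstacle, such as it is, is the truthfulness composition step: one must check that a player can gain nothing by coordinating lies across the several stars she belongs to. This follows cleanly from additive separability of utility across the independent sub-mechanisms together with per-star truthfulness, but it is the step where the argument could go wrong if, say, a player's bid in one star were allowed to influence another star — which the definition of Star-Cover explicitly forbids. The approximation part is essentially bookkeeping: localize $OPT$ to each star, apply the factor-$2$ star bound, and sum the at most $c(T)$ disjoint contributions.
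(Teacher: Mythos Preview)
Your proposal is correct and follows essentially the same approach as the paper: truthfulness via the independence of the sub-mechanisms across stars (the paper phrases this as $S_{i,h}$ being unaffected by bids outside $T_h$, while you phrase it via additive separability of utility, but the content is identical), and the approximation bound via $OPT_h\le OPT$, the factor-$2$ star guarantee, and summing at most $c(T)$ disjoint contributions per node. Your write-up is in fact somewhat more explicit than the paper's, which compresses the approximation argument into a single chain of inequalities.
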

\begin{proof}
  Fix some player $i$ and let $S_{i,h}$ be the subset of tasks allocated to player
  $i$ by the Star Mechanism when applied to a star $T_h$,
  $h=1,\ldots,k$. Truthfulness is an immediate consequence of the following two
  observations. First, since the fixed star decomposition is independent of player
  $i$'s processing times, player $i$ cannot affect it by lying. Second, $S_{i,h}$ is
  independent of player $i$'s processing times $t_{i}(e)$ for all edges
  $e\not\in T_h,$ therefore player $i$ cannot alter the assignment on $T_h$ by
  changing its values outside $T_h$.

  To see the approximation guarantee, let $OPT$, $OPT(T_h)$ be the optimal makespan
  on $G$ and $T_h$ respectively, and let $ALG$ and $ALG(T_h)$ be the makespan
  achieved by the Star-Cover mechanism on $G$ and $T_h.$
  \begin{align*}
    ALG  \leq  \max_{h=1,\ldots,k} c(T)\cdot ALG(T_h)\leq \max_{h=1,\ldots,k}
    c(T)\cdot 2 OPT(T_h)\leq 2 c(T)\cdot OPT.
  \end{align*}
\end{proof}

Due to the close connection between star decompositions and edge orientations in
graphs, we get
\begin{corollary}
  The approximation ratio for graphs with edge orientation number $o(G)$ is at most
  $2o(G)+2$.
\end{corollary}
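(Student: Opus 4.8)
The plan is to derive the corollary directly from the preceding theorem by exhibiting a star decomposition whose star contention number is controlled by $o(G)$. First I would fix an orientation of the edges of $G$ that attains the edge orientation number, i.e., one in which every vertex has in-degree at most $o(G)$. From this orientation I build a star decomposition $T=\{T_1,\ldots,T_k\}$ in the canonical way already described in the text: for each vertex $v$ that has at least one outgoing edge, let the star rooted at $v$ consist of exactly the edges directed out of $v$ (vertices with no outgoing edge contribute no star). Since each edge has a unique tail, every edge lies in exactly one star, so this is a genuine partition of $E$ into stars.

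Next I would bound the star contention number $c(T)$ of this decomposition. A vertex $v$ belongs to the star $T_h$ rooted at some vertex $u$ only if either $v=u$ (so $v$ is the root of that star, which happens for at most one $h$) or there is an edge of $G$ oriented from $u$ to $v$ (so $v$ is a leaf of that star). The number of stars in which $v$ appears as a leaf is therefore at most the in-degree of $v$ in the chosen orientation, which is at most $o(G)$. Adding the at most one star in which $v$ is a root gives $|\{h : v\in V(T_h)\}|\le o(G)+1$, and taking the maximum over $v$ yields $c(T)\le o(G)+1$; this is exactly the relation $c(G)\le o(G)+1$ recorded earlier for graphs.

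Finally, I would invoke the theorem: the Star-Cover mechanism that runs the Hybrid Mechanism on the stars of this $T$ is truthful and $2c(T)$-approximate, hence $2(o(G)+1)=2o(G)+2$-approximate, which proves the claim.

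I do not expect any genuine obstacle here; the statement is a routine corollary of the theorem and the graph relation $c(G)\le o(G)+1$. The only point requiring a little care is checking that the orientation-to-decomposition construction yields an honest partition (every edge in exactly one star) and counting the contention correctly — in particular remembering that in this construction a vertex is the root of at most one star but may be a leaf in up to $o(G)$ of them, which is precisely the source of the additive $+1$.
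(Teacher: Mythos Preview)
Your proposal is correct and follows essentially the same route as the paper: the paper derives the corollary directly from the theorem via the already-noted inequality $c(G)\le o(G)+1$, which it obtains by exactly the orientation-to-star-decomposition construction you describe (each vertex together with its outgoing edges forms a star, so a vertex is a root at most once and a leaf at most its in-degree many times).
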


In the sequel, we consider particular bounds for certain classes of graphs.  It
is known that the edge orientation number of a given graph can be computed in
polynomial time~\cite{AsahiroJMO11-trees}. In fact, by an application of the
max-flow-min-cut theorem it can be shown that $o(G)\leq \gamma$ iff for every
subgraph $H$ of $G$ it holds that $|E(H)|\leq \gamma|V(H)|.$ Since this equivalent
condition\footnote{This characterization of the orientation number $o(G)$
  implies that a truthful mechanism with constant approximation ratio exists for
  any minor-closed class of graphs, because for every class of graphs with
  forbidden minors, there exists some constant $\gamma$ that satisfies the
  property (see Theorems 7.2.3, 7.2.4 and Lemma 12.6.1. in \cite{Dies12}). We are grateful to an anonymous referee for pointing this out.}
holds for planar graphs with $\gamma=3,$ we immediately obtain:

\begin{theorem} For every planar graph, there exists a truthful mechanism with approximation  ratio $8.$
\end{theorem}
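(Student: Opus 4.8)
The plan is to combine the Star-Cover mechanism with the known structural bound on the edge orientation number of planar graphs. First I would recall the characterization quoted just before the statement: $o(G) \le \gamma$ if and only if every subgraph $H$ of $G$ satisfies $|E(H)| \le \gamma |V(H)|$. For planar graphs this bound holds with $\gamma = 3$: by Euler's formula, any simple planar graph on $\nu$ vertices has at most $3\nu - 6 < 3\nu$ edges, and every subgraph of a planar graph is planar, so the inequality $|E(H)| \le 3|V(H)|$ holds uniformly over all subgraphs $H$. Hence $o(G) \le 3$ for every planar $G$.

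Next I would invoke the corollary established immediately above (the one stating that graphs with edge orientation number $o(G)$ admit a truthful mechanism with approximation ratio at most $2 o(G) + 2$). Applying it with $o(G) \le 3$ yields a truthful mechanism — concretely, the Star-Cover mechanism run on a star decomposition whose star contention number is at most $o(G) + 1 \le 4$ — with approximation ratio at most $2 \cdot 3 + 2 = 8$. This is exactly the claimed bound, and truthfulness is inherited directly from the Star-Cover mechanism's truthfulness (already proved).

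The only genuine content beyond bookkeeping is checking that the $\gamma = 3$ bound truly holds for all subgraphs and not merely for $G$ itself; this is immediate from the fact that planarity is closed under taking subgraphs, so there is no real obstacle here. One minor subtlety worth a sentence in the write-up: the relevant inequality for the orientation number requires $|E(H)| \le \gamma |V(H)|$, and since $3\nu - 6 \le 3\nu$ we may even afford the slightly wasteful bound $3\nu$; if one wanted to be careful about the edge/vertex count for very small subgraphs (single vertices, single edges), the inequality $|E(H)| \le 3|V(H)|$ still holds trivially since $|E(H)| \le \binom{|V(H)|}{2}$ and these small cases are easily verified. Thus the proof reduces to: (i) planar $\Rightarrow$ $o(G) \le 3$ via Euler, (ii) apply the Star-Cover corollary with constant $2 \cdot 3 + 2 = 8$.
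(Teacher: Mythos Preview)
Your proposal is correct and follows exactly the paper's own argument: the paper simply notes that the characterization of the orientation number holds for planar graphs with $\gamma=3$ (via Euler's formula applied to every subgraph), and then invokes the corollary giving approximation ratio $2o(G)+2=8$. Your write-up merely spells out the Euler-formula step and the trivial small-subgraph cases in a bit more detail than the paper does.
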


A natural class of graphs fulfilling this property (with $\gamma=k$) is $k$-degenerate graphs. A graph $G(V,E)$ is
called \emph{$k$-degenerate}~\cite{erdHos1966chromatic} (or \emph{$k$-inductive}) if
there is an ordering $v_1,\ldots,v_n$ of its nodes such that the number of neighbors
of $v_i$ in $\{v_{i+1},\ldots,v_n\}$ is at most $k$. Many interesting classes of
graphs are $k$-degenerate for some small $k$. Besides planar graphs (with $k=5$), another example is given by $k$-trees~\cite{rose1974simple}: by definition, a $k$-tree is a
degenerate graph with an ordering such that every $v_i$ (except for the last $k$
nodes of the ordering) has exactly $k$ neighbors in $\{v_{i+1},\ldots,v_n\}$ and
these $k$ neighbors form a clique. Since graphs of treewidth $k$ are subgraphs of
$k$-trees~\cite{rose1974simple}, they are also $k$-degenerate. In particular, trees
are $1$-degenerate. We give here a direct proof and illustration of a star decomposition for $k$-degenerate graphs:

\begin{theorem} \label{thm:degenerate}
  For every $k$-degenerate graph, there is a truthful mechanism with approximation
  ratio $2k+2$.
\end{theorem}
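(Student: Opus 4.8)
The plan is to exhibit an explicit star decomposition of a $k$-degenerate graph whose star contention number is at most $k+1$, and then invoke the previous theorem (the Star-Cover mechanism is truthful with approximation ratio at most $2c(T)$) to conclude the bound $2(k+1)=2k+2$. Since truthfulness is already guaranteed by that theorem for any fixed star decomposition, the entire content of the proof is the combinatorial claim $c(G)\le k+1$ for $k$-degenerate $G$.

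First I would fix a $k$-degeneracy ordering $v_1,\ldots,v_n$ of $V$, so that each $v_i$ has at most $k$ neighbors among $\{v_{i+1},\ldots,v_n\}$ (its \emph{forward} neighbors). For each $i$, let $T_i$ be the star centered at $v_i$ whose edges are exactly the edges from $v_i$ to its forward neighbors; discard any $T_i$ that is empty. These stars partition $E$: every edge $\{v_a,v_b\}$ with $a<b$ is placed in exactly one star, namely $T_a$, the one rooted at its lower-indexed endpoint. So $T=\{T_i\}$ is a genuine star decomposition.

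Next I would bound, for a fixed vertex $v_j$, the number of stars $T_i$ with $v_j\in V(T_i)$. The vertex $v_j$ can appear in $T_j$ itself (as the root, provided $v_j$ has at least one forward neighbor), and it can appear as a leaf in $T_i$ for some $i<j$ precisely when $\{v_i,v_j\}\in E$ and $v_i$ precedes $v_j$ — that is, when $v_j$ is a forward neighbor of $v_i$. The key point is to count how many such $i<j$ there can be: these are exactly the neighbors of $v_j$ that come \emph{before} $v_j$ in the ordering. This is where one must be slightly careful — the degeneracy condition bounds \emph{forward} neighbors, not backward ones, so a naive reading gives no bound. The resolution: apply the degeneracy bound from the other side. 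Reversing the ordering, consider $v_j$; its backward neighbors in the original ordering are its forward neighbors in the reversed ordering, but the reversed ordering need not be a degeneracy ordering. The correct argument is instead to observe that $v_j$ contributes as a leaf to $T_i$ only for $i<j$, and the number of edges incident to $v_j$ going to lower-indexed vertices is at most $\deg(v_j)$, which is not bounded by $k$ in general. So I would refine the construction: rather than rooting each edge at its lower endpoint globally, process vertices in the order $v_1,\ldots,v_n$ and, when processing $v_i$, form the star $T_i$ rooted at $v_i$ consisting of all \emph{currently unassigned} edges incident to $v_i$ — equivalently, the edges from $v_i$ to its forward neighbors, which is the same thing. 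Then a leaf-appearance of $v_j$ in $T_i$ with $i<j$ still corresponds to a backward edge at $v_j$.

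The clean fix, which I expect to be the crux, is to root edges at their \emph{higher}-indexed endpoint instead: put edge $\{v_a,v_b\}$ with $a<b$ into the star $T_b$ rooted at $v_b$. Then $T_b$ consists of edges from $v_b$ to its backward neighbors, and since $v_b$ has at most $k$ forward neighbors, when $v_b$ appears as a \emph{leaf} of some $T_c$ we must have $c>b$ and $v_b$ a backward neighbor of $v_c$, i.e.\ $v_c$ a forward neighbor of $v_b$ — and there are at most $k$ of those. So $v_b$ is a leaf in at most $k$ stars and a root in at most one star (namely $T_b$), giving $c(T)\le k+1$. I would then state the decomposition this way, verify it is a partition of $E$ (each edge lands in exactly one star, indexed by its larger endpoint), verify $c(T)\le k+1$ by the forward-degree count just described, and apply the Star-Cover theorem to obtain the approximation ratio $2c(T)\le 2k+2$. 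Finally, I would note for the tree case ($k=1$) this recovers the claimed ratio $4$, and add the promised figure illustrating the decomposition.
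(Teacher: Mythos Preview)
Your proposal is correct and, after the exploratory false start, arrives at exactly the paper's construction: root each edge at its higher-indexed endpoint in the degeneracy ordering, so that every vertex is a root once and a leaf at most $k$ times (once per forward neighbor), giving $c(T)\le k+1$ and hence ratio $2k+2$ via the Star-Cover theorem. The paper presents this construction directly without the detour through the lower-endpoint attempt, but the substance is identical.
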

\begin{proof}
  Consider a $k$-degenerate graph $G$. It suffices to show that it admits a star
  decomposition with contention number $k+1$. Let $v_1,\ldots,v_n$ be an inductive
  ordering of the nodes of $G$. We consider the star covering $\{T_2,\ldots,T_n\}$
  where $T_i$ is the star with root $v_i$ and leaves all its neighbors in
  $\{v_1,\ldots,v_{i-1}\}$. Note that stars are created in the opposite direction of
  the inductive order; see Figure~\ref{fig:star-decomposition} for an example. This
  star decomposition has contention number $k+1$ since every node belongs to at most one
  star as a root and to at most $k$ stars as a leaf.
  \begin{figure}
  \centering

\begin{tikzpicture}[scale=0.75]
  
\node[circle, draw, fill=blue!20] (1) at (0,1) {\tiny $1$};
\node[circle, draw, fill=blue!20] (2) at (0,2) {\tiny $2$};
\node[circle, draw, fill=blue!20] (3) at (0,3) {\tiny $3$};
\node[circle, draw, fill=blue!20] (4) at (0,4) {\tiny $4$};
\node[circle, draw, fill=blue!20] (5) at (0,5) {\tiny $5$};
\node[circle, draw, fill=blue!20] (6) at (0,6) {\tiny $6$};

\path
(6) edge (5)
    edge[bend left] (4)
    edge[bend right] (3)
    edge[bend left] (1)
(5) edge (4)
    edge[bend left] (3)
(4) edge[bend left] (2)
(3) edge (2)
(2) edge (1)
;

\node at (1.2,3.5) {$=$};

\node[circle, draw, fill=blue!20] (61) at (2,1) {\tiny $1$};
\node[circle, draw, fill=blue!20] (63) at (2,3) {\tiny $3$};
\node[circle, draw, fill=blue!20] (64) at (2,4) {\tiny $4$};
\node[circle, draw, fill=blue!20] (65) at (2,5) {\tiny $5$};
\node[circle, draw, fill=blue!20] (66) at (2,6) {\tiny $6$};

\path
(66) edge (65)
    edge[bend left] (64)
    edge[bend right] (63)
    edge[bend left] (61);t

\node at (3.2,3.5) {$+$};

\node[circle, draw, fill=blue!20] (53) at (4,3) {\tiny $3$};
\node[circle, draw, fill=blue!20] (54) at (4,4) {\tiny $4$};
\node[circle, draw, fill=blue!20] (55) at (4,5) {\tiny $5$};

\path
(55) edge (54)
    edge[bend left] (53);

\node at (5.2,3.5) {$+$};

\node[circle, draw, fill=blue!20] (42) at (6,2) {\tiny $2$};
\node[circle, draw, fill=blue!20] (44) at (6,4) {\tiny $4$};

\path
(44) edge[bend left] (42);

\node at (7.2,3.5) {$+$};

\node[circle, draw, fill=blue!20] (32) at (8,2) {\tiny $2$};
\node[circle, draw, fill=blue!20] (33) at (8,3) {\tiny $3$};

\path
(33) edge (32);

\node at (9.2,3.5) {$+$};

\node[circle, draw, fill=blue!20] (21) at (10,1) {\tiny $1$};
\node[circle, draw, fill=blue!20] (22) at (10,2) {\tiny $2$};

\path
(22) edge (21);

\end{tikzpicture}
  
  \caption{The star decomposition used in Theorem~\ref{thm:degenerate} of a
    2-degenerate graph. The inductive order is upwards, while the stars are
    ``pointing'' downwards.}
  \label{fig:star-decomposition}
\end{figure}
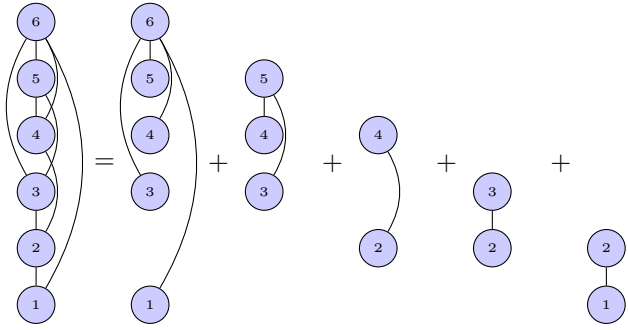
\end{proof}

\begin{corollary} \label{cor:planar}
  There exist truthful mechanisms with 
  approximation ratio at most 4 for trees, and generally of ratio at most $2k+2$ for graphs of treewidth $k.$
\end{corollary}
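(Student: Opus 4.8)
The plan is to derive both claims directly from Theorem~\ref{thm:degenerate}, which already furnishes a truthful $(2k+2)$-approximate mechanism for every $k$-degenerate graph; what remains is only to certify the correct degeneracy parameter for each of the two graph classes in the statement.

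First I would treat trees. A tree is $1$-degenerate: iteratively deleting a leaf produces a vertex ordering $v_1,\dots,v_n$ in which every $v_i$ has at most one neighbor among $\{v_{i+1},\dots,v_n\}$ (namely its parent, once the tree is rooted at $v_n$). Applying Theorem~\ref{thm:degenerate} with $k=1$ then yields a truthful mechanism of approximation ratio $2\cdot 1 + 2 = 4$.

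Next I would handle graphs of treewidth $k$. By the cited structural fact, every such graph is a subgraph of a $k$-tree, and a $k$-tree is $k$-degenerate essentially by definition (in the relevant ordering, each vertex other than the last $k$ has exactly $k$ later neighbors). It then suffices to note that any subgraph $H$ of a $k$-degenerate graph $G$ is again $k$-degenerate: restrict a degeneracy ordering of $G$ to $V(H)$, and observe that for every vertex the set of its later neighbors in $H$ is contained in the set of its later neighbors in $G$, hence has size at most $k$. Invoking Theorem~\ref{thm:degenerate} once more gives the claimed $(2k+2)$-approximation, and the tree bound of $4$ is recovered as the special case $k=1$.

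There is no genuine obstacle here: the corollary is a repackaging of Theorem~\ref{thm:degenerate} together with standard facts about degeneracy. The only point that needs even a sentence of justification is the closure of $k$-degeneracy under passing to subgraphs, which follows immediately from restricting the inductive ordering, since deleting vertices and edges can only decrease the number of later neighbors.
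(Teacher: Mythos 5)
Your proposal is correct and follows exactly the route the paper intends: the corollary is stated as an immediate consequence of Theorem~\ref{thm:degenerate} together with the facts (given in the surrounding text) that trees are $1$-degenerate and that graphs of treewidth $k$, being subgraphs of $k$-trees, are $k$-degenerate. Your added remark that degeneracy is inherited by subgraphs via restricting the ordering is the right (and only) detail needed to make that chain airtight.
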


\subsection{Lower Bounds for Graph Balancing }
\label{sec:lower-bounds}

In this subsection, we show corresponding negative results for the positive results
of the previous subsection. We first observe that the natural candidate mechanisms
for the Graph Balancing problem have very poor performance, in stark contrast to the
Hybrid Mechanism.

\begin{theorem} \label{thm:local} All local mechanisms for stars, including VCG,
  affine minimizers and task-independent mechanisms, have approximation ratio at
  least $\sqrt{m}=\sqrt{n-1}.$ %
\end{theorem}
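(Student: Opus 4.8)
I would argue by contradiction: suppose a local mechanism has approximation ratio $c<\sqrt m$, where $m=n-1$, and exhibit an instance on which it is more than $c$ from the optimum. The heart of the argument is cleanest for \emph{task‑independent} mechanisms (a class that contains VCG and affine minimizers with separable normalization terms). First, a one–task sub‑instance pins down the relevant threshold: if task $j$ has $\ell_j=1$ and all other tasks are null, weak monotonicity makes ``the root gets $j$'' equivalent to $r_j<\rho_j$ for some threshold $\rho_j$, and $c$‑approximation forces $\rho_j\ge 1/c$ — otherwise, taking $r_j$ just above $\rho_j$, the mechanism leaves $j$ on its leaf with load $1$ while the optimum puts $j$ on the root with load $r_j<1/c$, a ratio larger than $c$. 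For a task‑independent mechanism this localizes immediately: on the instance with $\ell_j=1$ and $r_j=\rho_j-\varepsilon$ for \emph{every} $j$, each task goes to the root, so the root's load is $\sum_j(\rho_j-\varepsilon)\ge m(1/c-\varepsilon)$, whereas the allocation sending every task to its own leaf has makespan $1$; letting $\varepsilon\to 0$ gives ratio $\ge m/c$, hence $c\ge\sqrt m=\sqrt{n-1}$.

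To reach an arbitrary local mechanism I would first extract a structural consequence of locality together with the approximation guarantee: \emph{whenever the root receives a set $S$ at a profile $(r,\ell)$, then $r(S)\le c\cdot\max_{j\in S}\ell_j$.} Indeed, starting from $(r,\ell)$, lower $\ell_j$ to $0$ one task at a time for each $j\notin S$; by weak monotonicity leaf $j$ retains task $j$, so by locality the whole allocation, and in particular the root's set $S$, is unchanged. On the resulting profile the mechanism's makespan equals $r(S)$ (all leaf loads are $0$), while the allocation sending every task to its own leaf has makespan $\max_{j\in S}\ell_j$, which yields the bound. Applied to the uniform profile $r_j=\ell_j=1$ this already says the root gets at most $c<\sqrt m$ tasks, so at least $m-c$ tasks sit on leaves with load $1$. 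The remaining plan is an adaptive version of the task‑independent argument: probe with the uniform profile, then use locality to freeze the current allocation while raising the leaf value of an off‑root task just past its ``escape threshold'' (bounded, relative to the current optimum, by a one‑task sub‑instance), thereby forcing that task onto the root; iterating and re‑invoking the structural constraint on the enlarged root set, one drives $\Omega(m)$ load onto the root while keeping the optimum $O(\sqrt m)$, contradicting $c<\sqrt m$ after optimizing the size threshold in the case split.

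The step I expect to be the main obstacle is precisely the one where task‑independence was used for free. For a general local mechanism, raising a single leaf value can trigger a cascade that moves \emph{other} tasks on or off the root, so one cannot reason task by task; the delicate point is to use locality (and the structural bound above) to control these cascades, guaranteeing that tasks already forced onto the root stay there and that the optimum can still route the tasks whose leaf values were raised cheaply — which requires those raised values, hence the escape thresholds, to stay $O(\mathrm{OPT})$ as the construction proceeds. Carefully tracking how many tasks were actively forced versus dragged along for free, and balancing this bookkeeping against the threshold in the case analysis, is what is needed to sharpen the constant to exactly $\sqrt{n-1}$.
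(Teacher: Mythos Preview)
Your task-independent argument is clean, and your structural lemma for local mechanisms (``if the root receives $S$ then $r(S)\le c\cdot\max_{j\in S}\ell_j$'') is correct and proved correctly. The gap is everything after that: the adaptive forcing scheme is the wrong turn, and the obstacle you flag---cascades when you raise a leaf value---is real and not something you need to confront at all.

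The paper's proof dispenses with iteration entirely. It uses a \emph{single} instance, $r_j=1/\sqrt{m}$ and $\ell_j=1$ for all $j$, and a two-case split. If the root takes every task, its load is $\sqrt{m}$ while sending all tasks to leaves gives makespan $1$, so the ratio is $\sqrt{m}$. Otherwise some task, say task $1$, sits on its leaf. Now apply weak monotonicity to the root (set $r'_j=0$ for $j\in S$ and $r'_j=1/\sqrt{m}+\epsilon$ for $j\notin S$): the root keeps exactly $S$, and by \emph{locality} the entire allocation is frozen. Then, one at a time, lower $\ell_j$ to $0$ for each $j\notin S$ with $j\neq 1$; weak monotonicity keeps task $j$ on leaf $j$, and locality again freezes everything else. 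In the final instance leaf $1$ still carries task $1$ with load $1$, while the optimum hands task $1$ to the root at cost $1/\sqrt{m}+\epsilon$ and everything else at cost $0$, giving ratio $\sqrt{m}$.

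Notice this second case is exactly the \emph{mirror image} of your own structural lemma: the same zeroing-out argument, run with the roles of root and leaf swapped, shows that whenever leaf $j$ holds task $j$ one has $\ell_j\le c\cdot r_j$. Combine your lemma with its mirror on the single instance $r_j=1/\sqrt{m}$, $\ell_j=1$: if the root takes all tasks, your inequality reads $\sqrt{m}\le c$; if some leaf $j$ keeps its task, the mirror inequality reads $1\le c/\sqrt{m}$. Either way $c\ge\sqrt{m}$, with no iteration, no thresholds to track, and no cascades to control. You were one symmetric observation away from a complete proof.
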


\begin{proof}
Consider the following input
$$t=\left(
\begin{array}{c c c c}
  \frac{1}{\sqrt{m}} & \frac{1}{\sqrt{m}} & \cdots & \frac{1}{\sqrt{m}}\\
  1 & \infty & \cdots &  \infty\\
  \infty & 1 & \cdots & \infty\\
  \infty  & \infty & \cdots & 1\\
\end{array}\right).$$

If, in the allocation of the mechanism, the root player takes all the tasks, then
this allocation has approximation $\sqrt{m}$, as the optimal allocation is to assign
the tasks to the leaves with makespan equal to 1. Otherwise, assume that (at least)
one of the tasks, is given to some other player, say w.l.o.g. task $1$ is given to
player $1.$ By a series of applications of Lemma~\ref{lemma:tool}, and by exploiting
the locality of the mechanism, we set the value of the owner of task $j$ to 0 for
every $j\neq 1.$

In particular, let $S$ be the set of tasks assigned to the root
player, and $M\setminus S$ be the tasks assigned to their respective leaf-player. %
Let
$t^1=(r',\ell_1,\ldots, \ell_m)$, with $r'$ defined as follows for some
arbitrarily small $\epsilon$.

$$r'_j=\left\{
\begin{array}{c c}
  0 & j\in S \\
  \frac{1}{\sqrt{m}}+\epsilon & \text{ otherwise. } \\
\end{array}\right.
$$

By applying Lemma~\ref{lemma:tool}, the root player receives again the set $S,$
 and therefore, the set $M\setminus S$ is assigned to the leaves. 
 We proceed by changing the bids of the leaf-players for the tasks in $M\setminus S$ to 0, i.e.,
 defining a sequence $t^j$ for $j\in M\setminus S$,
with $t^j=(r',\ell'_j=0,\ell^{j-1}_{-j})$ %

Again, by Lemma~\ref{lemma:tool} and by locality, we get that the allocation of the tasks remains the same for the leaf $j,$ \emph{and} 
for all the other players as well.

We end up with an instance $t'$ where player 1 still takes the first
task, while the rest of the tasks are assigned to a player with 0
processing time. For $t'$, the optimal makespan is $1/\sqrt{m}$, while
the mechanism achieves makespan equal to 1. 
We illustrate the case when $S=\emptyset,$ that is, the allocation  gives
all the tasks to the leaves of the star. %

$$t=\left(
\begin{array}{c c c c}
  \frac{1}{\sqrt{m}} & \frac{1}{\sqrt{m}} & \cdots & \frac{1}{\sqrt{m}}\\
  \circled{1} & \infty & \cdots &  \infty\\
  \infty & \circled{1} & \cdots & \infty\\
  \infty  & \infty & \cdots & \circled{1}\\
\end{array}\right) \rightarrow t'=\left(
\begin{array}{c c c c}
  \frac{1}{\sqrt{m}} & \frac{1}{\sqrt{m}} & \cdots & \frac{1}{\sqrt{m}}\\
  \circled{1} & \infty & \cdots &  \infty\\
  \infty & \circled{0} & \cdots & \infty\\
  \infty  & \infty & \cdots & \circled{0}\\
\end{array}\right)$$

\end{proof}

In the previous subsection, we showed that the Hybrid Mechanism outperforms all known
mechanisms and has approximation ratio at most 2. The next theorem shows that this
ratio is the best possible among all possible mechanisms for stars.

\begin{theorem} \label{thm:lower-bound-stars} There is no deterministic mechanism for
  stars that can achieve an approximation ratio better than 2.
\end{theorem}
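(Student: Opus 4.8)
The plan is to argue by contradiction: assume some deterministic truthful mechanism $\mech$ has approximation ratio $\rho<2$ on all stars, and force a contradiction by confronting $\mech$ with a short, carefully chosen family of bid profiles on a suitable star. Two structural facts make this feasible. Truthfulness makes the allocation rule weakly monotone (Definition~\ref{def:wmon}), so Lemma~\ref{lemma:tool} applies; and on a star we have \emph{domain locality} --- an edge the root is not given goes to its leaf --- so the entire allocation, and hence every machine's load, is determined by the root's set $S$. This lets me move the root's bids freely (via Lemma~\ref{lemma:tool}) while still tracking the makespan exactly.

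The first step is to fix $\mech$ on an \emph{anchor} profile on which it has no room to manoeuvre. A natural candidate is a ``balanced'' instance whose optimum gives the root exactly one edge while a leaf carries an equal load: take $r=(1,1,\dots,1)$ and $\ell=(c,1,\dots,1)$ with $c>\rho$. Then $\mathrm{OPT}=1$ (the root takes edge $1$ and every other edge sits on its leaf), while any other allocation has makespan at least $\min(2,c)>\rho=\rho\cdot\mathrm{OPT}$ --- taking a second edge costs the root $\ge 2$, and any allocation not containing edge $1$ in the root's set puts edge $1$ on leaf $1$ at cost $\ge c$. Hence $\mech$ must output exactly $S=\{1\}$ here. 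From such an anchor I would walk to further profiles, at each step changing only one player's bid in a way certified by Lemma~\ref{lemma:tool} (for the root) or by the single-parameter monotonicity of a leaf --- concretely, the threshold $\theta_j$ below which leaf $j$ keeps edge $j$, which is monotone in the root's bid $r_j$ --- so that the part of the allocation already pinned is carried forward unchanged. The target is a profile on which the forced set makes the mechanism's makespan at least $2\,\mathrm{OPT}$, the desired contradiction; if one anchor is not enough, I would chain several, each link feeding its forced allocation into the analysis of the next.

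I expect the crux to be making the chain actually close, because a star is forgiving: an edge the root declines simply reverts to its leaf, which typically pins the mechanism's makespan and $\mathrm{OPT}$ to nearly the same value, so a naive perturbation of the root is absorbed harmlessly. Getting around this requires engineering the perturbations --- and the leaf thresholds --- so that at the profile where we want to force a particular set $S$, \emph{every} alternative allocation is either already more than $2\,\mathrm{OPT}$ at that profile (ruled out by the ratio assumption) or incompatible, via weak monotonicity, with an allocation forced earlier (ruled out by truthfulness), leaving $\mech$ cornered. Balancing these two kinds of constraints so that no escape route for $\mech$ survives is where the real work lies. As with all the lower bounds in this paper, the argument is information-theoretic and so bounds $\mech$ irrespective of its running time.
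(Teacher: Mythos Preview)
Your proposal is not a proof but a plan, and it explicitly leaves the decisive step undone: you say yourself that ``making the chain actually close\ldots\ is where the real work lies.'' Nothing after the anchor profile is carried out. Worse, the particular anchor you chose does not seem to lead anywhere. From $r=(1,\dots,1)$, $\ell=(c,1,\dots,1)$ with forced $S=\{1\}$, the only move Lemma~\ref{lemma:tool} gives you for the root is to lower $r_1$ and raise the other $r_j$'s; but then the root still holds only task~$1$, the leaves $2,\dots,m$ still sit at cost~$1$, and both the mechanism's makespan and $\mathrm{OPT}$ stay pinned near~$1$. Perturbing a leaf $j\ge 2$ downward (the direction monotonicity controls) is equally harmless. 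There is no visible second profile at which the forced $S=\{1\}$ costs the mechanism a factor approaching~$2$, and you do not supply one.

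The paper's proof avoids chaining altogether. It takes a single \emph{geometric} instance $r_j=a^{j-1}$, $\ell_j=a^j$, and makes one case split. If the root gets everything, the ratio is $\sum_j a^{j-1}/a^{m-1}\to a/(a-1)$; otherwise, let $k$ be the largest index of a task given to a leaf, zero out the root's bids on the tasks it holds, and nudge up the rest. Lemma~\ref{lemma:tool} freezes the root's set, so leaf $k$ still carries cost $a^k$ while $\mathrm{OPT}$ drops to $a^{k-1}$, giving ratio $a$. Taking $a=2$ yields a lower bound of $2-2^{-(m-1)}\to 2$. The point you are missing is not a longer chain but the right \emph{single} instance: one whose scale separates adjacent tasks multiplicatively, so that pushing one task onto its leaf and collapsing the root's remaining load to zero produces the factor-$a$ gap in one shot.
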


This is a special case of a more general lower bound for the $L^p$-norm objective
(Theorem~\ref{thm:lower-bound-stars-lp}), but we give the proof here anyway, since it
will be an ingredient of the proof of the following theorem
(Theorem~\ref{thm:phi+1}).

\begin{proof}
  Let's assume that the mechanism takes an input where the processing
  time of the root player is $r_j =a^{j-1}$, for each task $j$, where $a>1$ is a parameter,
  and the
  processing time of the corresponding leaf player for task $j$ is
  $\ell_j=a^j$, as also shown in the following table.

$$t=\left(
\begin{array}{c c c c c}
  1 & a & \cdots & a^{m-2} & a^{m-1}\\
  a & \infty & \cdots &  \infty & \infty\\
  \infty & a^2 & \cdots & \infty & \infty\\
  \vdots  & \vdots & \vdots & \vdots & \vdots\\
  \infty  & \infty & \cdots & a^{m-1} & \infty\\
  \infty  & \infty & \cdots & \infty & a^m \\
\end{array}\right)$$

If the mechanism assigns all tasks to the root player, then the
makespan for this input is $(a^m-1)/(a-1)$, while the optimal makespan is
$a^{m-1}$, which yields a ratio of $(a^m-1)/((a-1)a^{m-1})$.

Otherwise, let $X$ be the nonempty set of tasks assigned to the leaf players. Let
$k$ be the task with the maximum index in $X$. Since it is processed by the leaf
player, its processing time is $a^k$. Now consider the input in which we change
the processing times of the root player to

$$r'_j=
\begin{cases}
  0 & j\not\in X \\
  r_j+\epsilon & \text{ otherwise} 
\end{cases}
$$
for some arbitrarily small $\epsilon>0$. By weak monotonicity
(Lemma~\ref{lemma:tool}), the set of tasks assigned to the root player remains the
same, and as a result the whole allocation stays the same. Therefore task $k$ is
still assigned to the leaf player $k$ and the makespan of the mechanism is at least
$a^k$. Notice that the optimum allocation for this input is $a^{k-1}+\epsilon$ which
yields an approximation ratio of $a$, as $\epsilon$ tends to $0$.

In conclusion, the approximation ratio is $\min\{(a^m-1)/((a-1)a^{m-1}),a\}$, for
every $a>1$. By choosing $a=2$, we see that the ratio is $2-1/2^{m-1}$, which shows
that for the class of stars no mechanism can have approximation ratio better than
$2$. For fixed $m$, the lower bound is slightly better than $2-1/2^{m-1}$, by
selecting $a$ to be the positive root of the equation $(a^m-1)/((a-1)a^{m-1})=a$.
\end{proof}

We now show how to extend the previous result to get a lower bound of
$1+\varphi\approx 2.618$ for trees, and thus for graphs. This matches the best lower bound for the Nisan-Ronen setting~\cite{KV07} that was known until the recent improvements~\cite{GiannakopoulosH20, DS20, CKK20b}, suggesting that studying the special case of scheduling in graphs may be useful in attacking the Nisan-Ronen conjecture.

\begin{figure}
  \centering
  \begin{minipage}[c]{0.4\linewidth}
  \begin{tikzpicture}[scale=0.8,%
  cnode/.style = {circle, draw, text centered, node distance=3cm, fill=blue!20},
  dummy/.style = {distance=6cm}]
  \node [cnode] {\small $0$}
    child { node [cnode] {\small $1$} }
    child { node [cnode] (2) {\small $2$} }
    child { node [dummy] {\small $\cdots$} }
    child { node [cnode] (n) {\small $m$} }
  ;

\end{tikzpicture}
\end{minipage}
\begin{minipage}[c]{0.4\linewidth}
  \begin{tikzpicture}[scale=0.8,%
  cnode/.style = {circle, draw, text centered, node distance=3cm, fill=blue!20},
  dummy/.style = {distance=3cm}]
  \node [cnode] {\small $\overline 0$}
    child { node [cnode] {\small $0$} 
      child { node [cnode] {\small $1$}
        child { node [cnode] {\small $\overline 1$} }}
      child { node [cnode] {\small $2$}
        child { node [cnode] {\small $\overline 2$} }}     
      child { node [dummy] {\small $\cdots$}}
      child { node [cnode] {\small $k$}
        child { node [cnode] {\small $\overline k$} }}
    }
  ;

\end{tikzpicture}
\end{minipage}

  \caption{A star with root $0$ and leaves $1,\ldots,m$ and its extension
    to a tree with dummy nodes}
  \label{fig:star}
\end{figure}
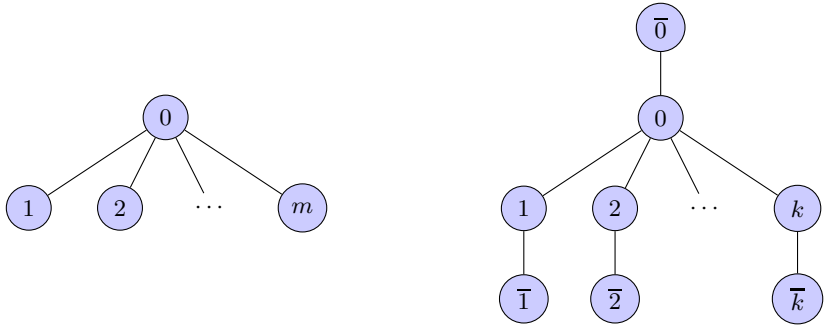

\begin{theorem} \label{thm:phi+1}
  No mechanism for trees can achieve approximation ratio $1+\varphi \approx 2.618$.
\end{theorem}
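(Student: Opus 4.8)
The plan is to bootstrap from the star lower bound of Theorem~\ref{thm:lower-bound-stars}: run that construction with its parameter set to $a=\varphi$, and use the pendant dummy nodes of the tree in Figure~\ref{fig:star} to buy one extra factor. Throughout we use $\varphi^2=\varphi+1$, equivalently $\varphi-1=1/\varphi$, so that $\varphi/(\varphi-1)=\varphi^2=1+\varphi$. Concretely, keep the star of Theorem~\ref{thm:lower-bound-stars} with $a=\varphi$: the root $0$ bids $r_j=\varphi^{\,j-1}$ on edge $(0,j)$ and leaf $j$ bids $\ell_j=\varphi^{\,j}$, for $j=1,\dots,m$. Embed this star into the tree $G$ of Figure~\ref{fig:star} by hanging a pendant node $\overline j$ off every leaf $j$ (and a pendant node $\overline 0$ off $0$). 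On the pendant edge $e_j=(j,\overline j)$ let the internal node $j$ bid $\mu_j=\varphi^{\,j-1}$ and let the dummy $\overline j$ bid $\beta_j$, which I keep \emph{dormant}, i.e.\ small enough (say $\beta_j\le\varphi^{\,j-1}\le\varphi^{\,m-1}$) that the optimum can freely send each $e_j$ to $\overline j$; the edge $e_0=(\overline 0,0)$ gets tiny bids so node $0$ can always be left with a negligible load there, and it plays no essential role in the two cases below. One checks $OPT(G)=\varphi^{\,m-1}$ (send $(0,m)$ to $0$, every other $(0,j)$ to leaf $j$, every $e_j$ to $\overline j$, $e_0$ to $\overline 0$; conversely the pair $(0,m-1),(0,m)$ already forces $OPT(G)\ge\varphi^{\,m-1}$).

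\emph{Case A: the mechanism gives the root all the star edges $(0,1),\dots,(0,m)$.} Then the root's load is at least $\sum_{j=1}^{m}\varphi^{\,j-1}=\frac{\varphi^{\,m}-1}{\varphi-1}=\varphi^{\,m+1}-\varphi$, so the ratio is at least $(\varphi^{\,m+1}-\varphi)/\varphi^{\,m-1}=\varphi^2-\varphi^{\,2-m}$, which tends to $\varphi^2=1+\varphi$ as $m\to\infty$. The only non-routine point is verifying that the pendant edges stay dormant, so that $OPT(G)$ remains $\varphi^{\,m-1}$ and does not inflate to $\Theta(\varphi^{\,m})$; this is exactly what would happen if the $e_j$'s were forced onto the internal nodes, and is why the dummies are kept small in the base instance.

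\emph{Case B: some star edge is assigned to its leaf; let $k$ be the largest such index.} First, exactly as in the proof of Theorem~\ref{thm:lower-bound-stars}, lower the root's bids on the star edges it holds to $0$ and raise the remaining ones by $\varepsilon$; by weak monotonicity applied to the root (Lemma~\ref{lemma:tool}) the root keeps the same set, so $(0,k)$ stays on leaf $k$ and contributes $\ell_k=\varphi^{\,k}$, while the optimum can still offload $(0,k)$ to the root at cost $\varphi^{\,k-1}+\varepsilon$, keeping the optimum at $\varphi^{\,k-1}(1+o(1))$. Now \emph{activate} the pendant edge $e_k$: raise $\beta_k$ above $(1+\varphi)\varphi^{\,k-1}$; then either the mechanism leaves $e_k$ on $\overline k$, in which case its ratio is already at least $\beta_k/\varphi^{\,k-1}\ge 1+\varphi$, or $e_k$ lands on leaf $k$. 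In the latter case, since $\mu_k=\varphi^{\,k-1}$, leaf $k$ carries at least $\varphi^{\,k}+\varphi^{\,k-1}=\varphi^{\,k+1}$, whereas the optimum is still $\varphi^{\,k-1}(1+o(1))$ (offload $(0,k)$ to the root, put $e_k$ on leaf $k$ at cost $\mu_k$), so the ratio is at least $\varphi^{\,k+1}/(\varphi^{\,k-1}+\varepsilon)\to\varphi^2=1+\varphi$. The arithmetic matches Case A precisely because with $a=\varphi$ we have $\varphi^2-\varphi=\varphi(\varphi-1)=1$, so adding one ``task level'' of load $\mu_k=\varphi^{\,k-1}$ to the stuck leaf upgrades its ratio from $a=\varphi$ to $\varphi^2$.

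The main obstacle is the bookkeeping in Case B: when $\beta_k$ is raised to activate $e_k$, the whole allocation may shift, and in particular the root might now swallow $(0,k)$; since a Graph Balancing mechanism need not be local in the sense of Definition~\ref{def:local} (we only have the ``domain locality'' that a task declined by one endpoint falls on the other), re-establishing that leaf $k$ simultaneously holds $(0,k)$ \emph{and} $e_k$ — while keeping the optimum free to offload both cheaply — requires a careful ordering of the weak-monotonicity moves, and, if a shift does occur, a re-examination of which leaf is stuck. Pinning this down, together with the dormancy check of Case A, is where the real work lies; the global balance point, $a=\varphi$ giving $\varphi/(\varphi-1)=\varphi^2=1+\varphi$ in both cases, is then forced.
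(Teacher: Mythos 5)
Your Case A is fine (and in fact slightly leaner than the paper's, which also perturbs the root's pendant edge; both yield $a/(a-1)=\varphi^2$ in the limit). The genuine gap is exactly the one you flag at the end of Case B, and it is not a bookkeeping detail but the crux of the theorem. You activate the pendant edge $e_k$ by raising $\beta_k$, i.e.\ by changing the bid of the \emph{dummy player} $\overline k$. Weak monotonicity constrains only the allocation of the player whose bid changed, and the mechanism is not assumed local, so after raising $\beta_k$ you have no control whatsoever over whether leaf $k$ still holds the star edge $(0,k)$: the root may swallow it, and then subcase B2's load of $\varphi^{k}+\varphi^{k-1}$ on leaf $k$ never materializes (you would be left with ratio close to $1$, not $1+\varphi$). ``Re-examining which leaf is stuck'' does not rescue this, because the root's bids have already been zeroed out on its set $S$, so there is no clean way to restart the induction on a new index. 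A secondary slip: ``dormant'' must mean $\beta_j=0$ (or $O(\epsilon)$), not $\beta_j\le\varphi^{j-1}$; with $\beta_m$ as large as $\varphi^{m-1}$ and $\mu_m=\varphi^{m-1}$, the edge $e_m$ alone forces $OPT\ge\varphi^{m-1}$ in Case B, destroying the bound $OPT\approx\varphi^{k-1}$.

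The paper closes this by perturbing the \emph{leaf's own} bids instead of the dummy's. Keep $\beta_j=H\gg 1$ for all dummy nodes throughout, and start with $\mu_j=0$; then in Case B, after the root-side perturbation, raise leaf $k$'s bid on its pendant edge from $0$ to $\varphi^{k-1}$ while simultaneously lowering $\ell_k$ to $\varphi^{k}-\epsilon$. This is a single-player change, so weak monotonicity applies to leaf $k$ directly: it rules out leaf $k$ keeping the pendant edge while dropping $(0,k)$, and any outcome in which leaf $k$ drops the pendant edge sends it to $\overline k$ at cost $H$, which already gives an unbounded ratio. Hence either the ratio explodes or leaf $k$ holds both tasks with load $\varphi^{k}+\varphi^{k-1}-\epsilon=\varphi^{k+1}-\epsilon$ against $OPT\approx\varphi^{k-1}$, giving $a+1=\varphi+1$. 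Your identity $\varphi/(\varphi-1)=\varphi^2=a+1$ balancing the two cases is the right one; the missing ingredient is routing the final perturbation through the player whose allocation you need to pin down.
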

\begin{proof}
  The proof mimics the proof of Theorem~\ref{thm:lower-bound-stars} on the tree shown
  in Figure~\ref{fig:star}. The tree consists of a star with root $0$ and leaves
  $1,\ldots, k$ in which we add a new node $\overline v$ for each node $v$
  of the star and connect it to $v$. These new nodes (players), which we call dummy
  will not be assigned any task by any efficient mechanism since we set their processing
  times to an arbitrarily high value $H$.

  The processing times of the edges of the star are exactly the same as in the proof
  of Theorem~\ref{thm:lower-bound-stars}: $r_j=a^{j-1}$ and
  $\ell_j=a^{j}$, for some $a>1$. The processing times for all edges are given
  below:
  \begin{align*}
    r_j&=a^{j-1} & \ell_j&=a^j&j&=1,\ldots,k \\
    \overline r&=0 &   \overline \ell_j&=0 & &
  \end{align*}
  where $\overline r$ and $\overline \ell_j$ are the processing times of the star vertices of their respective dummy tasks. The dummy nodes themselves have a very large processing time $H\gg 1\,$ on these tasks.
  
  We consider two cases. In the first case, all tasks of the star are assigned to the
  root player $0.$ We then consider a new instance in which we slightly lower the
  processing time of the root on the tasks of the star (i.e.,
  $r_j=a^{j-1}-\epsilon$ for some $\epsilon>0$) and increase the processing
  time of its dummy task $\overline r=a^k$. By weak monotonicity
  (Lemma~\ref{lemma:tool}), the $r$-player  will take this task and all tasks of the star
  with a total processing time slightly less than
  $1+a+\ldots+a^k=(a^{k+1}-1)/(a-1)$. It is easy to see that the optimal allocation
  for this instance is $a^k$, and the approximation ratio $(a^{k+1}-1)/((a-1)a^k)$.

  In the second case, at least one task of the star is allocated to a leaf. Let $p$
  be the star task allocated to a leaf with the maximum index (that is, task $p$ of
  the star is allocated to leaf-player $p$ and tasks $p+1,\ldots,k$ are allocated to
  the root). We consider the instance in which we change the processing times of the
  root player as follows: all processing times of the tasks allocated to the root
  become $0$ and all processing times of the root player for the remaining tasks
  increase slightly. By weak monotonicity (Lemma~\ref{lemma:tool}), the $r$-player
  will still get the same set of tasks. We now create a new instance by increasing
  the processing time of the $p$-th dummy task: $\overline \ell_p=a^{p-1}$ and
  slightly decreasing the processing time of the leaf $p$ for its task in the star:
  $\ell_p=a^p-\epsilon$, for some $\epsilon>0$. Then again by weak monotonicity
  (Lemma~\ref{lemma:tool}), player $p$ will get these two tasks. Although the
  allocation of the other tasks may change, the cost for the mechanism is at least
  $a^p+a^{p-1}-\epsilon$, while the optimal allocation has cost $a^{p-1}$. Therefore,
  in this case the mechanism has approximation ratio $(a^p+a^{p-1})/a^{p-1}=a+1$, as
  $\epsilon\rightarrow 0$.

  In any case, the mechanism has approximation ratio
  $\min\{((a^{k+1}-1)/((a-1)a^k),a+1\}$. By selecting $a=\varphi$, we get a ratio
  at least $1+\varphi$ (as $k\rightarrow \infty$).
  \end{proof}

Closing the gap between the above lower bound 2.618 of Theorem~\ref{thm:phi+1} and
the upper bound 4 (Corollary~\ref{cor:planar}) for mechanisms for trees is a crisp
intriguing question.

\section{Hypergraphs}
\label{sec:hypergraphs}

In this section we consider the generalization of some results of the
previous section from stars to hyperstars. A hyperstar is a hypergaph,
with a center consisting of $k$ different $r$-players (players
$1,2,\ldots ,k$ with bids $(r_{ij})_{k\times m}$), where each of them
can process all tasks, and a set of leaves (players
$k+1,k+2,\ldots,k+m$ with bids $(\ell_1,\ell_2,\ldots \ell_m)$), one
leaf per task. Each task can be allocated to any $r$-player and to its
associated leaf player. Note that the $r$-players without the leaves
form a classic input for unrelated scheduling mechanisms with $k$
players and $m$-tasks. The Hybrid Mechanism for this setting is
defined in Definition~\ref{def:hybrid}. Here we give the definition again for the special case when $g_T(\ell)=\max_{j\notin T} \ell_i,$ and prove that in this case the Hybrid Mechanism is $k+1$-approximative. In Section~\ref{sec:hybrid} we prove the truthfulness of the Hybrid Mechanism for stars and hyperstars for more general $g_T(\ell)$ functions, implying truthfulness in this special case (see Corollary~\ref{cor:sufficient}). 
 \begin{definition}[Hybrid (Max-)Mechanism for hyperstars]
   Let
 $$S\in \arg\min_T\bigg\{\bigg(\min_{x^T}\sum_{i=1}^k\lambda_i \,r_i\cdot x^T_i\bigg)  +\max_{j\notin T} \ell_i\bigg\},$$
 where the $\lambda_i$ are arbitrary non-negative real numbers, 
 the $x^T$ all possible characteristic matrices for allocations of tasks
 from $T$ and ties are broken by a fixed order over the
 allocations.  Assign $S$ to the root, and the rest of the items
 to the leaves.
 \end{definition}

The next theorem provides a general approximation ratio of the Hybrid Mechanism that
generalizes some results of the previous subsection.

\begin{theorem}[Theorem 2]%
  For the hyperstar scheduling problem, the Hybrid Mechanism with
  $g_T(\ell)=\max_{j\notin T} \ell_i,$ and with $\lambda_i=1$, for every $i$, is $(k+1)$-approximate.
\end{theorem}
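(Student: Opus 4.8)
The plan is to mimic the proof of the star case (the Lemma after Definition~\ref{def:maxmech}) but account for the fact that the $k$ root players' cost, when they receive task set $T$, is not $r(T)$ but rather the \emph{makespan} of the best allocation of $T$ among the $k$ of them, which can be as much as $k$ times their average load. Let $\mech$ denote the set $S$ chosen by the mechanism and $\scmax$ the quantity $\min_{x^S}\sum_{i=1}^k r_i\cdot x^S_i + \max_{j\notin S}\ell_j$ it minimizes. Let $OPT$ be the optimal makespan and let $S^*$ be the set of tasks the optimal allocation gives to the root players, with $x^{S^*}$ the optimal sub-allocation of $S^*$ among them. First I would observe the single chain of inequalities: the mechanism's makespan on the root side is at most $\sum_{i=1}^k r_i\cdot x^S_i$ (the total load dominates any single machine's load), and its makespan on the leaf side is $\max_{j\notin S}\ell_j$; hence $ALG \le \sum_{i=1}^k r_i\cdot x^S_i + \max_{j\notin S}\ell_j = \scmax \le \sum_{i=1}^k r_i\cdot x^{S^*}_i + \max_{j\notin S^*}\ell_j$, where the last step uses optimality of $S$ (with the $\lambda_i=1$ choice) and that $x^{S^*}$ is \emph{some} allocation of $S^*$, not necessarily the minimizing one.

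Next I would bound the right-hand side by $(k+1)\,OPT$. Each term $r_i\cdot x^{S^*}_i$ is the load of root player $i$ in the optimal allocation, hence is at most $OPT$; summing over $i=1,\dots,k$ gives at most $k\cdot OPT$. The term $\max_{j\notin S^*}\ell_j$: each such task $j$ is, in the optimal allocation, assigned to its leaf player (since it is not given to a root), so $\ell_j \le OPT$, and therefore the max is at most $OPT$. Adding, $\sum_{i=1}^k r_i\cdot x^{S^*}_i + \max_{j\notin S^*}\ell_j \le k\cdot OPT + OPT = (k+1)\,OPT$, which chained with the previous display yields $ALG \le (k+1)\,OPT$.

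Truthfulness is not re-proved here; I would just cite the general treatment in Section~\ref{sec:hybrid} (via Corollary~\ref{cor:sufficient}) as the paper already indicates, noting that $-\max_{j\notin T}\ell_j$ serves as the payment term for the leaf side and the $\min_{x^T}\sum_i \lambda_i r_i\cdot x^T_i$ term is a VCG/affine-minimizer-style term for the root side. The one subtlety worth a sentence is that the bound on the root side being $\sum_i r_i \cdot x^S_i$ (summing all $k$ loads) is exactly where the factor $k$ (rather than $1$) enters compared to the star case — this is the crux and the only place the argument genuinely differs from the $n=m+1$ star lemma; everything else is the same ``replace $\max$ of a sum of two nonneg terms by twice the max'' trick, here generalized to ``a sum of $k+1$ nonneg terms each $\le OPT$''. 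I do not expect a real obstacle; the main thing to get right is keeping straight that $x^{S^*}$ need not minimize $\sum_i r_i\cdot x^{S^*}_i$, but that is harmless since we only need an upper bound and we already know each individual load there is $\le OPT$.
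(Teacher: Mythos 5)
Your proof is correct and follows essentially the same route as the paper's: compare the mechanism's objective value at its minimizer to its value at the optimal pair $(S^*,x^{S^*})$, bound the root-side sum by $k\cdot OPT$ and the leaf term by $OPT$. The only cosmetic difference is that the paper bounds the root-side term via the standard fact that VCG is $k$-approximate (an averaging argument giving $VCG(S^*)\leq k\cdot OPT(S^*)$), whereas you substitute the optimal sub-allocation $x^{S^*}$ directly and bound each of the $k$ loads by $OPT$; both steps are valid and interchangeable.
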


\begin{proof}[Proof of Theorem~\ref{thm:hyperstars}] Note that with all $\lambda_i=1,$ we obtain the VCG mechanism on the root players and on the tasks of $T$ in the first summand of the definition, that is, the mechanism giving every task (of $T$) to the player with minimum processing time. It is known that VCG is $k$-approximate:
VCG yields at most the sum of all minimum processing times over all tasks as makespan, and OPT is at least the $1/k$ fraction of this sum. Now, assume that an optimal allocation gives the set  $S^*$
to the root players, and $M\setminus S^*$ to the leaves. Then we have 

\begin{eqnarray*}
ALG  &\leq & \min_{T\subseteq M}\{VCG(T)+\max_{i\not \in  T}\ell_{i}\}\\
    &\leq & VCG(S^*)+\max_{i\not \in  S^*}\ell_{i}\\
    &\leq & k\cdot OPT(S^*)+\max_{i\not \in  S^*}\ell_{i}\\
&\leq & (k+1)\cdot\max\{ OPT(S^*),\max_{i\not \in  S^*}\ell_{i}\}\\
&=& (k+1) OPT,
\end{eqnarray*}
where $VCG(S^*)$ and $OPT(S^*)$ refer only to the root players.
\end{proof}

\section{Mechanisms for $L^p$-norm optimization}
\label{sec:lp-norm}

In this section we generalize some of the results of
Section~\ref{sec:scheduling} where the objective is to minimize $L^p$-norm of the
values of the agents, i.e.,
\begin{align}
  \min_X \bigg(\sum_{i=1}^n t_i(X)^p\bigg)^{1/p}.
\end{align}
The makespan scheduling problem is the special case of $p=\infty$. We consider all
positive values of $p$, but we deal separately with the case $p\geq 1$, in which the
$L^p$ is a proper norm, and the case $p\in(0,1)$, where the $L^p$ function is not
subadditive (i.e., the triangle inequality does not hold). We also consider the \emph{maximization} case, which for $p=1$
corresponds to auctions.
 Some of the proofs in this section are more general than the corresponding proofs
in Section~\ref{sec:scheduling}, but similar. Nevertheless, we provide
them for completeness.

\subsection{The Hybrid Mechanism for Minimizing the $L^p$-Norm Objective}
\label{sec:stars-lp}

Consider an instance of the Unrelated Graph Balancing problem on a
star of $n$ nodes and set of tasks $M$. Notice that for stars the
objective of minimizing the $L^p$-norm corresponds to minimizing
$(r(T)^p+\sum_{i\not\in T} \ell_i^p)^\frac{1}{p}$ over all task sets
$T\subseteq M$ given to the $r$-player.

\begin{definition}[Hybrid $L^p$ Mechanism for stars]
  For a given $0< p\leq \infty,$ and an instance of the Unrelated Graph Balancing
  problem on a star of $n$ nodes and set $\,M\,$ of tasks, let
  \begin{align} \label{eq:max2}
    S\in \argmin_{T\subseteq M}\bigg\{r(T)+\bigg(\sum_{i\not \in  T} \ell^p_i\bigg)^{1/p}\bigg\}.
  \end{align}
  The mechanism assigns $S$ to the root and the remaining tasks to leaves. Ties are
  broken in a deterministic way (e.g., lexicographically).
\end{definition}

The argmin expression that defines the Hybrid $L^p$ Mechanism coincides with the VCG
mechanism for $p=1$ and with the Hybrid Mechanism of Section~\ref{sec:scheduling} for
$p\rightarrow\infty$. As it is shown in Section~\ref{sec:hypergraphs} (Corollary~\ref{cor:sufficient}), the Hybrid $L^p$ mechanism is truthful. The reason that leaf players have no incentive to lie comes from the fact that the expression in \eqref{eq:max2} is monotone in $\ell_i$. And the reason that the root  player has no incentive to lie comes from interpreting $-(\sum_{i\not \in  T} \ell^p_i)^{1/p}$ as its payments.

\begin{lemma}
The Hybrid $L^p$ Mechanism for stars is truthful.
\end{lemma}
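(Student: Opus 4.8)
The plan is to reduce truthfulness to verifying Weak Monotonicity (Definition~\ref{def:wmon}) separately for the root player and for each leaf player: the root's type space is $\mathbb{R}_{\geq 0}^m$ and each leaf's is $\mathbb{R}_{\geq 0}$, both convex, so WMON is sufficient for truthfulness by~\cite{SY05}. Throughout write $F(T)=r(T)+\big(\sum_{j\notin T}\ell_j^p\big)^{1/p}$ for the objective the mechanism minimizes over $T\subseteq M$, and $h(T)=\big(\sum_{j\notin T}\ell_j^p\big)^{1/p}$ for the part of it that does not involve the root's bid.

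For the root, fix $\ell$ and compare two inputs $t=(r,\ell)$ and $t'=(r',\ell)$, with root-allocations $S$ and $S'$. Since $S$ minimizes $r(\cdot)+h(\cdot)$ and $S'$ minimizes $r'(\cdot)+h(\cdot)$, we have $r(S)+h(S)\le r(S')+h(S')$ and $r'(S')+h(S')\le r'(S)+h(S)$; adding these and cancelling $h(S)+h(S')$ gives $r(S)-r(S')\le r'(S)-r'(S')$, which is exactly WMON for the root. (Equivalently, declaring the root's payment to be $-h(S)$ makes its ``report-independent'' utility equal to $-F(S)=-\min_T F(T)$, which truthful reporting maximizes — this is the ``$-(\sum_{i\notin T}\ell_i^p)^{1/p}$ as payments'' interpretation.)

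For a leaf $i$, fix $r$ and $\ell_{-i}$ and track how the allocation of task $i$ depends on $\ell_i$; the leaf gets task $i$ exactly when $i\notin S$. Split the task sets into those with $i\in T$ — for which $F(T)$ does not involve $\ell_i$, so $A:=\min_{T\ni i}F(T)$ is a constant — and those with $i\notin T$, for which $F(T)=r(T)+\big(\ell_i^p+\sum_{j\notin T,\,j\neq i}\ell_j^p\big)^{1/p}$ is a strictly increasing, continuous function of $\ell_i$ (strictness holds for every $p>0$ because $x\mapsto x^p$ is strictly increasing on $[0,\infty)$, so the argument is unaffected by $p\in(0,1)$ where $L^p$ is not a norm). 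Hence $B(\ell_i):=\min_{T\not\ni i}F(T)$, a minimum of finitely many such functions, is itself strictly increasing and continuous with $B(\ell_i)\ge \ell_i\to\infty$. When $B(\ell_i)<A$ every minimizer of $F$ omits $i$, so the leaf gets the task; when $B(\ell_i)>A$ every minimizer contains $i$, so it does not. Thus there is a threshold $\theta_i=\theta_i(r,\ell_{-i})\in[0,\infty)$ such that the leaf receives task $i$ iff $\ell_i<\theta_i$, or $\ell_i=\theta_i$ and the fixed tie-break picks a set omitting $i$.

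From this threshold structure the four cases of WMON for leaf $i$ are immediate: if the leaf gets the task under both $(\ell_i,\ell_{-i})$ and $(\ell'_i,\ell_{-i})$, or under neither, all quantities in the WMON inequality coincide; if it gets the task only at $\ell_i$ then $\ell_i\le\theta_i<\ell'_i$, which is precisely what WMON requires there, and symmetrically if only at $\ell'_i$. (Equivalently one can exhibit the Archer--Tardos payment: pay the leaf $\theta_i$ when it is assigned task $i$ and $0$ otherwise; since $\theta_i$ does not depend on $\ell_i$, truthful reporting maximizes the leaf's utility.) The only mildly delicate point is making the behavior at $\ell_i=\theta_i$ consistent, and this is harmless: lexicographic tie-breaking on subsets of $M$ does not depend on the reported values, and two distinct reports $\ell_i\neq\ell'_i$ cannot both equal $\theta_i$, so the boundary case never affects a WMON comparison.
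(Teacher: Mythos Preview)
Your proof is correct and follows essentially the same approach as the paper. The paper states the lemma without proof in Section~\ref{sec:stars-lp} and defers to the general Hybrid Mechanism framework (Lemma~\ref{lem:charstar} and Corollary~\ref{cor:sufficient}); there the root is handled exactly as you do (affine minimizer with constants $g_T(\ell)$, equivalently payments $-h(S)$), and the leaves are handled by the same critical-value/threshold argument: the paper's $\psi_i$ is your $\theta_i$, and condition~(c) of Lemma~\ref{lem:charstar}---that $g_T(\ell)$ is increasing in $\ell_i$ for $i\notin T$ and independent of $\ell_i$ for $i\in T$---is precisely your observation that $B(\ell_i)$ is (strictly) increasing while $A$ is constant. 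Your write-up is simply the direct instantiation of that machinery for this specific $g_T$, including the same care about tie-breaking at the threshold.
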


We now consider the approximation ratio achieved by the mechanism. We summarize here the inequalities that we will use:

\begin{lemma}\label{lem:jensen} For any $p\geq 1$ it holds
  \begin{align} \label{eq:ineqs}
    \sum_{i=1}^k x_i^p \leq \bigg(\sum_{i=1}^k {x_i}\bigg)^p \leq k^{p-1}\sum_{i=1}^k
    x_i^p.
  \end{align}
  Similarly for any $0<p\leq 1$ it holds
  \begin{align}
    \sum_{i=1}^k x_i^{1/p} \leq \bigg(\sum_{i=1}^k {x_i}\bigg)^{1/p} \leq k^{1/p-1}\sum_{i=1}^k
    x_i^{1/p}.
  \end{align} 
\end{lemma}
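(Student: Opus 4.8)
The statement is a pair of standard power-sum inequalities, so the plan is to establish the chain for $p\ge 1$ first and then deduce the $0<p\le 1$ chain by a change of exponent. Throughout one assumes $x_i\ge 0$ (these are processing times), and both chains are trivial equalities when $k=1$ or $p=1$, so the interesting range is $k\ge 2$, $p>1$.

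For the left inequality $\sum_i x_i^p\le(\sum_i x_i)^p$ I would use the one-line argument: writing $S=\sum_j x_j$, each $x_i\le S$ together with $p-1\ge 0$ gives $x_i^p=x_i\cdot x_i^{p-1}\le x_i\cdot S^{p-1}$, and summing over $i$ yields $\sum_i x_i^p\le S\cdot S^{p-1}=S^p$. (This is just superadditivity of $t\mapsto t^p$ on $\mathbb{R}_{\ge 0}$; an induction on $k$ works equally well.)

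For the right inequality $(\sum_i x_i)^p\le k^{p-1}\sum_i x_i^p$ the plan is to invoke convexity of $t\mapsto t^p$: Jensen's inequality applied to the uniform distribution on $x_1,\dots,x_k$ gives $\bigl(\tfrac1k\sum_i x_i\bigr)^p\le\tfrac1k\sum_i x_i^p$, and multiplying by $k^p$ finishes it. Equivalently one can cite H\"older with exponents $p$ and $p/(p-1)$, namely $\sum_i x_i=\sum_i x_i\cdot 1\le(\sum_i x_i^p)^{1/p}\,k^{(p-1)/p}$, and raise both sides to the $p$-th power.

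Finally, for $0<p\le 1$ I would set $q=1/p\ge 1$ and apply the chain just proved with $q$ in place of $p$; this is verbatim the second display. I do not expect any genuine obstacle here: the only points requiring care are the boundary cases $k=1$ and $p=1$ (where everything is an equality) and recording $x_i\ge 0$, so that all the powers and the monotonicity steps ($x_i\le S\Rightarrow x_i^{p-1}\le S^{p-1}$) are valid.
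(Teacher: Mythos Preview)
Your proof is correct and essentially matches the paper's: the paper also derives the right inequality in \eqref{eq:ineqs} from Jensen's inequality for $\varphi(x)=x^p$, and obtains the $0<p\le 1$ chain by replacing $p$ with $1/p$. The only cosmetic difference is the left inequality, which the paper attributes to the triangle inequality of the $L^p$-norm (i.e.\ $\|\sum_i x_i e_i\|_p\le\sum_i\|x_i e_i\|_p=\sum_i x_i$), whereas you give the direct superadditivity argument $x_i^p\le x_i S^{p-1}$; these are equivalent one-line justifications.
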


\begin{proof}
  The left inequality of~\eqref{eq:ineqs} is essentially the triangle inequality of the $L^p$-norm. The
  right inequality is an immediate application of Jensen's inequality for a convex function $\varphi$ 
$$\varphi\left(\frac{\sum_{i=1}^k{x_i}}{k}\right)\leq \frac{\sum_{i=1}^k{\varphi(x_i)}}{k},$$
where $\varphi(x)=x^p$.
The second set of inequalities come by replacing $p$ with $1/p \geq 1$. 
\end{proof}

Next we show two upper bound results for the approximation ratio (for the $L^p$-norm objective) separately in case $p\geq 1,$ and in case $0< p\leq 1,$ respectively.

\begin{theorem} \label{thm:upper-bound-min-lp}
For the problem of minimizing the $L^p$-norm, the Hybrid $L^p$ Mechanism for stars
has approximation ratio of at most $2^{(p-1)/p}$, when $p\geq 1$, and  $2^{(1-p)/p}$,
when $0<p<1$.
\end{theorem}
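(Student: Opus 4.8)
The plan is to follow the same structure as the $p=\infty$ case (the Lemma showing the Hybrid Mechanism on stars has ratio $2$), but to track the exponents carefully using Lemma~\ref{lem:jensen}. Let $OPT$ be the optimal $L^p$-value and let $S^*$ be the task set assigned to the root in an optimal allocation, so that
$OPT = \bigl(r(S^*)^p + \sum_{i\notin S^*}\ell_i^p\bigr)^{1/p}$.
Let $ALG$ denote the $L^p$-value achieved by the mechanism on its output set $S$. First I would observe that, because the mechanism minimizes $r(T)+\bigl(\sum_{i\notin T}\ell_i^p\bigr)^{1/p}$ over $T$, and because for stars the $L^p$-value of the allocation with root-set $T$ is $\bigl(r(T)^p+\sum_{i\notin T}\ell_i^p\bigr)^{1/p}\le r(T)+\bigl(\sum_{i\notin T}\ell_i^p\bigr)^{1/p}$ (the left inequality of~\eqref{eq:ineqs} with $k=2$, i.e. subadditivity of the $L^p$-norm for $p\ge 1$), we get
$ALG \le r(S) + \bigl(\textstyle\sum_{i\notin S}\ell_i^p\bigr)^{1/p} \le r(S^*) + \bigl(\textstyle\sum_{i\notin S^*}\ell_i^p\bigr)^{1/p}$,
where the second step is optimality of $S$ in the argmin. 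Now set $a = r(S^*)$ and $b = \bigl(\sum_{i\notin S^*}\ell_i^p\bigr)^{1/p}$, so $OPT = (a^p+b^p)^{1/p}$ and $ALG \le a+b$. The right inequality of~\eqref{eq:ineqs} with $k=2$ gives $(a+b)^p \le 2^{p-1}(a^p+b^p)$, hence $a+b \le 2^{(p-1)/p}(a^p+b^p)^{1/p} = 2^{(p-1)/p}\,OPT$, which yields $ALG \le 2^{(p-1)/p} OPT$ as claimed for $p\ge 1$.

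For the case $0<p<1$ the argument is structurally identical but I would be careful that the $L^p$ "norm" is no longer subadditive; here the relevant direction of the inequality flips. The mechanism still minimizes $r(T)+\bigl(\sum_{i\notin T}\ell_i^p\bigr)^{1/p}$, but now I need to compare this quantity to the true objective value $\bigl(r(T)^p+\sum_{i\notin T}\ell_i^p\bigr)^{1/p}$. Writing $x_1 = r(T)^p$ and $x_2 = \sum_{i\notin T}\ell_i^p$ and applying the second set of inequalities of Lemma~\ref{lem:jensen} with $k=2$, the left inequality gives $r(T)^p + \bigl(\sum_{i\notin T}\ell_i^p\bigr) \le \bigl(r(T) + (\sum_{i\notin T}\ell_i^p)^{1/p}\bigr)^p$ — wait, this is not quite the shape I want; instead I would apply it as $x_1^{1/p}+x_2^{1/p} \le (x_1+x_2)^{1/p}$, i.e. $r(T) + \bigl(\sum_{i\notin T}\ell_i^p\bigr)^{1/p} \le \bigl(r(T)^p + \sum_{i\notin T}\ell_i^p\bigr)^{1/p}$, so the argmin objective is at most the true objective. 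Therefore $r(S) + (\sum_{i\notin S}\ell_i^p)^{1/p}$, being the minimum of the argmin objective, is at most the argmin objective evaluated at $S^*$, which is at most... no — I need to bound $ALG$, the true objective at $S$, which could exceed the argmin objective at $S$. So the chain is: $ALG = \bigl(r(S)^p+\sum_{i\notin S}\ell_i^p\bigr)^{1/p} \le 2^{1/p-1}\bigl(r(S) + (\sum_{i\notin S}\ell_i^p)^{1/p}\bigr)$ by the right inequality of the second set (with $x_1=r(S)^p$, $x_2=\sum_{i\notin S}\ell_i^p$), then $\le 2^{1/p-1}\bigl(r(S^*) + (\sum_{i\notin S^*}\ell_i^p)^{1/p}\bigr)$ by optimality of $S$ in the argmin, then $\le 2^{1/p-1}\bigl(r(S^*)^p + \sum_{i\notin S^*}\ell_i^p\bigr)^{1/p} = 2^{1/p-1} OPT = 2^{(1-p)/p} OPT$ by the left inequality of the second set of Lemma~\ref{lem:jensen}. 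So the desired bound $2^{(1-p)/p}$ falls out.

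The main obstacle, such as it is, is bookkeeping: making sure that in each regime I invoke the correct one of the two inequalities in Lemma~\ref{lem:jensen}, with the argument variables ($x_i$ versus $x_i^p$) assigned consistently, and that the argmin objective is "sandwiched" on the correct side of the true $L^p$ objective in each case — for $p\ge 1$ the argmin objective upper-bounds the true objective of the same allocation, while for $p<1$ it lower-bounds it, so the role of the factor $2^{|p-1|/p}$ switches between "relating the two objectives of $S$" and "being absorbed elsewhere". I would also note explicitly that the $k=2$ special case is all that is needed (root-set vs. complement), so no genuinely new inequality beyond Lemma~\ref{lem:jensen} is required, and that truthfulness has already been established in the preceding lemma, so nothing further is needed on that front.
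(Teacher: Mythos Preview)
Your proposal is correct and follows essentially the same approach as the paper: for $p\ge 1$ you bound $ALG$ by the argmin objective at $S$ via subadditivity, pass to $S^*$ by optimality, and then apply Jensen with $k=2$; for $0<p<1$ you first lose the factor $2^{1/p-1}$ via Jensen to get down to the argmin objective at $S$, pass to $S^*$, and then use $(\alpha+\beta)^p\le\alpha^p+\beta^p$ (equivalently the left inequality of the second set in Lemma~\ref{lem:jensen}) to reach $OPT$. Apart from the exploratory detours in your write-up, the final chain of inequalities in each regime is identical to the paper's.
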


\begin{proof} Let $S^*=\argmin_{T\subseteq M}(r(T)^p+\sum_{i\not\in T}
  \ell_i^p)^\frac{1}{p}$ be the subset assigned to the root in the optimal
  allocation, $S$ be the subset assigned to the root by the $L^p$ Mechanism, $OPT$ be
  the optimal $L^p$-norm, and $ALG$ be the  $L^p$-norm achieved by the Hybrid $L^p$
  Mechanism.

  We first consider the case $p\geq 1$. We have

\begin{align*}
  ALG  = \bigg(r(S)^p+\sum_{i\not \in  S} \ell^p_i\bigg)^{1/p}
         &\leq r(S)+\bigg(\sum_{i\not \in  S} \ell^p_i\bigg)^{1/p} \\
       &\leq r(S^*)+\bigg(\sum_{i\not \in  S^*} \ell^p_i\bigg)^{1/p} \\
        & \leq  \, 2^{(p-1)/p}\bigg(r(S^*)^p+\sum_{i\not \in  S^*} \ell^p_i\bigg)^{1/p}
= 2^{(p-1)/p} OPT, 
\end{align*}
where the first inequality follows from the triangle
inequality, %
the second from the definition of the Hybrid $L^p$ Mechanism, while the last one from
Jensen's inequality (Lemma~\ref{lem:jensen}, Equation~\eqref{eq:ineqs}) for
$k=2,\, x_1=r(S^*),$ and $x_2=(\sum_{i\not \in S^*} \ell^p_i)^{1/p}.$

The case of $p<1$, is essentially the same, but the proof is slightly different. We
first apply Jensen's inequality and then the ``triangle inequality''.
\begin{align*}
 ALG  = \bigg(r(S)^p+\sum_{i\not \in  S} \ell^p_i\bigg)^{1/p}
&\leq 2^{\frac{1}{p}-1}\bigg(r(S)+\big(\sum_{i\not \in  S} \ell^p_i\big)^{1/p}\bigg) \\
&\leq  2^{\frac{1}{p}-1}\bigg(r(S^*)+\big(\sum_{i\not \in  S^*} \ell^p_i\big)^{1/p}\bigg) \\
&\leq   2^{\frac{1}{p}-1}\bigg(r(S^*)^p+\sum_{i\not \in  S^*} \ell^p_i\bigg)^{1/p}
=\,  2^{\frac{1}{p}-1} OPT. 
\end{align*}
The first inequality follows from Jensen's inequality (Lemma~\ref{lem:jensen}) for
$x_1=(r(S))^p,$ and $x_2=\sum_{i\not \in S} \ell^p_i;$ the second from the definition
of the $L^p$ Mechanism, while the last one from the fact that
$(\alpha+\beta)^p\leq \alpha^p+\beta^p$, when $0<p\leq 1.$
\end{proof}

As in the case of makespan, we can use the mechanism to other domains by decomposing
them. We can apply the Star-Cover mechanism (Definition~\ref{def:star-cover}) to get
good approximation ratios for general domains.

\begin{theorem} For $p\geq 1,$ the Star-Cover mechanism for a given multigraph $G$
  that uses the Hybrid $L^p$ Mechanism on every star of a fixed star decomposition
  $T=\{T_1,\ldots,T_k\}$ is truthful and has an approximation ratio at most
  $(2c(T))^{(p-1)/p}$ of the $L^p$-norm of the machines' costs, where $c(T)$ is the
  star contention number of the decomposition.
\end{theorem}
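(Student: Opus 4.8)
The plan is to mirror the makespan Star-Cover analysis, replacing the two uses of ``a $\max$ dominates each of its terms'' by the two inequalities of Lemma~\ref{lem:jensen}: the convexity (right-hand) inequality when aggregating the mechanism's cost over the stars, and the superadditivity (left-hand) inequality, valid for $p\geq 1$, when aggregating the optimum over the stars.

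Truthfulness follows exactly as in the makespan case. The star decomposition $T$ is fixed and independent of the bids, so no player can change the set of stars it belongs to; within a star $T_h$ the set $S_{i,h}$ allotted to player $i$ by the Hybrid $L^p$ Mechanism depends only on $i$'s processing times for edges of $T_h$, and the Hybrid $L^p$ Mechanism on stars is truthful (established above). Hence player $i$'s utility splits as a sum, over the stars containing $i$, of independent terms, each maximized by $i$ reporting its true type on that star; since these reports occupy disjoint coordinates, truth-telling everywhere is dominant.

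For the approximation bound, let $OPT$ and $OPT(T_h)$ be the optimal $L^p$-norms on $G$ and on $T_h$, and $ALG$, $ALG(T_h)$ the values achieved by the Star-Cover mechanism and by the Hybrid $L^p$ Mechanism on $T_h$. First I would aggregate the mechanism: since the $T_h$ partition $E$ and costs are additive, $t_i(S_i)=\sum_{h:\,i\in V(T_h)}t_i(S_{i,h})$, so the right-hand (Jensen) inequality of Lemma~\ref{lem:jensen}, applied with at most $c(T)$ terms (and using that $k\mapsto k^{p-1}$ is nondecreasing for $p\geq 1$, so the possibly smaller number of stars containing $i$ may be replaced by $c(T)$), gives $t_i(S_i)^p\leq c(T)^{p-1}\sum_h t_i(S_{i,h})^p$; summing over players and exchanging the order of summation yields $ALG^p\leq c(T)^{p-1}\sum_h ALG(T_h)^p$. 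Next, Theorem~\ref{thm:upper-bound-min-lp} gives $ALG(T_h)^p\leq 2^{p-1}\,OPT(T_h)^p$ for each star. Finally I would aggregate the optimum: restricting a fixed optimal allocation of $G$ to the edges of $T_h$ is feasible on $T_h$ (every edge of $T_h$ still goes to an incident vertex, which lies in $V(T_h)$), so $OPT(T_h)^p$ is at most the sum of $p$-th powers of the players' costs incurred on $T_h$ under that allocation; summing over $h$, using again that the $T_h$ partition $E$, and then applying the left-hand (superadditivity) inequality of Lemma~\ref{lem:jensen} to each player's contributions across the stars gives $\sum_h OPT(T_h)^p\leq OPT^p$. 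Chaining the three facts: $ALG^p\leq c(T)^{p-1}\cdot 2^{p-1}\cdot OPT^p=(2c(T))^{p-1}OPT^p$, hence $ALG\leq(2c(T))^{(p-1)/p}OPT$.

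The only delicate point is keeping the two opposite-direction inequalities of Lemma~\ref{lem:jensen} straight: each player's \emph{mechanism} cost is a sum of at most $c(T)$ star-contributions, so $(\sum x_h)^p$ can only be bounded above by $c(T)^{p-1}\sum x_h^p$; whereas for the \emph{optimum} we must bound $\sum x_h^p$ above by $(\sum x_h)^p$, which is precisely the left inequality and is exactly where the hypothesis $p\geq 1$ is used. Everything else is an unrolling of definitions, essentially identical to the $p=\infty$ proof.
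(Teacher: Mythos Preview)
Your proof is correct and follows essentially the same approach as the paper's own proof: Jensen's inequality (the right-hand bound of Lemma~\ref{lem:jensen}) with at most $c(T)$ terms to aggregate the mechanism's cost over stars, the $2^{(p-1)/p}$ star bound, and then the superadditivity inequality $\sum x_h^p\leq(\sum x_h)^p$ to aggregate the restriction of the global optimum over stars. The paper makes the last step slightly more explicit by distinguishing the per-star optimum $\tilde S_{i,h}$ from the restriction $S^*_{i,h}$ of the global optimum, but the argument is the same.
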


\begin{proof}
  Fix some player $i$ and let $S_{i,h}$ be the subset of tasks allocated to player $i$
  by the $L^p$ Mechanism when applied to a star $T_h$, $h=1,\ldots,k$. Like in the case of Max Mechanism (Definition~\ref{def:maxmech}), truthfulness follows from two observations. First, since the fixed
  star decomposition is independent of player $i$'s processing times, player $i$
  cannot affect it by lying. Second, $S_{i,h}$ is independent of player $i$'s
  processing times $t_{i}(e)$ for all edges $e\not\in T_{h}$, therefore player $i$
  cannot alter the assignment on $T_h$ by changing its values outside $T_h$.

  Now we show the approximation guarantee. Let $OPT$, $OPT(T_h)$ be the optimal $L^p$-norm
  on $G$ and $T_h$ respectively, and let $ALG$ and $ALG(T_h)$ be the $L^p$-norm
  achieved by the $L^p$-Cover mechanism on $G$ and $T_h.$ Let $c=c(T)$ be the star contention number of $T.$ We prove that $(ALG)^p\leq (2c)^{p-1} (OPT)^p.$
  
  \begin{align*}
   (ALG)^p  &= \sum_i \bigg(\sum_h t(S_{i,h})\bigg)^p\\ 
   & \leq \sum_i c^{p-1}\sum_h t(S_{i,h})^p = c^{p-1}\sum_h \sum_i t(S_{i,h})^p = c^{p-1}\sum_h (ALG(T_h))^p \\
&\leq  c^{p-1}\sum_h 2^{p-1}(OPT(T_h))^p =  (2c)^{p-1}\sum_h (OPT(T_h))^p =  (2c)^{p-1}\sum_h \sum_i (t(\tilde S_{i,h}))^p \\
&\leq   (2c)^{p-1}\sum_h \sum_i (t(S^*_{i,h}))^p =   (2c)^{p-1}\sum_i \sum_h  (t(S^*_{i,h}))^p \\
&\leq   (2c)^{p-1}\sum_i  (t(S^*_{i}))^p =  (2c)^{p-1} ( OPT)^p. 
  \end{align*}
The  first inequality  follows from Lemma~\ref{lem:jensen}, because for every machine
$t(S_{i,h})$ is nonzero only for at most $c$ stars $h\in \{1,2,\ldots k\}.$ The
second holds by the approximation ratio of the Hybrid $L^p$ Mechanism for stars. $\tilde S_{i,h}$ denotes the set given to machine $i$ in the optimal allocation \emph{of star} $h,$ whereas $S^*_{i,h}$ is the restriction of the allocated task set to player $i$ by $OPT,$ to the tasks of star $h.$ The third inequality holds by the optimality of the $\tilde S_{i,h}$ on each star $h.$ Finally, the last inequality holds by the triangle inequality $\sum x_i^p\leq (\sum x_i)^p.$\end{proof}

\subsection{Lower Bounds for Minimizing the $L^p$-norm}
\label{sec:lower-bounds-lp}

We now provide corresponding negative results for mechanisms. For the case, of
$p\geq 1$, the next theorem shows that the Hybrid $L^p$ Mechanism has optimal
approximation ratio. The case of $p<1$ is treated separately below, and the lower
bound that we give does not match exactly the upper bound, which leaves open the
possibility that there exists a mechanism with better approximation ratio than the
Hybrid $L^p$ Mechanism.

\begin{theorem} \label{thm:lower-bound-stars-lp} For any $p\geq 1$,
  there is no deterministic mechanism for stars that can achieve an
  approximation ratio better than $2^{1-1/p}$ for the $L^p$-objective.
\end{theorem}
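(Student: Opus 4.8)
The plan is to mimic the proof of Theorem~\ref{thm:lower-bound-stars} (the $p=\infty$ case), replacing the makespan computations by $L^p$-norm computations and tuning the geometric-progression base $a$ so that the two resulting lower bounds balance out to $2^{1-1/p}$ in the limit $m\to\infty$. First I would set up the same star instance: the root player has $r_j=a^{j-1}$ for task $j$, and the leaf player for task $j$ has $\ell_j=a^j$, for a parameter $a>1$ to be chosen later, with all other leaf entries $\infty$.

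Next I would do the same case analysis. \textbf{Case 1:} the mechanism assigns all $m$ tasks to the root. Then the root's load is $r(M)=\sum_{j=1}^m a^{j-1}=(a^m-1)/(a-1)$ and every leaf has load $0$, so $ALG=(a^m-1)/(a-1)$. The optimal allocation gives each task $j$ to its leaf, yielding loads $\ell_j=a^j$ on $m$ distinct leaves, so $OPT=(\sum_{j=1}^m a^{pj})^{1/p}\le (m\,a^{pm})^{1/p}=m^{1/p}a^m$; actually one wants a clean bound, so I would instead bound $OPT \le a^m\cdot(\sum_{j=1}^{m} a^{-p(m-j)})^{1/p}$, which is $a^m$ times a constant depending only on $a,p$, and note this ratio behaves like $(a-1)^{-1}$ times geometric factors as $m\to\infty$. \textbf{Case 2:} some nonempty set $X$ of star-tasks goes to leaves; let $k=\max X$. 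Using weak monotonicity (Lemma~\ref{lemma:tool}) exactly as before — lower $r_j$ to $0$ for $j\in X$ and raise it by $\epsilon$ for $j\notin X$ — the root keeps its set, so task $k$ stays on leaf $k$ with load $a^k$, contributing $a^k$ to the $L^p$-norm, hence $ALG\ge a^k$. The optimum for this modified instance puts task $k$ on the root (cost $r_k+\epsilon=a^{k-1}+\epsilon$) and each remaining task on whichever endpoint is cheaper, giving $OPT\to a^{k-1}$ up to lower-order geometric terms; so the ratio in this case tends to $a$.

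So the mechanism's ratio is at least $\min\{\,\text{(Case 1 bound)},\,a\,\}$ for every $a>1$, and letting $m\to\infty$ in Case 1 the first quantity tends to a constant of the form $a/(a-1)\cdot$(bounded geometric factor). The delicate point — and the main obstacle — is to carry the $L^p$ geometric sums through Case 1 carefully enough that the balancing equation, setting the Case~1 limit equal to $a$, has its solution approach the value making $\min$ equal to $2^{1-1/p}$; concretely one expects to pick $a$ solving (in the limit) $\left(\sum_{j\ge 0}a^{-pj}\right)^{1/p}\big/(a-1)^{-1}\cdot(\text{something}) = a$, i.e. the analogue of ``$(a^m-1)/((a-1)a^{m-1})=a$'' from the $p=\infty$ proof, which there gave $a=2$ and ratio $2-2^{-(m-1)}\to 2=2^{1-1/\infty}$. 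For finite $p$ the same manipulation should yield a limiting value $2^{1-1/p}$: intuitively, in Case 1 the root's single load $(a^m-1)/(a-1)\approx a^m/(a-1)$ is compared to an $L^p$-norm of geometrically decaying terms topped by $a^m$, so the ratio is roughly $\frac{a^m/(a-1)}{a^m(1-a^{-p})^{-1/p}} = \frac{(1-a^{-p})^{1/p}}{a-1}$, and balancing this against $a$ gives an equation whose root, as one checks, makes both sides equal $2^{1-1/p}$. I would finish by verifying this root computation and confirming the $\epsilon\to 0$, $m\to\infty$ limits interchange harmlessly.
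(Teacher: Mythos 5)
There is a genuine gap, and it is exactly at the point you flag as ``delicate'': the balancing does not come out to $2^{1-1/p}$ with your instance. Two things go wrong. First, your Case 2 bound of $a$ is only valid for $p=\infty$. For finite $p$ the mechanism may give the \emph{whole} prefix $X=\{1,\dots,k\}$ to the leaves, and then the loads $\ell_1,\dots,\ell_{k-1}$ are not lower-order: they contribute a constant fraction of both $ALG^p$ and $OPT^p$ (the geometric tail $\sum_{j<k}a^{pj}$ is $\Theta(a^{pk})$). Carrying the sums through, the Case 2 ratio tends to $a/(2-a^{-p})^{1/p}$, which is strictly less than $a$. Second, and more fundamentally, your family couples two parameters that must be tuned independently: the geometric base $b$ (i.e.\ $r_{j+1}/r_j$) and the leaf-to-root ratio $\alpha=\ell_j/r_j$. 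You force $\alpha=b=a$. If you set up the general two-parameter instance $r_j=b^{j-1}$, $\ell_j=\alpha b^{j-1}$ and balance the two cases, you find the optimum at $b=2$ and $\alpha=(2^p-1)^{1/p}$, where both case-bounds equal exactly $2^{1-1/p}$; the constraint $\alpha=b$ is incompatible with this (it would require $(2^p-1)^{1/p}=2$). Concretely, for $p=2$ the best your one-parameter family can give, even with the corrected Case 2 and with the stronger Case 1 estimate (optimal keeps task $m$ on the root rather than your all-leaves bound), is obtained at the root of $a^3=a^2+a+1$ and equals $a^2/\sqrt{2a^2-1}\approx 1.409<\sqrt{2}$. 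So the argument, once made rigorous, proves a strictly weaker bound than the theorem claims.

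The paper's proof is the same two-case weak-monotonicity argument, but with $r_j=2^{j-1}$ and $\ell_j=\alpha\cdot 2^{j-1}$, $\alpha=(2^p-1)^{1/p}$; in Case 1 the optimum keeps the last task on the root, and in Case 2 the worst set $X=\{1,\dots,k\}$ is handled by the full $L^p$ computation, the two limits being
\begin{align*}
\left(\frac{2^p(2^p-1)}{\alpha^p+2^p-1}\right)^{1/p}
\qquad\text{and}\qquad
\frac{2\alpha}{(\alpha^p+2^p-1)^{1/p}},
\end{align*}
both equal to $2^{1-1/p}$ at the stated $\alpha$. To repair your write-up, replace your instance by this two-parameter one (or at minimum decouple $\alpha$ from the base) and redo Case 2 keeping all the geometric terms in the $L^p$-norms of both $ALG$ and $OPT$.
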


\begin{proof}
  This is a generalization of the corresponding proof in
  Section~\ref{sec:scheduling}.
  
  Let's assume that the mechanism takes an input where the processing
  times of the root player are $r_j =2^{j-1}$, for each task $j$, and the
  processing time of the corresponding leaf player for task $j$ is
  $\ell_j=\alpha\cdot 2^{j-1}$, for some fixed $\alpha>0$ to be determined. 

If the mechanism assigns all tasks to the root player, then the
makespan for this input is 
$$\sum_{j=1}^mr_j=2^m-1,$$ while the optimal makespan is
$$\bigg(\sum_{j=1}^{m-1}\ell^p_j+r^p_m\bigg)^{1/p}= \left(\frac{\alpha^p(2^{p(m-1)}-1)+2^{p(m-1)}(2^p-1)}{2^p-1}\right)^{1/p}.$$ 

Therefore the approximation ratio is 
$$\left(\frac{(2^m-1)^p(2^p-1)}{\alpha^p(2^{p(m-1)}-1)+2^{p(m-1)}(2^p-1)}\right)^{1/p},$$

 which tends to the following value as $m$ tends to infinity 

\begin{equation}
\label{eq:1}\left({{\frac {{2}^{p} \left( {2}^{p}-1 \right) }{{a}^{p}+{2}^{p}-1
}}}\right)^{1/p}.
\end{equation}

Otherwise, let $X$ be the nonempty set of tasks assigned to the leaf
players. Let $k$ be the task with the maximum index in $X$. Since it
is processed by the leaf player, its processing time is $\ell_k=\alpha\cdot 2^{k-1}$. Now
consider the input in which we change the processing times of the root
player to

$$r'_j=
\begin{cases}
  0 & j\not\in X \\
  r_j+\epsilon & \text{ otherwise} 
\end{cases}
$$
for some arbitrarily small $\epsilon>0$. By weak monotonicity
(Lemma~\ref{lemma:tool}), the set of tasks assigned to the root player remains the
same, and as a result the whole allocation stays the same. Therefore task $k$ is
still assigned to the leaf player $k$ and the makespan of the mechanism is 
$\left(\sum_{j\in X}\ell_j^p\right)^{1/p}$, while the optimum
allocation for this input is at most $\left(\sum_{j\in
    X\setminus\{k\}}\ell_j^p+r_k^p\right)^{1/p}$.

Therefore the approximation ratio is at least

$$\frac{\left(\sum_{j\in X}\ell_j^p\right)^{1/p}}{\left(\sum_{j\in
    X\setminus\{k\}}\ell_j^p+r_k^p\right)^{1/p}}\geq \frac{\left(\sum_{j=1}^k\ell_j^p\right)^{1/p}}{\left(\sum_{j=1}^{k-1}\ell_j^p+r_k^p\right)^{1/p}}=\left({\frac {{a}^{p} \left( {2}^{p \left( k+1 \right) }-1 \right) }{{a}^{p}
 \left( {2}^{pk}-1 \right) +{2}^{pk} \left( {2}^{p}-1 \right) }}
\right)^{1/p}
$$

For large $k$ this tends to 

\begin{equation}
\label{eq:2}
\frac{2\alpha}{\left(\alpha^p+2^p-1\right)^{1/p}}.
\end{equation} 

By setting $\alpha=(2^{p}-1)^{1/p}$ both (\ref{eq:1}) and (\ref{eq:2}) become equal to $2^{1-1/p}$ and the theorem follows.

\end{proof}

Before we proceed to give a lower bound for the case of $p<1$, we point out that all
known mechanisms perform much worse than the Hybrid Mechanism.

\begin{theorem} \label{thm:local-norm} For minimizing the $L^p$-norm on stars, all local
  mechanisms, including affine minimizers and task-independent mechanisms, have
  approximation ratio of at least
  $m^{\frac{1}{2}(1-1/p)}=(n-1)^{\frac{1}{2}(1-1/p)},$
  when $p\geq 1$.
\end{theorem}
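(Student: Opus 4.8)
The plan is to adapt the two-case construction used in the proof of Theorem~\ref{thm:local} to the $L^p$-norm objective. I would take the same matrix-type instance but now scaled so that the root player's common processing time for each of the $m$ tasks is some value $c$ to be chosen, and each leaf player has processing time $1$ for its single task (and $\infty$ for every other task). So the root can do any task at cost $c$, and each leaf can do its own task at cost $1$. The optimal $L^p$-norm is then roughly $m^{1/p}$ (all tasks on leaves), achieved when $c$ is chosen large enough that spreading the tasks to the leaves beats piling them on the root.

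First I would argue the ``root takes everything'' case: if a local mechanism assigns all $m$ tasks to the root, its cost is $mc$ on the root, hence $L^p$-norm exactly $mc$, giving ratio at least $mc/m^{1/p}=m^{1-1/p}c$. Second, for the complementary case, I would argue exactly as in Theorem~\ref{thm:local}: if some task, say task $1$, goes to leaf~$1$, then by repeated applications of Lemma~\ref{lemma:tool} together with locality, I can drive the root's values up on all other tasks and drive down to $0$ the values of the leaf players owning those other tasks, without moving task~$1$ off leaf~$1$ and without disturbing any other player's allocation. In the resulting instance leaf~$1$ still pays $1$ for task~$1$, while the optimum would route task~$1$ to the root at cost $c$; the rest of the tasks are on $0$-cost leaves. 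The mechanism's $L^p$-norm is at least $1$ and the optimum is at most $c$ (put task~1 on the root, everything else on its zero-cost leaf, so the root has load exactly $c$), hence ratio at least $1/c$.

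Combining the two cases, every local mechanism has approximation ratio at least $\min\{m^{1-1/p}c,\ 1/c\}$. Balancing these by setting $c=m^{-(1-1/p)/2}$ makes both equal to $m^{\frac12(1-1/p)}$, which is the claimed bound; since $n=m+1$ this is also $(n-1)^{\frac12(1-1/p)}$.

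The main obstacle, as in Theorem~\ref{thm:local}, is the careful bookkeeping in the second case: Lemma~\ref{lemma:tool} controls only the allocation to the player whose bid changed, so I must invoke \emph{locality} at each step to conclude that the allocation of every other player is also unchanged, and I should process the leaf-player modifications one at a time in a sequence $t^j$ (as done in the proof of Theorem~\ref{thm:local}) so that at each step exactly one player's bid changes. One subtlety worth a sentence: when raising the root's values on tasks not assigned to leaf~$1$, I must keep them finite (use $c+\epsilon$, or more precisely match the pattern ``$0$ on the root's assigned tasks, increased on the rest'' as in the cited proof) so that Lemma~\ref{lemma:tool}'s hypotheses (strict decrease on $S$, strict increase off $S$) are met; none of this affects the asymptotic ratio as $\epsilon\to 0$. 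I would also remark that for $p=\infty$ the bound degenerates to $\sqrt m$, recovering Theorem~\ref{thm:local}, which is a useful consistency check.
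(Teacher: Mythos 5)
Your proposal is correct and is exactly the adaptation the paper intends: the paper omits the proof of this theorem, stating only that it is ``essentially the same'' as the makespan lower bound for local mechanisms (Theorem~\ref{thm:local}), and your scaling of the root's values to $c=m^{-(1-1/p)/2}$ with the two-case analysis (ratio $m^{1-1/p}c$ if the root takes everything, ratio $1/c$ otherwise via Lemma~\ref{lemma:tool} plus locality) balances to precisely the claimed bound $m^{\frac{1}{2}(1-1/p)}$. Your bookkeeping in the second case and the consistency checks at $p=1$ and $p=\infty$ are all sound.
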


Observe that for $p=1$, the VCG is optimal, but for large $p$ the inefficiency of all
local mechanisms grows and tends to $\sqrt{m}$. This is essentially the same with the
lower bound of the corresponding theorem in Section~\ref{sec:scheduling} and we omit
its proof.

We now give a lower bound for all mechanisms for the case of $p<1$. Notice that the
approximation ratio tends to infinity as $p$ tends to 0.

\begin{theorem} \label{thm:lower-bound-stars-lp} For any $0<p\leq 1$ and every
  $a> 1$, there is no deterministic mechanism for stars that can achieve an
  approximation ratio better than
  \begin{align}
    \min \bigg\{a, \frac{(a+1)^{1/p}}{a^{1/p}+a)}\bigg\}.
  \end{align}
  By selecting an appropriate $a$, this is $\Omega(p^{-1} / \ln(p^{-1}))$.
\end{theorem}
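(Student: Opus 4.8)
The plan is to extend the lower-bound constructions of Theorem~\ref{thm:lower-bound-stars} and its $L^p$-analogue for $p\ge 1$ to the regime $0<p<1$, keeping everything information-theoretic via weak monotonicity (Lemma~\ref{lemma:tool}). As there, I fix a star with a large number $m$ of leaves and an instance whose root and leaf costs grow geometrically with a rate controlled by the parameter $a$, and run a dichotomy on the behaviour of an arbitrary deterministic (hence WMON) mechanism, according to whether the root ends up holding ``too much''. The output is, for each fixed $a>1$, a bound $\min\{\text{(bound A)},\text{(bound B)}\}$, one bound coming from the case where the mechanism over-concentrates on the root and the other from the case where it leaves a task on a leaf; these turn out to be $a$ and $(a+1)^{1/p}/(a^{1/p}+a)$ — plausibly the fraction from the over-concentration case and the clean $a$ from the leaf case, by analogy with the clean term $a$ in Theorem~\ref{thm:lower-bound-stars}. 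I then optimize over $a$.

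The case where the mechanism leaves some task on a leaf is handled as in the earlier proofs: let $k$ be the largest index of such a task, apply Lemma~\ref{lemma:tool} to drive the root's bids on the tasks it keeps toward $0$ and nudge the others up, so the root's set is pinned and task $k$ remains on leaf~$k$; then $ALG$ is at least the $L^p$-norm of the leaf loads, while $OPT$ can move task~$k$ (and everything else) onto the now almost-free root, and a monotonicity-in-$|X|$ argument lets one replace the set of leaf-tasks by $\{1,\dots,k\}$ before passing to the limit $k\to\infty$. In the complementary case, where the mechanism piles (essentially) everything on the root, $ALG$ equals the root's total load and $OPT$ is bounded above by an explicit allocation that offloads the appropriate tasks to their leaves; again one sends the instance size to infinity. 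Combining the two limits yields $\min\{a,\ (a+1)^{1/p}/(a^{1/p}+a)\}$ for every $a>1$.

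For the optimization: the first term is increasing in $a$, while the second decreases from $2^{1/p-1}$ (as $a\to 1$) towards $1$ (as $a\to\infty$), so the bound is maximized at the crossing point $a^{\ast}$, determined by $a^{\ast}(a^{\ast 1/p}+a^{\ast})=(a^{\ast}+1)^{1/p}$. Dividing by $a^{\ast 1/p}$ and discarding the lower-order term $a^{\ast\,2-1/p}$ (negligible as $p\to0$), this becomes $(1+1/a^{\ast})^{1/p}=a^{\ast}$, i.e.\ $a^{\ast}\ln a^{\ast}\approx 1/p$, whose solution satisfies $a^{\ast}=\Theta(p^{-1}/\ln p^{-1})$; the value of the bound there is $a^{\ast}=\Theta(p^{-1}/\ln p^{-1})$, the claimed asymptotics. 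The step I expect to be the real obstacle is making the over-concentration case productive: for $0<p<1$ the $L^p$ objective is minimized by concentrating load on a single machine, so unlike the $p\ge 1$ case the optimum may itself want to put everything on the root, and one must shape the instance — in particular the leaf costs relative to the root costs — so that the optimum still prefers to distribute, while simultaneously keeping the configuration that makes the leaf case work, and verify that no alternative redistribution (e.g.\ a cascade of task moves) undercuts the $OPT$ bound; it is this balancing of opposing constraints that produces the non-clean factor $(a+1)^{1/p}/(a^{1/p}+a)$ and is presumably responsible for the gap to the upper bound $2^{1/p-1}$ of Theorem~\ref{thm:upper-bound-min-lp}.
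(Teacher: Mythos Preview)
Your plan has the right dichotomy skeleton (mechanism concentrates everything on the root vs.\ leaves something on a leaf, then invoke Lemma~\ref{lemma:tool}), and your asymptotic analysis of $\min\{a,(a+1)^{1/p}/(a^{1/p}+a)\}$ is fine. But the proposal does not actually supply a working instance: you explicitly flag the over-concentration case as ``the real obstacle'' and leave it unresolved. That is the gap --- you have described the shape of the argument without the construction that makes both branches fire.

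The paper does not use a many-task geometric instance at all. It uses a star with only \emph{two} tasks: $r_1=a^{1/p}$, $r_2=a$, $\ell_1=\infty$, $\ell_2=1$. The point of $\ell_1=\infty$ is to force task~$1$ onto the root in \emph{every} allocation, so the root always carries load at least $a^{1/p}$; combined with the fact that for $p<1$ concentration \emph{is} preferable, the optimum here is to give the root both tasks, at cost $a^{1/p}+a$. The dichotomy then runs as follows. If the mechanism sends task~$2$ to its leaf, the cost is $((a^{1/p})^p+1^p)^{1/p}=(a+1)^{1/p}$, giving ratio $(a+1)^{1/p}/(a^{1/p}+a)$. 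If instead the mechanism gives both tasks to the root, change to $r_1'=0$, $r_2'=a-\epsilon$; by Lemma~\ref{lemma:tool} the root still holds both, with cost $a-\epsilon$, while the optimum is now $1$ (task~$2$ to its leaf), giving ratio $a$. No limit in $m$ or $k$ is taken.

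Note also that your attribution of the two terms is reversed: the clean $a$ comes from the over-concentration case, and the fraction from the leaf case. The analogy with Theorem~\ref{thm:lower-bound-stars} breaks exactly because the optimum's preference for concentration versus spreading flips when $p$ crosses $1$. The $\ell_1=\infty$ trick is what dissolves the obstacle you identified: rather than shaping a geometric sequence so that spreading remains optimal, one simply pins a large fixed load on the root and lets a single extra task drive the dichotomy.
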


\begin{proof}
  Consider an instance with two tasks, where the root has costs
  $r_1=a^{1/p}, r_2=a$, and the leaves have costs
  $\ell_1=\infty,\ell_2=1$.  Observe that in this instance,
  the optimum is to assign both tasks to the root player, with a total
  cost of $a^{1/p}+a$. Any other allocation has total cost at least
  $(a+1)^{1/p}$, which gives an approximation ratio at least
  $(a+1)^{1/p}/(a^{1/p}+a)$.

  However, if the mechanism assigns both tasks to the root, then consider the instance
  produced by setting $r'_1=0,r'_2=a-\epsilon$. By applying the monotonicity lemma
  (Lemma~\ref{lemma:tool}), the allocation remains the same, with a cost of
  $a-\epsilon$, while the optimum is 1. Letting $\epsilon$ go to 0, we get a ratio
  $a$.

  To show that this bound is almost proportional to $1/p$, we show that for
  $a>1$ and $q= 1/p \geq e^2$, we have
  \begin{align*}
    \min\{(a+1)^q/(a^q+a), a\} = \Omega(q / \ln q ).
  \end{align*}
  Indeed we have
  \begin{align*}
    (a+1)^q/(a^q+a) & \geq (a+1)^q/(2a^q) \\
                            &\approx \frac{1}{2} e^{q/a}, 
  \end{align*}
  If $a \leq q / \ln q$, the above is at least
  $\frac{1}{2} e^{q/a} \geq \frac{1}{2} q \geq q / \ln q$.
  \end{proof}

  The lower bound given by the above theorem does not match the upper bound of
  Theorem~\ref{thm:upper-bound-min-lp}. For example, for $p=1/2$, the theorem above
  gives a lower bound of $\phi\approx 1.618$, while the upper bound is equal to 2.
  
  \subsection{Maximizing the $L^p$-norm  (Auctions)}

  In this subsection, we illustrate that the Hybrid Mechanism has good performance
  for the maximization problem as well.
  
\begin{definition}[Hybrid Mechanism for the $L^p$-Norm (maximization version)]
  Consider an instance of the Unrelated Graph Balancing problem on a star of $n$
  nodes and set of tasks $M$. Let
  \begin{align} \label{eq:max3}
    S\in \arg\max_{T\subseteq M}\bigg\{r(T)+\bigg(\sum_{i\not \in  T} \ell^p_i\bigg)^{1/p}\bigg\}.
  \end{align}
The mechanism assigns $S$ to the root and the remaining tasks to leaves. Ties are
broken in a deterministic way (e.g., lexicographically).
\end{definition}

Note that for $p=1$, the Hybrid Mechanism coincides with the VCG mechanism.

\begin{lemma}
  The Hybrid Mechanism for maximizing the $L^p$-norm on stars has approximation ratio
  of at most $2^{(p-1)/p}$, when $p\geq 1$.
\end{lemma}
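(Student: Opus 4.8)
The plan is to mirror the proof of Theorem~\ref{thm:upper-bound-min-lp} (the minimization case with $p\geq 1$), but with the two inequalities of Lemma~\ref{lem:jensen} used in the opposite roles. Truthfulness is not at issue here: it follows exactly as for the minimization version from monotonicity of the objective~\eqref{eq:max3} in each $\ell_i$ and from interpreting $-(\sum_{i\notin T}\ell_i^p)^{1/p}$ (resp.\ $+(\sum_{i\notin T}\ell_i^p)^{1/p}$) as the root's payment, so I only need the approximation guarantee. Since this is a maximization problem, ``approximation ratio at most $2^{(p-1)/p}$'' means $OPT\leq 2^{(p-1)/p}\cdot ALG$.

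First I would fix notation: let $S$ be the set the mechanism assigns to the root, let $S^*\in\arg\max_{T\subseteq M}\big(r(T)^p+\sum_{i\notin T}\ell_i^p\big)^{1/p}$ be an optimal set, and write $OPT=\big(r(S^*)^p+\sum_{i\notin S^*}\ell_i^p\big)^{1/p}$ and $ALG=\big(r(S)^p+\sum_{i\notin S}\ell_i^p\big)^{1/p}$. The key surrogate quantity is $r(T)+\big(\sum_{i\notin T}\ell_i^p\big)^{1/p}$, which is exactly what the mechanism maximizes.

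The chain of inequalities is then: on one side, the triangle inequality (left inequality of Lemma~\ref{lem:jensen}, Equation~\eqref{eq:ineqs}) applied to the pair $r(S^*)$ and $\big(\sum_{i\notin S^*}\ell_i^p\big)^{1/p}$ gives $r(S^*)+\big(\sum_{i\notin S^*}\ell_i^p\big)^{1/p}\geq OPT$; and by the defining property of $S$ as the maximizer of~\eqref{eq:max3}, $r(S)+\big(\sum_{i\notin S}\ell_i^p\big)^{1/p}\geq r(S^*)+\big(\sum_{i\notin S^*}\ell_i^p\big)^{1/p}\geq OPT$. On the other side, Jensen's inequality (right inequality of~\eqref{eq:ineqs}) with $k=2$, $x_1=r(S)$, $x_2=\big(\sum_{i\notin S}\ell_i^p\big)^{1/p}$ yields $r(S)+\big(\sum_{i\notin S}\ell_i^p\big)^{1/p}\leq 2^{(p-1)/p}\big(r(S)^p+\sum_{i\notin S}\ell_i^p\big)^{1/p}=2^{(p-1)/p}\cdot ALG$. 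Combining the two gives $OPT\leq r(S)+\big(\sum_{i\notin S}\ell_i^p\big)^{1/p}\leq 2^{(p-1)/p}\cdot ALG$, which is the claim.

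I do not expect a real obstacle: the argument is a direct reflection of the minimization case, and the only thing to get right is which of the two bounds in Lemma~\ref{lem:jensen} is used where — here the ``triangle inequality'' $\sum x_i^p\leq(\sum x_i)^p$ lower-bounds $OPT$ by the surrogate objective, while Jensen's $(\sum x_i)^p\leq 2^{p-1}\sum x_i^p$ upper-bounds the surrogate value at $S$ by $2^{(p-1)/p}ALG$, the opposite of how they appear in the minimization proof. It is worth remarking that for $p=1$ both inequalities are equalities and the ratio is $1$, consistent with the Hybrid Mechanism reducing to VCG (welfare maximization) in that case.
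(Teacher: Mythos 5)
Your proof is correct and is exactly the argument the paper has in mind: the paper omits this proof, noting only that it is ``similar to the proof for the minimization version and is based on applying the triangle and Jensen inequality,'' and your chain $OPT \leq r(S^*)+(\sum_{i\notin S^*}\ell_i^p)^{1/p} \leq r(S)+(\sum_{i\notin S}\ell_i^p)^{1/p} \leq 2^{(p-1)/p}ALG$ fills that in faithfully, with the two inequalities of Lemma~\ref{lem:jensen} correctly swapped relative to Theorem~\ref{thm:upper-bound-min-lp}. No gaps.
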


We omit the proof, which is similar to the proof for the minimization version and is
based on applying the triangle and Jensen inequality.

However, unlike the minimization version, there is an even better mechanism for the
star, when $p\geq 2$.

\begin{definition}[All-or-Nothing Mechanism]
  Consider an instance of the Unrelated Graph Balancing problem on a star of $n$
  nodes and set of tasks $M$. If
  \begin{align} \label{eq:max4}
    r(M) \geq \bigg(\sum_{i\in M} \ell^p_i\bigg)^{1/p}.
  \end{align}
then the mechanism assigns $M$ to the root, otherwise it assigns all tasks to the leaves.
\end{definition}

\begin{lemma}
The All-or-Nothing Mechanism for the star is truthful and has approximation ratio of at most $2^{1/p}$.
\end{lemma}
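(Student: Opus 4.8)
The plan is to treat truthfulness and the approximation ratio separately; both arguments are short. First I would record that the mechanism's objective value equals $ALG=\max\{r(M),\,(\sum_{i\in M}\ell_i^p)^{1/p}\}$: giving all tasks to the root yields objective $r(M)$ (only the root has positive value, the leaves zero), giving all tasks to the leaves yields objective $(\sum_{i\in M}\ell_i^p)^{1/p}$, and the threshold condition~\eqref{eq:max4} says precisely that the mechanism picks whichever of these two outcomes is larger.

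For truthfulness I would argue that every agent faces a monotone threshold allocation rule, hence the mechanism is implementable. For the root, fix the leaf bids $\ell$; then the root receives $M$ iff $\sum_j r_j\ge c$, where $c=(\sum_{i\in M}\ell_i^p)^{1/p}$ does not depend on $r$. Equivalently, for fixed $\ell$ the mechanism selects, among the two available outcomes, the one maximizing (the root's value under that outcome) plus a constant depending only on the outcome; this is an affine maximizer over two fixed outcomes and is therefore truthful for the root with the associated VCG-type payment. For a leaf $i$, fix $r$ and $\ell_{-i}$; leaf $i$ then receives its task iff $\ell_i^p>r(M)^p-\sum_{j\ne i}\ell_j^p$, i.e. iff $\ell_i$ exceeds a threshold independent of $\ell_i$, so leaf $i$'s allocation is monotone and truthfully implementable by a threshold payment. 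Alternatively one can verify the (maximization analogue of the) weak-monotonicity condition for each agent directly and invoke the sufficiency of WMON on convex domains recalled in Section~\ref{sec:preliminaries}, since the type spaces $\mathbb R_{\ge0}^m$ and $\mathbb R_{\ge0}$ are convex.

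For the approximation ratio I would prove the stronger statement $OPT\le 2^{1/p}\,ALG$ against an arbitrary allocation. Let $X$ be any allocation and $T$ the set of tasks it assigns to the root; its $L^p$-value is $(r(T)^p+\sum_{i\notin T}\ell_i^p)^{1/p}$. All entries being nonnegative, $r(T)\le r(M)\le ALG$ and $\sum_{i\notin T}\ell_i^p\le\sum_{i\in M}\ell_i^p\le ALG^p$, so the $L^p$-value of $X$ is at most $(ALG^p+ALG^p)^{1/p}=2^{1/p}ALG$. Taking $X$ to be an optimal allocation gives $OPT\le 2^{1/p}ALG$, which is the claimed bound.

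I do not anticipate a genuine obstacle. The only point requiring care is the truthfulness argument for the root: although the root is multiparameter, for any fixed leaf bids the mechanism collapses to an affine maximizer over just two outcomes, which is what makes the relevant monotonicity hold; the approximation bound is then the one-line observation that $ALG$ simultaneously dominates both $r(M)$ and $(\sum_{i\in M}\ell_i^p)^{1/p}$.
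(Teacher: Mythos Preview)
Your proposal is correct and, for the approximation bound, essentially identical to the paper's argument: both observe that $ALG=\max\{r(M),(\sum_{i\in M}\ell_i^p)^{1/p}\}$ and then bound $OPT$ by noting $r(S^*)^p\le r(M)^p\le ALG^p$ and $\sum_{i\notin S^*}\ell_i^p\le\sum_{i\in M}\ell_i^p\le ALG^p$, giving $OPT\le 2^{1/p}ALG$. The paper's proof in fact omits any justification of truthfulness, so your explicit threshold/WMON argument for the root and the leaves is a welcome addition rather than a deviation.
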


\begin{proof}
  Let $S^*$ be the subset assigned to the root in the optimal allocation, and $OPT$
  be the optimal makespan, and $ALG$ be the makespan achieved by the All-or-Nothing
  Mechanism. Then we have

\begin{align*}
  OPT & = \bigg(r(S^*)^p+\sum_{i\not \in  S^*} \ell^p_i\bigg)^{1/p}
        \leq \bigg(2\max\bigg\{r(S^*)^p,\sum_{i\not\in S^*} \ell^p_i\bigg\}\bigg)^{1/p} 
        \leq \bigg(2\max\bigg\{r(M)^p,\sum_{i\in M} \ell^p_i\bigg\}\bigg)^{1/p}
        = 2^{1/p}ALG. 
\end{align*}
\end{proof}

If we use the Hybrid Mechanism for $1\leq p\leq 2$ and the All-or-Nothing Mechanism
for $p>2$, we get a mechanism with approximation ratio
$\min\{2^{1/p},2^{(p-1)/p}\}\leq \sqrt{2}$. We don't know whether this bound is
tight, but below we show a weaker lower bound for the case of $p=2$. 

\begin{theorem} \label{thm:lower-bound-stars-lp-max} For $p=2$,
  there is no deterministic mechanism for stars that can achieve an
  approximation ratio better than $1.05$.
\end{theorem}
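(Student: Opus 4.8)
The plan is to transplant the mechanism of the proof of Theorem~\ref{thm:lower-bound-stars} onto a single star with two leaves ($n=3$ players, $m=2$ tasks), together with one monotonicity-preserving ``shift'' of it. First I will record the maximization analogue of Lemma~\ref{lemma:tool}: if the root is assigned a set $S$ at a profile $t$ and $t'$ changes only the root's bids so that $r'_j>r_j$ for $j\in S$ and $r'_j<r_j$ for $j\notin S$, then the root is still assigned $S$ at $t'$. This is proved exactly like the scheduling version from weak monotonicity of truthful mechanisms, the sole difference being that the relevant inequality is reversed because in the auction setting each player maximizes value minus payment; for a star this also pins down the whole allocation, since the complement of the root's set goes to the corresponding leaves.

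Next I will fix a base instance $t^{A}$ with root values $r=(a,b)$ and leaf values $\ell=(p,q)$, and choose the four numbers (together with $c=1.05$) so that (i) the optimal $L^2$-value of $t^A$ is attained by the allocation that gives task $2$ to the root and task $1$ to its leaf, namely $\sqrt{b^2+p^2}$, and (ii) each of the three remaining allocations --- both tasks to the root (value $a+b$), both to the leaves (value $\sqrt{p^2+q^2}$), and task $1$ to the root (value $\sqrt{a^2+q^2}$) --- has value at most $\sqrt{b^2+p^2}/c$. A convenient choice is $p=1$, $a=\sqrt{c^2-1}$, $b=1$, and $q$ arbitrarily small; then the only inequality that is close to binding is the ``both to the root'' one, which evaluates to $\sqrt2/(1+\sqrt{c^2-1})>1.05$. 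It follows that a mechanism with approximation ratio strictly below $1.05$ must, on $t^A$, hand task $2$ to the root.

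Finally I will apply the shift: let $t'$ keep $\ell=(p,q)$ and replace the root's bids by $(a-\varepsilon,\,b')$ with $b'=p/\sqrt{c^2-1}>b$. Since $b'>b$ on the coordinate in $S=\{2\}$ and $a-\varepsilon<a$ on the coordinate outside $S$, the lemma keeps task $2$ with the root, so the mechanism attains only $\sqrt{b'^2+p^2}=p\,c/\sqrt{c^2-1}$ on $t'$, while giving both tasks to the root yields $(a-\varepsilon)+b'$; the choice $a=\sqrt{c^2-1}$ makes $a+b'=p\,c^2/\sqrt{c^2-1}=c\sqrt{b'^2+p^2}$, so letting $\varepsilon\to 0$ forces the optimum of $t'$ to be at least $c$ times the mechanism's value, contradicting ratio $<1.05$ and proving the theorem. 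The main obstacle is the second step: the constraints conflict --- making ``both to the root'' far from optimal on $t^A$ wants small root values, while making the shift bite wants $a$ (and then, to keep the optimum on the mixed allocation, $b$) relatively large --- so the four parameters must be balanced carefully, and $1.05$ is what this balance yields for a two-leaf star; a larger constant would presumably require more leaves or a longer chain of shifts.
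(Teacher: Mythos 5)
Your proof is correct and follows essentially the same route as the paper's: a two-task star on which any mechanism with ratio below $1.05$ is forced into the mixed allocation, followed by a monotone perturbation of the root's bids (via the maximization analogue of Lemma~\ref{lemma:tool}, which the paper also uses implicitly) under which the all-to-root allocation beats the mechanism's outcome by a factor tending to $1.05$. The only difference is numerical: the paper uses the concrete values $r=(1,1/3)$, $\ell=(0,1)$ shifted to $(3,1/3-\epsilon)$, yielding $\min\{3\sqrt{2}/4,\sqrt{10}/3\}\approx 1.054$, whereas your parametrization is tuned to approach exactly $1.05$ in the limit.
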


\begin{proof}
  For the sake of exposition, we use simple values. Optimizing them could lead to a
  slightly higher ratio.

  Consider an instance with two tasks, where the root has costs $r_1=1, r_2=1/3$, and
  the leaves have costs $\ell_1=0,\ell_2=1$.  Observe that in this instance, the
  optimum is to assign the first task to the root and the second to the leaf, for a
  total cost of $\sqrt{2}$. If the mechanism assigns both tasks to the root, its cost
  is $4/3$ and the ratio is higher than $1.05$.

  Therefore it must be the case that the root takes only the first task.  In that
  case, consider the instance produced by setting $r'_1=3,\, r'_2=1/3-\epsilon$. By
  applying the weak monotonicity lemma (Lemma~\ref{lemma:tool}), the allocation
  remains the same, with a cost of $\sqrt{10}$. The optimum is $10/3 - \epsilon$,
  which yields a ratio that is higher than $1.05$, when $\epsilon$ goes to 0.
  \end{proof}

\section{Hybrid Mechanisms}
\label{sec:hybrid}
Here we provide the general definitions related to Hybrid Mechanisms,
and show necessary and sufficient conditions for truthfulness on stars
and hyper-stars. We emphasize that this is a
multi-dimensional mechanism design setting. Each leaf $j$ has a single
dimensional valuation, given by the scalar $\ell_j$ but a root has
multi-dimensional preferences, given by the vector of
values. %

For the sake of convenience, we call non-decreasing real functions
\emph{increasing}, and non-increasing functions \emph{decreasing}. We
say \emph{strictly increasing/decreasing} if we want to emphasize
strict monotonicity.

It is known, that an allocation rule can be equipped by a truthful
payment scheme iff it is \emph{weakly monotone} (definition~\ref{def:wmon}). The next
two propositions give a characterization of the weak monotonicity
property in our case, for the leaf-players, and for the root player (in case of stars),
respectively:

\begin{proposition}
\label{prop:leafwmon} An allocation rule is weakly monotone for a leaf-player $i,$  iff for every $r$ and every $\ell_{-i},$ whenever leaf-player $i$ gets task $i$ with bid $\ell_i,$ then he also gets the task with every smaller bid $\ell_i'<\ell_i.$ 
\end{proposition}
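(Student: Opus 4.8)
The plan is to prove the two directions of the "iff" by directly unpacking the definition of weak monotonicity (Definition~\ref{def:wmon}) for a leaf-player $i$, using the fact that leaf $i$ has only two possible outcomes: either it gets task $i$ (incurring cost $\ell_i$) or it does not (incurring cost $0$). The key observation is that for leaf-player $i$, the valuation is single-dimensional: its type is the scalar $\ell_i$, and the only relevant quantity is whether $i \in X_i$. So the WMON inequality $t_i(X_i) - t_i(X_i') \le t_i'(X_i) - t_i'(X_i')$ becomes, with $t_i = \ell_i$ and $t_i' = \ell_i'$, a comparison between $\ell_i \cdot [\text{$i$ gets task at bid }\ell_i] - \ell_i\cdot[\cdots]$ and the analogous expression at $\ell_i'$.

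First I would fix $r$ and $\ell_{-i}$ and abbreviate by $g(\ell_i) \in \{0,1\}$ the indicator that leaf-player $i$ is allocated task $i$ when it bids $\ell_i$. Then $t_i(X_i(\ell_i)) = \ell_i \, g(\ell_i)$ and $t_i(X_i(\ell_i')) = \ell_i \, g(\ell_i')$, and similarly with $\ell_i'$ in front. The WMON condition for this player, instantiated at the two bids $\ell_i$ and $\ell_i'$, reads
\begin{align*}
\ell_i\, g(\ell_i) - \ell_i\, g(\ell_i') \;\le\; \ell_i'\, g(\ell_i) - \ell_i'\, g(\ell_i'),
\end{align*}
which rearranges to $(\ell_i - \ell_i')\,(g(\ell_i) - g(\ell_i')) \le 0$. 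This single inequality, required to hold for all pairs $\ell_i, \ell_i'$, is exactly the statement that $g$ is a non-increasing function of $\ell_i$, i.e. a smaller bid never loses the task: if $\ell_i' < \ell_i$ and $g(\ell_i) = 1$ then $g(\ell_i') = 1$. That is precisely the condition in the proposition.

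For the forward direction I would argue: if the rule is WMON for leaf $i$, then the displayed inequality holds for every choice of the two bids, hence $(\ell_i - \ell_i')(g(\ell_i) - g(\ell_i')) \le 0$; taking $\ell_i' < \ell_i$ with $g(\ell_i)=1$ forces $g(\ell_i') = 1$, giving monotonicity of allocation. For the converse, if for all $r, \ell_{-i}$ the function $g$ is monotone in the stated sense, then for any two bids $\ell_i > \ell_i'$ we have $g(\ell_i) \le g(\ell_i')$, so $g(\ell_i) - g(\ell_i') \le 0$ while $\ell_i - \ell_i' > 0$, hence the product is $\le 0$; the symmetric case and the equal-bids case are trivial, and this is exactly the WMON inequality. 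I do not expect a genuine obstacle here — the only thing to be careful about is to state the WMON condition of Definition~\ref{def:wmon} for the \emph{single} player $i$ while holding $t_{-i} = (r, \ell_{-i})$ fixed, and to note that the allocation $X_j$ of players $j \ne i$ is irrelevant to this player's WMON condition since it does not enter $t_i$; the whole content reduces to the two-valued nature of leaf $i$'s outcome.
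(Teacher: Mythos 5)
Your proof is correct and is exactly the standard single-parameter argument that the paper implicitly relies on (the proposition is stated without proof there): restricting WMON to leaf-player $i$, whose only relevant outcome is the 0/1 indicator of receiving task $i$, reduces the inequality of Definition~\ref{def:wmon} to $(\ell_i-\ell_i')(g(\ell_i)-g(\ell_i'))\le 0$, i.e.\ monotonicity of the allocation in $\ell_i$. Nothing is missing.
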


\begin{proposition}
\label{prop:rootwmon} An allocation rule on stars is weakly monotone for the root player if and only if for every fixed bid vector $\ell$ of the other players, and every  $T\subseteq M$ a constant  $g_T(\ell)$ (i.e., independent of $r$) exists, such that for every $r$ the root player is allocated a  set  $S\in \arg\min_T\,\{r(T)+g_{T}(\ell)\}.$ 

The canonical choice for trutful payments to the $r$-player is then  $P^0_S(\ell)=g_\emptyset(\ell)-g_S(\ell),$ and all other truthful payments can be obtained by an additive shift by an arbitrary $c(\ell).$  
\end{proposition}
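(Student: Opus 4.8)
The plan is to prove the two implications of the equivalence separately and then read off the payment formula. Throughout, fix the leaves' bid vector $\ell$ and regard the root as a single agent whose type is $r\in\mathbb{R}_{\geq 0}^m$ and whose valuation for receiving a set $T\subseteq M$ is the linear function $v_T(r)=-r(T)$; by domain locality the complementary tasks go one-per-leaf, so the root's outcome is fully described by the set $T$, and there are only $2^m$ possible outcomes. For a bid $r$, write $S(r)$ for the set the root receives at $(r,\ell)$.

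\textbf{The ``if'' direction.} Suppose constants $g_T(\ell)$ (allowing $+\infty$) exist with the stated minimality property. For bids $r,r'$ of the root put $S=S(r)$, $S'=S(r')$. Then $r(S)+g_S(\ell)\leq r(S')+g_{S'}(\ell)$ and $r'(S')+g_{S'}(\ell)\leq r'(S)+g_S(\ell)$; adding these and cancelling the ($finite$) $g$-terms gives $r(S)+r'(S')\leq r(S')+r'(S)$, i.e.\ $r(S)-r(S')\leq r'(S)-r'(S')$, which is exactly weak monotonicity (Definition~\ref{def:wmon}) for the root. This direction needs nothing about the structure of the domain.

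\textbf{The ``only if'' direction.} Suppose the rule is weakly monotone for the root. Restricted to changing only $r$, this is single-agent weak monotonicity on the convex domain $\mathbb{R}_{\geq 0}^m$ with the linear valuations $v_T$, and in this setting weak monotonicity upgrades to cycle monotonicity, so a truthful payment scheme $P^0(\cdot,\ell)$ for the root exists (this is the convex-domain result behind \cite{SY05}; alternatively one builds node potentials on the outcome graph by a shortest-path argument, the only nontrivial input being that for linear valuations on a convex domain nonnegativity of all $2$-cycles forbids negative cycles of any length). Such a payment is then constant on each region $R_T(\ell)=\{r\geq 0 : \text{the root gets } T\}$: if the root gets the same set $T$ at $r$ and at $r'$, truthfulness of $r$ against $r'$ and of $r'$ against $r$ force $P^0(r,\ell)=P^0(r',\ell)$. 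Call this common value $P^0_T(\ell)$ and set $g_T(\ell):=-P^0_T(\ell)$ for every $T$ in the range of the allocation and $g_T(\ell):=+\infty$ for every $T$ that is never allocated. Truthfulness of the root now reads $P^0_{S(r)}(\ell)-r(S(r))\geq P^0_{S(\hat r)}(\ell)-r(S(\hat r))$ for every report $\hat r$, i.e.\ $r(S(r))+g_{S(r)}(\ell)\leq r(S(\hat r))+g_{S(\hat r)}(\ell)$; since $S(\hat r)$ runs over all allocated sets as $\hat r$ varies and the remaining $T$ carry $g_T(\ell)=+\infty$, this is precisely $S(r)\in\argmin_{T\subseteq M}\{r(T)+g_T(\ell)\}$.

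\textbf{The payment formula and the main obstacle.} Truthful payments are pinned down by the allocation rule only up to the additive freedom in $P^0$; normalizing so that the root is paid $0$ when it receives nothing selects $P^0_T(\ell)=g_\emptyset(\ell)-g_T(\ell)$, and with these payments the root's utility for reporting $\hat r$ equals $g_\emptyset(\ell)-\big(r(S(\hat r))+g_{S(\hat r)}(\ell)\big)$, maximized at $\hat r=r$ by the $\argmin$ property. That every other truthful payment scheme for the same allocation differs from this one by a function $c(\ell)$ of the root's environment alone is revenue equivalence, which holds since the type space $\mathbb{R}_{\geq 0}^m$ is path-connected. I expect the ``only if'' direction to be the real work: the delicate points are the upgrade from $2$-cycle to full cycle monotonicity (where convexity of the domain and linearity of the valuations are essential) and the bookkeeping for subsets the rule never outputs, which is handled by permitting $g_T(\ell)=+\infty$.
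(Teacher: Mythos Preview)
Your proof is correct. The paper actually states this proposition without proof, treating it as a known characterization (it is essentially the single-agent representation of truthful mechanisms for additive costs, and the one nontrivial ingredient---that weak monotonicity on a convex domain suffices for implementability---is precisely the result of \cite{SY05} already cited in the preliminaries). Your argument supplies all the steps the paper omits: the ``if'' direction by a two-line calculation, the ``only if'' direction by invoking Saks--Yu to obtain truthful payments, observing they are constant on each allocation region, and then setting $g_T(\ell)=-P^0_T(\ell)$ (with the clean device of $g_T(\ell)=+\infty$ for sets outside the range). The revenue-equivalence argument for the payment formula is likewise standard and correct.
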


Since the $r$-player (player $0$) wants to maximize his profit, and the costs for the tasks are nonnegative, it will be convenient to assume w.l.o.g. that for every fixed  $\ell$ the payments $P^0_S$ correspond to an increasing set-function of $S,$\footnote{We call a setfunction $P$ \emph{increasing}, if $P(S')\leq P(S)$  whenever $S'\subset S,$ and \emph{strictly increasing} if the inequality is strict.} because a set of tasks with higher cost and less payments can not  be allocated to him by a truthful mechanism.\footnote{See also the \emph{virtual payments} in \cite{CKK20}.}

Motivated by Proposition~\ref{prop:rootwmon} we restrict our search for truthful mechanisms on star graphs as follows:

\begin{definition}\label{def:star}[Hybrid Mechanism]
 Assume that an $m$-variate function $g_T:\mathbb{R}^{m}\rightarrow \mathbb{R}$ is given for every $T\subseteq M,$ so that for every fixed vector $\ell\geq 0$ the values $\{g_T(\ell)\}_{T\subseteq M}$ correspond to a decreasing setfunction of $T.$ 
 For any input $(r,\ell),$ a \emph{Hybrid Mechanism (for the functions $\{g_T\}_{T\subseteq M}$ )} allocates a set $S$ to the root player such that 
 
$$S\in \arg\min_T\,\{r(T)+g_{T}(\ell)\};$$
if there are more than one such sets $S,$ the mechanism breaks ties according to the lexikographic order over all subsets of $M.$ 
The items in $M\setminus S$ are assiged to the leaves.

\end{definition}

The more general definition for hyperstars is Definition~\ref{def:hybrid}. For the sake of simplicity, the next two lemmas are formulated for star graphs. In the proof of the subsequent main lemma (of Lemma~\ref{lem:charstar}) we discuss the necessary changes for hyperstars.

Consider a Hybrid Mechanism on a star. For any $i\in M$ fix all bids in the input except for $r_i,$ i.e., fix the vectors $r_{-i}$ and $\ell.$ 

\begin{definition}\label{def:critvalue} The following  function defines the so called \emph{critical value} for the bid $r_i$:

$$\psi_i=\psi_i[r_{-i},\ell]=\min_{T:i\notin T}\{r(T)+g_T(\ell)\}-\min_{T:i\in T}\{r(T\setminus \{i\})+g_T(\ell)\}$$
We omit the argument $r_{-i},\ell$ whenever they are obvious from the context.

\end{definition}

\begin{lemma}\label{lem:psiinc} $\psi_i[r_{-i},\ell]\geq 0$ for every $r_{-i},\ell.$
\end{lemma}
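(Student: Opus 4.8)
The plan is to show directly that the critical value $\psi_i$ is non-negative by rewriting it in a form where the claim becomes transparent. Recall that
$$\psi_i=\min_{T:i\notin T}\{r(T)+g_T(\ell)\}-\min_{T:i\in T}\{r(T\setminus\{i\})+g_T(\ell)\},$$
so it suffices to show that for every set $U$ with $i\in U$, the quantity $\min_{T:i\notin T}\{r(T)+g_T(\ell)\}$ is at least $r(U\setminus\{i\})+g_U(\ell)$; equivalently, that there is some witness set $T$ with $i\notin T$ attaining a value no larger than the value of the minimizer on the $i\in T$ side. The natural choice is to take $T=U\setminus\{i\}$: this set does not contain $i$, and $r(T)=r(U\setminus\{i\})$, so the two ``$r$'' contributions match exactly. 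What remains is to compare $g_{U\setminus\{i\}}(\ell)$ with $g_U(\ell)$.

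The key step is the monotonicity hypothesis on the functions $g_T$: by Definition~\ref{def:star}, for every fixed $\ell\ge 0$ the family $\{g_T(\ell)\}_{T\subseteq M}$ is a \emph{decreasing} set-function of $T$, i.e.\ $g_{T'}(\ell)\ge g_T(\ell)$ whenever $T'\subseteq T$. Applying this with $T'=U\setminus\{i\}\subseteq U=T$ gives $g_{U\setminus\{i\}}(\ell)\ge g_U(\ell)$. Therefore
$$r(U\setminus\{i\})+g_{U\setminus\{i\}}(\ell)\ \ge\ r(U\setminus\{i\})+g_U(\ell),$$
which shows that for every $U$ containing $i$, the left-hand side of $\psi_i$ (taken as a minimum over all $T$ with $i\notin T$, hence at most the value at $T=U\setminus\{i\}$) is bounded below by a term of the form appearing in the right-hand minimum. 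Taking the minimum over all such $U$ on the right then yields $\min_{T:i\notin T}\{r(T)+g_T(\ell)\}\ge \min_{T:i\in T}\{r(T\setminus\{i\})+g_T(\ell)\}$, i.e.\ $\psi_i\ge 0$.

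There is essentially no obstacle here: the statement is a one-line consequence of the defining property of Hybrid Mechanisms, namely that $g_T$ is a decreasing set-function of $T$. The only point that requires a moment of care is matching up the indexing correctly — in the ``$i\in T$'' minimum the $r$-contribution is $r(T\setminus\{i\})$, not $r(T)$, which is precisely what makes the substitution $T\mapsto T\setminus\{i\}$ pair the two sides cleanly. I would present the argument as the short chain of inequalities above, perhaps phrased as: fix $r_{-i},\ell$, let $U^\star$ be a minimizer on the $i\in T$ side, observe $U^\star\setminus\{i\}$ is a feasible competitor on the $i\notin T$ side with $r((U^\star\setminus\{i\}))=r(U^\star\setminus\{i\})$ and $g_{U^\star\setminus\{i\}}(\ell)\ge g_{U^\star}(\ell)$ by the decreasing property, and conclude $\psi_i\ge 0$. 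The lemma is then immediate, and it is exactly the statement needed to make the critical value $\psi_i$ a valid threshold for the root player's bid $r_i$ in the subsequent truthfulness characterization.
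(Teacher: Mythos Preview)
Your key inequality is correct and is exactly the one the paper uses: for $T=U\setminus\{i\}\subseteq U$, the decreasing-set-function property of $g$ gives $g_{U\setminus\{i\}}(\ell)\ge g_U(\ell)$, hence
\[
r(U\setminus\{i\})+g_{U\setminus\{i\}}(\ell)\ \ge\ r(U\setminus\{i\})+g_U(\ell).
\]
However, the way you pass from this pointwise inequality to $\psi_i\ge 0$ is logically backwards.

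You claim that for every $U$ with $i\in U$ the left minimum $\min_{T:i\notin T}\{r(T)+g_T(\ell)\}$ is bounded below by $r(U\setminus\{i\})+g_U(\ell)$. But your displayed inequality only shows that the value at the \emph{particular} competitor $T=U\setminus\{i\}$ is at least $r(U\setminus\{i\})+g_U(\ell)$; the minimum over all $T$ with $i\notin T$ can be strictly smaller. (Concretely: $M=\{1,2\}$, $i=1$, $r_2=0$, $g_\emptyset=10$, $g_{\{1\}}=5$, $g_{\{2\}}=g_{\{1,2\}}=0$; the left minimum is $0$, attained at $T=\{2\}$, yet for $U=\{1\}$ one has $r(\emptyset)+g_{\{1\}}=5$.) Your final restatement commits the same error: you take a minimizer $U^\star$ on the $i\in T$ side and exhibit a competitor $U^\star\setminus\{i\}$ on the $i\notin T$ side with \emph{larger} value --- this yields no lower bound on the left minimum.

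The fix is simply to start from the other side. Let $T^\star$ be a minimizer on the $i\notin T$ side and set $U=T^\star\cup\{i\}$. Then
\[
\min_{T:i\notin T}\{r(T)+g_T(\ell)\}=r(T^\star)+g_{T^\star}(\ell)\ \ge\ r(T^\star)+g_{T^\star\cup\{i\}}(\ell)\ \ge\ \min_{U:i\in U}\{r(U\setminus\{i\})+g_U(\ell)\},
\]
which is $\psi_i\ge 0$. This is precisely the paper's argument (indexed by $T$ rather than $U$, but via the same bijection $T\leftrightarrow T\cup\{i\}$).
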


\begin{proof} The proof follows from the fact that for every fixed $\ell$ the function $g_T(\ell)$  is a decreasing setfunction of $T.$ %
This implies that if $i\notin T,$ then $g_T(\ell)\geq g_{T\cup \{i\}}(\ell),$ and so
$r(T)+g_T(\ell)\geq r(T)+g_{T\cup \{i\}}(\ell).$ The same inequality must therefore hold for the minimium values 

$$\min_{T:i\notin T}\{r(T)+g_T(\ell)\}\geq \min_{T:i\notin T}\{r(T)+g_{T\cup \{i\}}(\ell)\}.$$ Observe that first and the second expressions in the definition of $\psi_i$ correspond precisely these two expressions,  which concludes the proof.
\end{proof}

We show next that $\psi_i$ is, indeed, a critical value function:

\begin{lemma}
\label{lem:critvalue}
Let $i\in M,$ and arbitrary nonnegative bid vectors $ \,r_{-i}$ and $\,\ell$ be fixed; then for every $r_i<\psi_i$ the root player receives task $i,$ and for every $r_i>\psi_i$ the leaf player with bid $\ell_i$ receives task $i.$
\end{lemma}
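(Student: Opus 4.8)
The plan is to verify directly from Definition~\ref{def:critvalue} that $\psi_i$ is the threshold at which the allocation of task $i$ switches from the root to the leaf. I would first record the structural fact that, for the fixed data $r_{-i}$ and $\ell$, the quantity $\min_T\{r(T)+g_T(\ell)\}$ over all $T\subseteq M$ can be split according to whether $i\in T$ or $i\notin T$. Writing $A(r_i) := \min_{T:\,i\notin T}\{r(T)+g_T(\ell)\}$ (which does not depend on $r_i$ at all, since no such $T$ contains $i$) and $B(r_i) := \min_{T:\,i\in T}\{r(T)+g_T(\ell)\} = r_i + \min_{T:\,i\in T}\{r(T\setminus\{i\})+g_T(\ell)\}$, we have that the overall minimum defining the Hybrid Mechanism equals $\min\{A, B(r_i)\}$. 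The constant $A$ is exactly the first term in the definition of $\psi_i$, and $B(r_i) = r_i + (A - \psi_i)$ by rearranging that definition. Hence the root is assigned an $i$-free set precisely when $A \le B(r_i)$, i.e. when $A \le r_i + A - \psi_i$, i.e. when $r_i \ge \psi_i$; and the root is assigned a set containing $i$ precisely when $B(r_i) \le A$, i.e. when $r_i \le \psi_i$.

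The second step is to translate ``root gets an $i$-free set'' versus ``root gets a set containing $i$'' into the statement about who receives task $i$. By Definition~\ref{def:star}, the items in $M\setminus S$ go to the leaves, so if the minimizing $S$ contains $i$ then the root receives task $i$, and if $S$ avoids $i$ then leaf player $i$ (with bid $\ell_i$) receives task $i$. Combining with the previous paragraph: for $r_i < \psi_i$ we get $B(r_i) < A$, so every minimizer $S$ must satisfy $i\in S$ (a set avoiding $i$ has value $\ge A > B(r_i) \ge \min$), hence the root gets task $i$; for $r_i > \psi_i$ we get $A < B(r_i)$, so every minimizer avoids $i$, hence leaf $i$ gets it. Note that for $r_i < \psi_i$ and $r_i > \psi_i$ the minimizer is unique in the relevant respect (membership of $i$), so the tie-breaking rule is irrelevant; at $r_i = \psi_i$ exactly there may be minimizers of both types and the lexicographic rule decides, but the lemma statement only concerns the strict inequalities, so we need not analyze the boundary.

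The only mild subtlety — and what I would flag as the place to be careful rather than a genuine obstacle — is ensuring that $\psi_i$ is finite and that the two minima in its definition are attained, so that the arithmetic rearrangement $B(r_i) = r_i + A - \psi_i$ is legitimate; this is immediate because $M$ is finite, so both minima range over finite (nonempty, since $\emptyset$ is $i$-free and $\{i\}$ contains $i$) collections of sets. Lemma~\ref{lem:psiinc} already guarantees $\psi_i \ge 0$, so the range $r_i < \psi_i$ is possibly empty but never ill-defined, and the statement holds vacuously in that case. I would also remark that the same argument goes through for hyperstars with $r(T)$ replaced by $\min_{x^T}\sum_i \lambda_i r_i\cdot x^T_i$, splitting on whether task $i$ is served by some root player or by its leaf, which is presumably where the proof of the subsequent Lemma~\ref{lem:charstar} picks up.
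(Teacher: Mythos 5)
Your proof is correct and follows essentially the same route as the paper: both rearrange the definition of $\psi_i$ to show that $r_i<\psi_i$ forces $\min_{T:i\in T}\{r(T)+g_T(\ell)\}<\min_{T:i\notin T}\{r(T)+g_T(\ell)\}$ (your $B(r_i)<A$) and vice versa, so every minimizer contains or excludes $i$ accordingly. Your explicit $A$, $B(r_i)$ notation and the remarks on finiteness and tie-breaking are just a more detailed write-up of the same argument.
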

\begin{proof} If $r_i<\psi_i,$ then 
$$r_i+\min_{T:i\in T}\{r(T\setminus \{i\})+g_T(\ell)\}<\min_{T:i\notin T}\{r(T)+g_T(\ell)\}$$
or equivalently
$$\min_{T:i\in T}\{r(T)+g_T(\ell)\}<\min_{T:i\notin T}\{r(T)+g_T(\ell)\}.$$
Therefore, no set $S$ for which $i\notin S,$  can be in $\arg\min_T \{r(T)+g_T(\ell)\},$
and since the Hybrid Mechanism minimizes the expression $r(T)+g_T(\ell),$ a set $S\subseteq M$ with $i\in S$ will be allocated to the root player.
On the other hand, if  $r_i>\psi_i,$ then all inequalities in the above argument get flipped, and a set with $i\notin S$ will be given to the root player. 
\end{proof}

The following lemma provides various necessary or sufficient conditions for the truthfulness of Hybrid Mechanisms in stars in terms of monotonicity of the critical value function $\psi_i$ as a function of $\ell_i.$ For hyperstars we (can) only give sufficient conditions. %

\begin{lemma}
\label{lem:charstar}
For the truthfulness of the Hybrid Mechanism with given $\{g_T\}_{T\subseteq M}$ functions (i.e., for a truthful payment scheme to exist), 

\begin{itemize}
\item[(a)] in stars it is necessary that for every $i\in M$ and every fixed $(r_{-i},\ell_{-i})$ the function $\psi_i(\ell_i)=\psi_i[r_{-i},\ell_{-i}](\ell_i)$ is an increasing function of $\ell_i;$

\item[(b)] in (hyper-)stars it is sufficient that for every $i\in M$ and every fixed $(r_{-i},\ell_{-i})$ the function $\psi_i(\ell_i)=\psi_i[r_{-i},\ell_{-i}](\ell_i)$ is a \emph{strictly} increasing function of $\ell_i;$

\item[(c)] in (hyper-)stars it is sufficient that for every $i$ and $\ell_{-i}$ the  $g_T(\ell_i,\ell_{-i})$ is an increasing function of $\ell_i$ whenever $i\notin T,$ and decreasing function of $\ell_i$  whenever $i\in T.$ 
\end{itemize}
\end{lemma}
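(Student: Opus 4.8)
The plan is to reduce all three claims to the equivalence between truthfulness and weak monotonicity (Definition~\ref{def:wmon}), which holds here by \cite{BCR+06,SY05} since the type spaces $\mathbb R_{\ge 0}^{m}$ (root) and $\mathbb R_{\ge 0}$ (each leaf) are convex, and which is checked player by player. Weak monotonicity of the Hybrid Mechanism for the root player(s) comes essentially for free: on a star it is exactly the $\arg\min$ form characterized by Proposition~\ref{prop:rootwmon}; on a hyperstar, the restriction of the mechanism to the $k$ root players minimizes $\sum_{h=1}^{k}\lambda_h r_h\cdot x_h+g_{T(x)}(\ell)$, which is a weighted affine minimizer (affine in each $r_h$ with nonnegative weight $\lambda_h$, plus a ``boost'' $g_{T(x)}(\ell)$ independent of the $r_h$), hence weakly monotone for every root. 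So all of (a)--(c) is really about weak monotonicity for the leaves, which by Proposition~\ref{prop:leafwmon} says: if leaf $i$ gets task $i$ at bid $\ell_i$, it still gets it at every $\ell_i'<\ell_i$. For the hyperstar versions I would first record that $\min_{x^{T}}\sum_{h}\lambda_h r_h\cdot x_h^{T}=\sum_{j\in T}\min_h\lambda_h r_{hj}$, so, writing $\rho_j:=\min_h\lambda_h r_{hj}$, the set given to the roots is precisely that of the star Hybrid Mechanism on the vector $\rho$; then Definition~\ref{def:critvalue} and Lemmas~\ref{lem:psiinc}--\ref{lem:critvalue} apply unchanged with $r\rightsquigarrow\rho$ and $r_i\rightsquigarrow\rho_i$.

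For (a) I would prove the contrapositive. If $\psi_i(\cdot)=\psi_i[r_{-i},\ell_{-i}](\cdot)$ is not increasing, pick $\ell_i'<\ell_i$ with $\psi_i(\ell_i')>\psi_i(\ell_i)$; since $\psi_i\ge 0$ everywhere (Lemma~\ref{lem:psiinc}), the interval $(\psi_i(\ell_i),\psi_i(\ell_i'))$ contains a legal bid $r_i$. By Lemma~\ref{lem:critvalue}, at $\ell_i$ we have $r_i>\psi_i(\ell_i)$ so leaf $i$ gets task $i$, while at $\ell_i'$ we have $r_i<\psi_i(\ell_i')$ so the root gets task $i$ and leaf $i$ does not --- contradicting Proposition~\ref{prop:leafwmon}, hence the mechanism is not truthful. (This construction needs to place $r_i$ freely, which is why (a) is stated only for stars: in a hyperstar $\rho_i$ can be pinned down, e.g.\ when all $\lambda_h=0$.)

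For (b) I would verify weak monotonicity for the leaves directly. If leaf $i$ gets task $i$ at bid $\ell_i$ then $r_i\ge\psi_i(\ell_i)$, since $r_i<\psi_i(\ell_i)$ would give task $i$ to the root by Lemma~\ref{lem:critvalue}. For any $\ell_i'<\ell_i$, strict monotonicity gives $\psi_i(\ell_i')<\psi_i(\ell_i)\le r_i$, so $r_i>\psi_i(\ell_i')$ and Lemma~\ref{lem:critvalue} again gives task $i$ to leaf $i$. Hence the leaf condition of Proposition~\ref{prop:leafwmon} holds, and together with root weak monotonicity the whole allocation is weakly monotone, hence truthful; the hyperstar case is identical after the $\rho$-substitution.

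For (c) I would argue directly, \emph{not} via (b): the sign conditions on the $g_T$'s only make $\psi_i$ increasing, not strictly so, and closing that gap means confronting the deterministic (lexicographic) tie-breaking head-on --- this is the main obstacle of the lemma. Write $S(x)$ for the lexicographically smallest minimizer of $T\mapsto r(T)+g_T(x,\ell_{-i})$ (with $r$, or $\rho$, and $\ell_{-i}$ fixed) and $v_S(x):=r(S)+g_S(x,\ell_{-i})$; leaf $i$ gets task $i$ at $\ell_i$ iff $i\notin S(\ell_i)$. Set $S:=S(\ell_i)$ with $i\notin S$, fix $\ell_i'<\ell_i$, and suppose for contradiction that $i\in S':=S(\ell_i')$. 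Combining (i) $g_S$ increasing in $\ell_i$ (as $i\notin S$), (ii) $g_{S'}$ decreasing in $\ell_i$ (as $i\in S'$), (iii) optimality of $S'$ at $\ell_i'$, and (iv) optimality of $S$ at $\ell_i$ yields the cycle $v_{S'}(\ell_i')\le v_S(\ell_i')\le v_S(\ell_i)\le v_{S'}(\ell_i)\le v_{S'}(\ell_i')$, so all four inequalities are equalities. Then $v_S(\ell_i)=v_{S'}(\ell_i)$ makes $S'$ a minimizer at $\ell_i$, so $S\le_{\mathrm{lex}}S'$, and $v_{S'}(\ell_i')=v_S(\ell_i')$ makes $S$ a minimizer at $\ell_i'$, so $S'\le_{\mathrm{lex}}S$; hence $S=S'$, contradicting $i\notin S$ and $i\in S'$. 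So the leaf condition holds and the mechanism is truthful; the hyperstar case again only needs $r\rightsquigarrow\rho$ for the leaves and the affine-minimizer remark for the roots.
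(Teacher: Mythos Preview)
Your proof is correct and follows essentially the same route as the paper: reduce to weak monotonicity, note that the root player(s) are handled by the $\arg\min$/affine-minimizer structure, and then verify the leaf condition of Proposition~\ref{prop:leafwmon} via Lemma~\ref{lem:critvalue} (with the $\rho$-substitution for hyperstars, which is exactly what the paper does as well). The only notable difference is in part~(c): the paper first deduces $\psi_i(\ell_i')=\psi_i(\ell_i)$ and then argues via an auxiliary $T^*$ (the highest-priority minimizer containing $i$) and a hypothetical $T^{**}$, whereas your four-term cycle $v_{S'}(\ell_i')\le v_S(\ell_i')\le v_S(\ell_i)\le v_{S'}(\ell_i)\le v_{S'}(\ell_i')$ compares the two actual allocations directly and is a bit cleaner, but the underlying idea---that the sign conditions force equal objective values at both $\ell_i$ and $\ell_i'$, after which the fixed tie-breaking yields a contradiction---is the same.
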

\begin{proof} We prove the lemma for stars first. By Proposition~\ref{prop:rootwmon}, an allocation rule is weakly monotone for the root player of a mechanism iff it is a minimizer with the additive values $g_T(\ell).$ Furthermore it is without loss of generality to admit only $g_T(\ell)$ so that for fixed $\ell$ the payments $P^0_T(\ell)=g_\emptyset(\ell)-g_T(\ell),$ make a (normalized) increasing setfunction of $T $~     \cite{CKK20}. This is the case iff the $g_T(\ell)$ is a decreasing setfunction of $T,$ altogether iff the mechanism is a Hybrid Mechanism by Definition~\ref{def:star}. The conditions (a)--(c) are necessary or sufficient for weak monotonicity for the \emph{leaf} players, as we show next.

For (a) we need to prove that the monotonicity of the $\psi_i$ functions is necessary. Assume that there exist$\,i, r_{-i},$ and $\ell_{-i}$ so that $\psi_i$ is not monotone increasing, i.e., there are $\ell'_i<\ell_i$ values so that  $\psi_i(\ell'_i)>\psi_i(\ell_i).$ We take an $r_i$ between these values, i.e., let $\psi_i(\ell_i)<r_i< \psi_i(\ell'_i).$ 
Now consider the leaf-player $i$ for the bids $(r_i,r_{-i}),$ and $\ell_{-i}$ of all other players. According to Lemma~\ref{lem:critvalue}, with bid $\ell_i$ this leaf-player gets task $i$ because $\psi_i(\ell_i)<r_i,$ but with the smaller bid $\ell_i'$ he does not get task $i$ because $r_i< \psi_i(\ell_i').$ Thus the mechansism cannot be truthful for the leaf-player $i$ by Proposition~\ref{prop:leafwmon}.

 Next we prove that (b) is sufficient for weak monotonicity of the allocation to an arbitrary leaf player $i.$ Assume  that a mechanism with strictly increasing $\psi_i$ functions is not weakly monotone, then by Proposition~\ref{prop:leafwmon} there exist $r, \ell_{-i},$  and  $\ell'_i<\ell_i$ so that player $i$ gets task $i$ with bid $\ell_i,$ but does not get it with bid $\ell_i'.$ Then, by Lemma~\ref{lem:critvalue} $r_i$ must be such that $\psi_i[r_{-i},\ell_i,\ell_{-i}]\leq r_i\leq \psi_i[r_{-i},\ell_i',\ell_{-i}].$ Since  $\psi_i$ is a strictly increasing function of $\ell_i,$ it must be the case that  $\psi_i(\ell_i)>\psi_i(\ell_i'),$ a contradiction.

 In order to show that (c) is sufficient, let again $i$ be  a leaf-player, and assume for contradiction that there exist $r, \ell_{-i},$  and  $\ell'_i<\ell_i''$ so that leaf player $i$ gets task $i$ with bid $\ell_i'',$ but does not get it with bid $\ell'_i.$
 Then $r_i$ must be such that $\psi_i[r_{-i},\ell_i'',\ell_{-i}]\leq r_i\leq \psi_i[r_{-i},\ell_i',\ell_{-i}].$ Since  $\psi_i$ is increasing in $\ell_i$ (by the assumptions of (c)), it must be the case that $\psi_i(\ell_i'')=\psi_i(\ell_i').$ Let $T^*$ be the set with highest priority among sets with $i\in T$ that minimize the expression $r(T)+g_T(\ell_i'').$ %

  Since the leaf player gets task $i$ for bid $\ell_i'',$ there must be a set $S$ with $i\not\in S$ and 
 so that $r(S)+g_{S}(\ell_i'')\leq r(T^*)+g_{T^*}(\ell_i'')$ (with higher priority than $T^*$ in case equality holds). Given that $g_S$ is increasing  and $g_{T^*}$ decreasing function of $\ell_i,$ this $S$ must beat $T^*$  for $\ell_i'$ as well. Assume that another $T^{**}$ would beat $S$ at $\ell_i',$ then this  $T^{**}$  would necessarily beat $S$ and $T^{*}$ also at $\ell_i'',$  a contradiction. This concludes the proof for stars.%
 
 We discuss the necessary changes  in the above proof for hyperstars. We need to prove weak-monotonicity for an arbitrary root-player, and for an arbitrary leaf-player in cases (b) and (c).
 For a root-player $h,$ weak monotonicity holds, if (but not only if) we replace $g_T(\ell)$ in Proposition~\ref{prop:rootwmon} by  $$\tilde g_T(\ell,r_{-h}):=\min_{R\subseteq M\setminus T}\{\min_{x^R}\sum_{j\leq k\, | \,j\neq h}\lambda_j r_j\cdot x^R_j  +g_{R\cup T}(\ell)\}.$$ Here, $r$ is not a vector but a $k\times m$ matrix, and the notation $\,r_{-h}\,$ refers to the bids of the root-players other than player $h.$  Note that  this expression corresponds again  to the minimum sum in the allocation rule that all players except for $h$ can have, given that player $h$ receives $T.$
As required, the $\tilde g_T(\ell,r_{-h})$ are increasing set-functions of $M\setminus T$ for fixed $\ell$ and $r_{-h}:$ assume that $M\setminus T'\subset M\setminus T,$ and let $R\subset M\setminus T$ provide the minimum for $\tilde g_T.$ Then $R\cap(M\setminus T')$ yields a not larger value over $M\setminus T'$ which, in turn, is an upper bound for $\tilde g_{T'}.$ (Here we exploit that $g$ is an increasing set-function.) 

For the leaf-players, in Definition~\ref{def:critvalue},
Lemmas~\ref{lem:psiinc} and \ref{lem:critvalue} we need to replace $r(T)$ by
$\min_{x^T}\sum_{j=1}^k\lambda_j r_j\cdot x^T_j$ for hyperstars. In general, this is
the smallest sum that the root players can achieve on the tasks in $T.$ If now
$\min_{1\leq j\leq k} \lambda_j\cdot r_{ji}$ (instead of $r_i$ in the single-root case) is below the critical value $\psi_i$
then the corresponding root player gets task $i,$ otherwise the leaf-player $k+i.$
The rest of the proof of (b) and (c) is analogous%
.
 \end{proof}

The following example shows that conditions  (b) and (c) are both \emph{not} necessary for the Hybrid Mechanism to be truthful.

\begin{example} Let $m=2,$ and consider the following functions:  $g_{\{1,2\}}=0,\,\, g_{\{1\}}=\ell_2+1/\ell_1,\,\,g_{\{2\}}=1,\,\,g_\emptyset=\ell_2+1/\ell_1+1,\,$ that is, we minimize over $$r_1+r_2,\quad r_1+\ell_2+1/\ell_1,\quad r_2+1,\quad \ell_2+1/\ell_1+1.\,$$
For every fixed $\ell,$ $g_T(\ell)$ is a decreasing  setfunction of $T$ (in fact, additive function of $M\setminus T$). The critical value functions are $\psi_1\equiv 1$ for every $t_2$ and every $\ell,$  and $\psi_2=\ell_2+1/\ell_1$ for every $t_1.$

Observe that  $\psi_1$ is not strictly increasing, and $g_\emptyset=1+\ell_2+1/\ell_1$ is not increasing in $\ell_1,$ although $1\notin \emptyset.$
Thus, this mechanism fulfils neither (b) nor (c).

Still, for example with a tie-breaking that prefers giving the tasks (independently) to the root player, it is truthful.
\end{example}

\begin{corollary}\label{cor:sufficient}
The Hybrid Mechanism for Graph Balancing, the Hybrid Max-Mechanism for Hypergraphs,  and the Hybrid $L^p$ Mechanism on stars are truthful.
\end{corollary}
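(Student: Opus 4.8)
The plan is to observe that each of the three mechanisms is a special case of the Hybrid Mechanism of Definition~\ref{def:star} (on stars) or Definition~\ref{def:hybrid} (on hyperstars), obtained by a concrete choice of the aggregator functions $\{g_T\}_{T\subseteq M}$, and then to verify that in every case these aggregators meet the sufficient condition~(c) of Lemma~\ref{lem:charstar}. Concretely, for the Hybrid Mechanism for Graph Balancing on stars (Definition~\ref{def:maxmech}) and for the Hybrid Max-Mechanism on hyperstars one takes $g_T(\ell)=\max_{j\notin T}\ell_j$ (reading the empty maximum as $0$), while for the Hybrid $L^p$ Mechanism on stars one takes $g_T(\ell)=\big(\sum_{j\notin T}\ell_j^p\big)^{1/p}$, which for $p\to\infty$ again reduces to $\max_{j\notin T}\ell_j$ and for $p=1$ to $\sum_{j\notin T}\ell_j$ (VCG). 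In each case the set given to the root is exactly $\arg\min_T\{r(T)+g_T(\ell)\}$ --- with $r(T)$ replaced by $\min_{x^T}\sum_i\lambda_i r_i\cdot x_i^T$ in the hyperstar version --- and ties are broken by the same fixed lexicographic order that Definitions~\ref{def:star} and~\ref{def:hybrid} require.

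Next I would check the two facts needed. First, for every fixed $\ell\geq 0$ the family $\{g_T(\ell)\}_{T\subseteq M}$ is a decreasing set function of $T$: enlarging $T$ deletes nonnegative summands ($\ell_j$ from the maximum, respectively $\ell_j^p$ from the $L^p$ sum), which can only decrease the value; hence all three are legitimate Hybrid Mechanisms in the sense of Definitions~\ref{def:star} and~\ref{def:hybrid}. Second, condition~(c) of Lemma~\ref{lem:charstar} holds: fixing a task $i$ and $\ell_{-i}$, if $i\notin T$ then $\ell_i$ enters $g_T$ inside the maximum (resp.\ inside the $L^p$ sum), so $g_T(\ell_i,\ell_{-i})$ is a (weakly) increasing function of $\ell_i$; if $i\in T$ then $g_T$ does not depend on $\ell_i$ at all, hence is trivially both increasing and decreasing in $\ell_i$. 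Since Lemma~\ref{lem:charstar}(c) states that this suffices for truthfulness on stars \emph{and} hyperstars, the corollary follows.

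The point to be careful about --- more a bookkeeping issue than a genuine obstacle --- is that we must invoke part~(c) of Lemma~\ref{lem:charstar} rather than part~(b): for a set $T$ containing $i$ the aggregator $g_T$ is constant in $\ell_i$, so the critical value $\psi_i$ is in general only weakly increasing in $\ell_i$, and the strict-monotonicity route of~(b) is unavailable. The only other thing to note is that the proof of~(c) inside Lemma~\ref{lem:charstar} uses a fixed priority order over subsets and, for hyperstars, the auxiliary reduction (via $\tilde g_T$) that secures root-player truthfulness; both are already supplied --- the first by the mechanisms' lexicographic tie-breaking, the second by that proof itself --- so no new work is required.
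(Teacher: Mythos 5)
Your proof is correct and follows essentially the same route as the paper: identify the concrete aggregator $g_T$ for each of the three mechanisms, check that $g_T(\ell)$ is a decreasing set function of $T$ so each is a legitimate Hybrid Mechanism, and then verify condition~(c) of Lemma~\ref{lem:charstar}, which the paper itself designates as the operative sufficient condition. The observation that ties are resolved by a fixed priority order and that the hyperstar case leans on the $\tilde g_T$ reduction inside the proof of Lemma~\ref{lem:charstar} are both correct and in line with the paper.

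One small inaccuracy in your side remark: you argue that part~(b) is unavailable because $g_T$ is constant in $\ell_i$ whenever $i\in T$. That is not the relevant issue. The sets $T\ni i$ enter only the \emph{subtracted} term in $\psi_i$, which is constant in $\ell_i$ regardless, so it neither helps nor hurts strict monotonicity. What actually determines whether $\psi_i$ is strictly or only weakly increasing is the behaviour of $g_T$ for sets $T\not\ni i$ inside the first minimum. For the max aggregator this can be flat in $\ell_i$ (when $\ell_i$ is not the current maximum), hence $\psi_i$ may be only weakly increasing and (b) indeed fails. But for the Hybrid $L^p$ Mechanism with $p<\infty$, $\big(\sum_{j\notin T}\ell_j^p\big)^{1/p}$ is strictly increasing in $\ell_i$, so $\psi_i$ is strictly increasing and (b) does hold --- the paper in fact notes that the $L^p$ Mechanism satisfies (b) as well as (c). This does not affect the validity of your argument, which invokes only~(c), but the stated reason for avoiding~(b) is a non sequitur and overstates its unavailability.
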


\begin{proof} The first statement follows from the fact that the Hybrid Mechanism for Graph Balancing fulfils (c).
Clearly, $g_T(\ell)=\max_{i \notin
  T}\{\ell_i\}=\max_{i \in
  M\setminus T}\{\ell_i\}$ is an increasing setfunction of the sets $M\setminus T,$
  and therefore a decreasing setfunction of the sets $T,$ for fixed $\ell.$
  For fixed $T,$  $\max_{i \notin
  T}\{\ell_i\}$ is an increasing function of $\ell_i$ for every $i\notin T,$ and it is independent of $\ell_i$ (constant function) if $i\in T.$
  
  Finally, it is easy to see that the Hybrid $L^p$ Mechanism of Section~\ref{sec:lp-norm} fulfils (b) as well as (c).
  \end{proof}

\bibliographystyle{plain} 
\bibliography{biblio}

\end{document}